\newcommand{\bargmin}{\mathop{\mathrm{arg\ min}}}
\newcommand{\bargmax}{\mathop{\mathrm{arg\ max}}}
\def\tr{{\mathop{\text{\rm Tr}}}}
\def\vec{{\mathop{\text{\rm Vec}}}}
\numberwithin{equation}{section}
\numberwithin{theorem}{section}
\numberwithin{corollary}{section}
\numberwithin{definition}{section}
\renewcommand{\baselinestretch}{2}
\newenvironment{assumptionp}[1]{
  
  \assumptionalt
}{\endassumptionalt}
\begin{document}

\title{\LARGE  A new non-parametric Kendall's tau for matrix-valued elliptical observations}

	\author{
	Yong He\footnotemark[1],~
	Yalin Wang\footnotemark[1]
,~Long Yu\footnotemark[2],~
	Wang Zhou\footnotemark[3],~
	Wen-Xin Zhou\footnotemark[4]
	}
\renewcommand{\thefootnote}{\fnsymbol{footnote}}
\footnotetext[1]{Institute for Financial Studies, Shandong University, China; e-mail:{\tt heyong@sdu.edu.cn}}
\footnotetext[2]{Shanghai University of Finance and Economics, China. e-mail:{\tt fduyulong@163.com}}
\footnotetext[3]{National University of Singapore, Singapore. e-mail:{\tt wangzhou@nus.edu.sg }}
\footnotetext[4]{University of California, San Diego, USA. e-mail:{\tt wez243@ucsd.edu}}

\maketitle

In this article, we first propose generalized row/column matrix Kendall's tau  for matrix-variate observations that are ubiquitous in areas such as finance and medical imaging. For a random matrix following a matrix-variate elliptically contoured distribution,  we show that the eigenspaces of
the proposed row/column matrix Kendall's tau  coincide with those of the row/column scatter matrix respectively, with the same descending order of the eigenvalues. We perform eigenvalue decomposition to the generalized row/column matrix Kendall's tau  for recovering the loading spaces of the matrix factor model. We also propose to estimate the pair of the factor numbers by exploiting the eigenvalue-ratios of the  row/column matrix Kendall's tau.  Theoretically,  we derive the convergence rates of the estimators for loading spaces, factor scores and common components, and prove the consistency of the estimators for the factor numbers without any moment constraints on the idiosyncratic errors.
Thorough simulation studies are conducted to show  the higher degree of robustness of the proposed
estimators over the existing ones. Analysis of a financial dataset of asset returns and a medical imaging dataset associated with COVID-19 illustrate
the empirical
usefulness of the proposed method.

\vspace{0.5em}

\textbf{Keyword:}  Elliptical distribution; Matrix Factor Model;  Kendall's tau;  Principle Component Analysis.

\section{Introduction}
The spatial/multivariate Kendall's tau, as a powerful nonparametric/robust tool for analyzing random vectors, is first introduced in \cite{choi1998multivariate} for testing independence, and henceforth has been used for estimating covariance matrices, principal components and factor models in both low and high-dimensions.
However, it remains unclear how to extend such a tool to analyze matrix-variate data, which are now ubiquitous in areas   such as finance, macroeconomics and biology. The naive vectorization technique ignores the intrinsic matrix structure, and thus is less efficient for analyzing matrix-variate data. In the current work, we introduce a new concept, named matrix Kendall's tau, for analyzing random matrices, and use this tool to estimate the loading spaces of the matrix elliptical factor model introduced in \cite{Li2022Manifold}. Although we focus on the application of this new tool to factor models, the proposed matrix Kendall's tau has wide applications for analyzing matrix-variate data including  but not limited to performing 2-Dimensional Principal Component Analysis (PCA) and estimating separable covariance matrices robustly \citep{yu2021central,ke2019user}.
\subsection{Literature Review on Related Works}
 Multivariate rank-based statistics are widely discussed in robust statistics, see for example  \cite{Tyler1987distribution,Oja2010Multivariate,hallin2002optimal,hallin2010optimal,han2014scale,zhou2019extreme}, among many others.
Among various  multivariate rank-based statistics, the multivariate Kendall' tau is attractive due to its robustness property in dealing with elliptically contoured random vectors as \cite{marden1999some} and \cite{Croux2002Sign} showed that the population multivariate Kendall's tau
shares the same eigenspace as the scatter/covariance matrix. The elliptical family provides more flexibility in modeling modern
complex data than the Gaussian family in terms of capturing heavy-tail property and
nontrivial tail dependence between variables (variables tend to go to extremes together), which are important in areas such as finance and macroeconomics. It is well known that financial data such as portfolio/asset returns are heavy-tailed with nontrivial tail dependence.  With the aid of multivariate Kendall's tau, \cite{Taskinen2012Robustifying} characterized the robustness and efficiency properties of Elliptical Component Analysis (ECA) in low dimensions, and \cite{Han2018ECA} proposed ECA procedures for both non-sparse and sparse settings in high dimensions. Moreover,  \cite{Fan2018Large} proposed Principal Orthogonal complEment Thresholding (POET) for large-scale covariance matrix estimation based on the approximate elliptical factor model, and \cite{yu2019robust}  studied the estimation of factor number for large-dimensional elliptical factor model. More recently, \cite{He2022large} proposed a Robust Two-Step (RTS) procedure for estimating loading and factor spaces.

Matrix-variate data arise when one observes a group of variables structured in a well defined matrix form, and have been frequently observed in various research areas such as finance, signal processing and medical imaging. The naive vectorization technique vectorizes the matrix observations into long vectors and thus ignores the intrinsic matrix structure. As a result, it may lead to inefficient statistical inference and also suffers from heavy computational burden.
 Modelling  matrix-valued data by Matrix-elliptical family not only enjoys the  flexibility in modeling heavy-tail property and tail dependencies of matrix-variate data, but also maintains the inherent matrix structures. A natural question arises: {\it can we define a similar Kendall's tau matrix for analyzing random elliptical matrices, parallel to the multivariate Kendall's tau for elliptical vectors?} In this work, we give an affirmative answer to this question by proposing a new type of Kendall's tau, named ``matrix Kendall's tau".  In detail, assume $\Xb_{p\times q}$ is a random matrix and $\widetilde\Xb_{p\times q}$ is an independent copy of $\Xb_{p\times q}$. We define the row/column matrix Kendall's tau $\Kb_r,\Kb_c$  as
\begin{equation}
\Kb_{r}=\EE\left(\frac{(\Xb-\tilde{\Xb})(\Xb-\tilde{\Xb})^\top}{\norm{ \Xb-\tilde{\Xb}}_{F}^{2}}\right), \ \ \
\Kb_{c}=\EE\left(\frac{(\Xb-\tilde{\Xb})^\top(\Xb-\tilde{\Xb})}{\norm{ \Xb-\tilde{\Xb}}_{F}^{2}}\right).\nonumber
\end{equation}
The ``matrix Kendall's tau" can be viewed as a generalization of multivariate Kendall' tau to the random matrix setting. When $q=1$, $\Kb_r$ is reduced to the conventional multivariate Kendall' tau proposed by \cite{choi1998multivariate}.
The matrix Kendall's tau is particularly suitable for analyzing matrix-elliptical distributions.  Given a random matrix $\Xb \sim E_{p,q}(\Mb,\bSigma \otimes\bOmega,\psi)$ (see Definition \ref{def:MED}), we show that $\Kb_{r}$ and $\bSigma$ share the same eigenspace while  $\Kb_{c}$ and $\bOmega$ share the same eigenspace,  with the same descending order of the eigenvalues.

A closely related line of research is on large-dimensional factor model,
which is a powerful tool for summarizing and extracting information from large datasets and draws growing attention in the ``big-data" era. In the context of large-dimensional approximate vector factor models, we refer to \cite{bai2002determining}, \cite{stock2002forecasting},
  \cite{bai2003inferential},\cite{onatski09}, \cite{ahn2013eigenvalue}, \cite{fan2013large}, \cite{Trapani2018A}, \cite{barigozzi2018simultaneous}, \cite{barigozzi2020generalized}, \cite{Sahalia2017Using}, \cite{yu2019robust}, \cite{Sahalia2020High}, \cite{Chen2021Quantile} and \cite{He2022large}.
Although (approximate) vector factor models have been extensively studied, they are inadequate to model matrix-variate observations that have drawn growing attention in recent years. \cite{wang2019factor} first proposed the following matrix factor model for matrix series $\{\Xb_t, 1\leq t\leq T\}$:
\begin{equation}\label{equ:matrixfactormodel}
  \underbrace{\Xb_t}_{p_1\times p_2}=\underbrace{\Rb}_{p_1\times k_1}\times \underbrace{\Fb_t}_{k_1\times k_2}\times \underbrace{\Cb^\top}_{k_2\times p_2}+  \underbrace{\Eb_t}_{p_1\times p_2},
\end{equation}
where $\Rb$ is the row factor loading matrix exploiting the variations of $\Xb_{t}$ across the rows, $\Cb$ is the $p_{2}\times k_{2}$ column factor loading matrix reflecting the differences across the columns of $\Xb_{t}$, $\Fb_{t}$ is the common factor matrix for all cells in $\Xb_{t}$, and $\Eb_{t}$ is the idiosyncratic component. This model along with its variant has been studied by \citet{chen2019constrained}, \cite{Chen2020Modeling}, \cite{Gao2021A},
\citet{fan2021},  \cite{He2021Vector}, \cite{he2021online}, \cite{Yu2021Projected}.  \cite{Li2022Manifold} proposed a Matrix Elliptical Factor Model (MEFM), assuming that $\{(\text{Vec}(\Fb_t)^\top,\text{Vec}(\Eb_t)^\top)^\top, t=1,\ldots,T\}$ in (\ref{equ:matrixfactormodel}) follow some vector elliptical distribution. They further proposed  a Manifold Principal Component Analysis (MPCA) method to estimate the loading spaces. However, the MPCA method would result in loading space estimators with a slow $\sqrt{T}$ convergence rate compared to the typical convergence rate of $\sqrt{Tp_1}$ (or $\sqrt{Tp_2}$) for matrix factor model \citep{Chen2020StatisticalIF,Yu2021Projected}. For matrix factor analysis, it is also  critical to determine the pair of factor numbers $(k_1,k_2)$.  \cite{Chen2020StatisticalIF} proposed an $\alpha$-PCA based eigenvalue-ratio method; \cite{Yu2021Projected}  proposed a projection-based iterative eigenvalue-ratio method; \cite{He2021Matrix} proposed a robust iterative eigenvalue-ratio method to estimate the numbers of factors using the Huber loss  \citep{Huber1964Robust}.
All these works are based on  the eigenvalue ratio idea from \cite{ahn2013eigenvalue} with an exception of \cite{He2021Vector} who proposed to  determine the factor numbers  from the perspective of sequential hypothesis testing borrowing idea from \cite{Trapani2018A}.

\subsection{Contributions and Paper Organization}

The contributions of the current work lie in the following aspects. Firstly, we propose a new type of Kendall's tau, named matrix Kendall's tau, that generalizes the multivariate Kendall' tau to the random matrix setting. We show that $\Kb_{r}$ and $\bSigma$ share the same eigenspace while  $\Kb_{c}$ and $\bOmega$ share the same eigenspace,  with the same descending order of the eigenvalues. The sample version of matrix Kendall's tau is a U-statistic with a bounded kernel under operator norm and enjoys the same distribution-free property as multivariate Kendall's tau, which can also be directly extended to analyze high dimensional data. Secondly, we apply this new type of Kendell's tau to the Matrix Elliptical Factor Model (MEFM), and  propose a Matrix-type Robust Two Step (MRTS) method to estimate the loading and factor spaces under MEFM. The proposed estimator is ``unbiased"  and achieves faster convergence rates ($\sqrt{Tp_1}$ or $\sqrt{Tp_2}$) than Manifold Principal Component Analysis (MPCA)  for estimating the loading spaces.
As an illustration, we check the empirical performances of some state-of-the-art methods to the heavy-tailedness of the factor and idiosyncratic errors with a synthetic dataset.
 Figure  \ref{fig:1} shows that the proposed MRTS method performs comparably with the $\alpha$-PCA method by \cite{fan2021}, PE method  by \cite{Yu2021Projected} and RMFA by \cite{He2021Matrix}, and outperforms the MPCA by \cite{Li2022Manifold}. As the tails become heavier, the performance of   $\alpha$-PCA and PE deteriorates quickly, while MRTS, RMFA and MPCA exhibit robustness to different extents, and RMTS always performs the best in heavy-tailed cases. Thirdly, we  provide consistent estimators of the pair of factor numbers without any moment constraints on the underlying distribution. At last, we point out that the matrix Kendall's tau has wide applications to analyze matrix-variate data in addition to the MEFM discussed here, such as robust estimation of separable covariance matrix or robust 2-dimensional PCA.

\begin{figure}[!h]
  \centering
  \includegraphics[width=17cm, height=6cm]{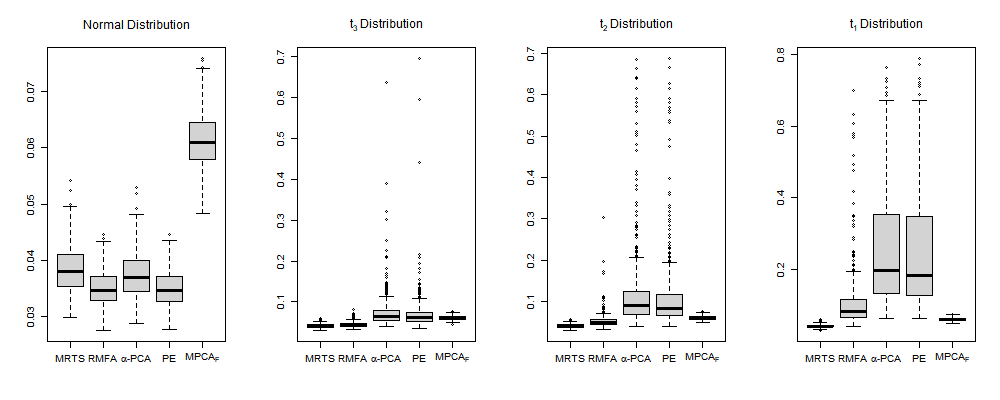}
 \caption{Boxplot of the distance between the estimated row loading space and the true row loading space by MRTS, RMFA, $\alpha$-PCA, PE, MPCA$_F$ methods under different distributions. $p_1=p_2=T=50$.}\label{fig:1}
 \end{figure}
The rest of the article proceeds as follows. In Section 2, we introduce a new type of Kendall's tau and briefly discuss its properties along with its sample version. We then introduce a Matrix-type Robust Two Step (MRTS)  procedure for the MEFM with the aid of matrix Kendall's tau, parallel to the RTS procedure for vector elliptical factor model with the aid of multivariate Kendall's tau  \citep{He2022large}. We further propose robust estimators of the pair of factor numbers by exploiting the eigenvalue-ratios of the sample matrix Kendall's tau.  In Section 3, we investigate the theoretical properties of the proposed estimators for MEFM. In Section 4, thorough simulation studies are conducted to illustrate the advantages of the proposed estimators over the state-of-the-art methods.
In Section 5, we analyze a financial dataset and a medical imaging dataset associated with COVID-19 to illustrate the empirical performance/usefulness of the proposed methods.
 We discuss possible future research directions and conclude the article in Section 6. The proofs of the main theorems and
additional details are collected in the supplementary materials.

Notations adopted throughout the paper are as follows. For any vector $\bmu=(\mu_1,\ldots,\mu_p)^\top \in \RR^p$, let $\|\bmu\|_2=(\sum_{i=1}^p\mu_i^2)^{1/2}$, $\|\bmu\|_\infty=\max_i|\mu_i|$. For a real number $a$, denote  $[a]$ as the largest integer smaller than or equal to $a$,  let $sgn(a) = 1$ if $a\geq 0$ and
$sgn(a) =-1$ if $a<0$ . Let $I(\cdot)$ be the indicator function. Let ${\rm diag}(a_1,\ldots,a_p)$ be a $p\times p$ diagonal matrix, whose diagonal entries are $a_1\ldots,a_p$.   For a matrix $\Ab$, let $\mathrm{A}_{ij}$ (or $\mathrm{A}_{i,j}$) be the $(i,j)$-th entry of $\Ab$, $\Ab_{i,\cdot}$ be the $i$-th row and $\Ab_{\cdot,j}$  be the $j$-th column, $\Ab^\top$ the transpose of $\Ab$, ${\rm Tr}(\Ab)$ the trace of $\Ab$, $\text{rank}(\Ab)$ the rank of $\Ab$ and $\text{diag}(\Ab)$ the diagonal matrix composed of the diagonal elements of $\Ab$. $\Ab \succcurlyeq \mathbf{0}$ means $\Ab$ is a non-negative definite matrix. Denote $\lambda_j(\Ab)$ as the $j$-th largest eigenvalue of a nonnegative definitive matrix $\Ab$, and let $\|\Ab\|$ be the spectral norm of matrix $\Ab$ and $\|\Ab\|_F$ be the Frobenius norm of $\Ab$. For matrices $\Ab$ and $\Bb$, let ${\rm diag}(\Ab,\Bb)$ be a block diagonal matrix with diagonal matrices $\Ab$ and $\Bb$. For two series of random variables, $X_n$ and $Y_n$, $X_n\asymp Y_n$ means $X_n=O_p(Y_n)$ and $Y_n=O_p(X_n)$. For two random variables (vectors) $\bX$ and $\bY$, $\bX\stackrel{d}{=}\bY$ means the distributions of $\bX$ and $\bY$ are the same. The constants $c,C_1,C_2$ in different lines can be nonidentical.

\section{Methodology}
In this section, we introduce the generalized Kendall's tau, i.e., the Matrix Kendall's tau and then propose a Matrix-type Robust Two Step (MRTS) procedure for estimating row/column loading and factor spaces for the Matrix Elliptical Factor Model (MEFM).
\subsection{ Matrix Kendall's tau}
Prior to introducing the Matrix Kendall's tau, we first briefly review the definition of
Matrix Elliptical Distribution (MED). We refer to \cite{Gupta2018Matrix} for further details.
\begin{definition}\label{def:MED}
A random matrix $\Yb$ of size $p \times q$ follows the matrix elliptical distribution if its characteristic function has the form $\varphi_{Y}(\Tb )=\exp[\tr(i\Tb^\top\Mb)]\psi[\tr(\Tb^\top\bSigma\Tb\bOmega)]$ with $\Tb$: $p \times q$, $\Mb$: $p \times q$, $\bSigma$: $p \times p$, $\bOmega$:$q \times q$, $\bSigma \succcurlyeq \mathbf{0}$,$\bOmega \succcurlyeq \mathbf{0}$ and $\psi:[0,\infty)\rightarrow \RR$, briefly denoted as
$\Yb\sim E_{p,q}(\Mb,\bSigma \otimes\bOmega,\psi)$. Assume $\text{rank}(\bSigma)=m$, $\text{rank}(\bOmega)=n$, the random matrix $\Yb \sim E_{p,q}(\Mb,\bSigma \otimes\bOmega,\psi)$ if and only if
\begin{equation}
\Yb \overset{d}{=} r\Ab\Ub\Bb^\top +\Mb, \nonumber
\end{equation}
where $\Mb$ is a deterministic location matrix of size $p \times q$,  $\Ub$ is a random matrix of dimension $m \times n$ and Vec$(\Ub)$ is uniformly distributed on the unit sphere in $\RR^{mn}$, $r$ is a nonnegative random variable independent of $\Ub$, $\bSigma=\Ab\Ab^\top$ and $\bOmega=\Bb\Bb^\top$ are rank factorizations of $\bSigma$ and $\bOmega$.
\end{definition}
Throughout the paper, we call $\bSigma$ as row scatter matrix and $\bOmega$ as column scatter matrix.
We first introduce the matrix Kendall's tau for random matrices, which generalizes the multivariate Kendall's tau for random vectors by \cite{marden1999some}.

\begin{definition}
Let $\Yb \sim E_{p,q}(\Mb,\bSigma \otimes\bOmega,\psi)$ be a continuous random matrix, and $\widetilde\Yb$ is an independent copy, the row/column matrix Kendall's tau are defined as:
\begin{equation}\label{def:maken}
\Kb_{r}=\EE\left(\frac{(\Yb-\tilde{\Yb})(\Yb-\tilde{\Yb})^\top}{\norm{ \Yb-\tilde{\Yb}}_{F}^{2}}\right), \ \ \
\Kb_{c}=\EE\left(\frac{(\Yb-\tilde{\Yb})^\top(\Yb-\tilde{\Yb})}{\norm{ \Yb-\tilde{\Yb}}_{F}^{2}}\right).
\end{equation}
\end{definition}
Both $ \Kb_r$ and $ \Kb_c$ are positive semidefinite (PSD) matrices of trace 1. The following proposition illustrates  the connections between $ \Kb_r$ ($\Kb_c$) and $\bSigma$ ($\bOmega$).
\begin{proposition}\label{pro1}
Let $\Yb \sim E_{p,q}(\Mb,\bSigma \otimes\bOmega,\psi)$ be a continuous random matrix, $\Kb_{r}$ be the row matrix multivariate Kendall's tau. Then if $\text{rank}(\bSigma)=m$, we have
\begin{equation}
\lambda_{j}(\Kb_{r})=\EE\left(\frac{\lambda_{j}(\bSigma)\Ub_{j,\cdot}\bOmega^{*}\Ub_{j,\cdot}^\top}{\lambda_{1}(\bSigma)\Ub_{1,\cdot}\bOmega^{*}\Ub_{1,\cdot}^\top+\cdots+\lambda_{m}(\bSigma)\Ub_{m,\cdot}\bOmega^{*}\Ub_{m,\cdot}^\top}\right),\nonumber
\end{equation}
where $\bOmega^{*}=\Bb^\top\Bb$. In addition, $\Kb_{r}$ and $\bSigma$ share the same eigenspace with the same descending order of the eigenvalues.
Furthermore, let $\Kb_{c}$ be the column matrix Kendall's tau,  if $\text{rank}(\bOmega)=n$, we have
\begin{equation}
\lambda_{j}(\Kb_{c})=\EE\left(\frac{\lambda_{j}(\bOmega)\Ub_{\cdot,j}^\top\bSigma^{*}\Ub_{\cdot,j}}{\lambda_{1}(\bOmega)\Ub_{\cdot,1}^\top\bSigma^{*}\Ub_{\cdot,1}+\cdots+\lambda_{n}(\bOmega)\Ub_{\cdot,n}^\top\bSigma^{*}\Ub_{\cdot,n}}\right),\nonumber
\end{equation}
where $\bSigma^{*}=\Ab^\top\Ab$. In addition, $\Kb_{c}$ and $\bOmega$ share the same eigenspace with the same descending order of the eigenvalues.
Without loss of generality, we assume that $\bOmega^{*}$ and $\bSigma^{*}$ are both diagonal matrices. Otherwise, we can always find orthogonal matrices $\Pb$ and $\Qb$ such that $\Bb^*=\Bb\Pb$ and $\Ab^*=\Ab\Qb$ satisfy $\Bb^{*\top}\Bb^{*}$ and $\Ab^{*\top}\Ab^{*}$ are diagonal, and
$\Yb \overset{d}{=} r\Ab^*(\Qb^\top\Ub\Pb)\Bb^{*\top}+\Mb\overset{d}{=}r\Ab^*\Ub\Bb^{*\top}+\Mb$.
\end{proposition}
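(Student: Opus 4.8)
The plan is to combine the stochastic representation in Definition~\ref{def:MED} with the rotational invariance of $\mathrm{Vec}(\Ub)$ on the unit sphere. First, since $\psi$ is real-valued, the characteristic function of $\Yb-\widetilde\Yb$ equals $|\varphi_{\Yb}|^{2}=\psi^{2}[\tr(\Tb^{\top}\bSigma\Tb\bOmega)]$, so $\Yb-\widetilde\Yb\sim E_{p,q}(\mathbf 0,\bSigma\otimes\bOmega,\psi^{2})$ and hence, by Definition~\ref{def:MED}, $\Yb-\widetilde\Yb\overset{d}{=}r\Ab\Ub\Bb^{\top}$ with $r\ge 0$ independent of $\Ub$, $\mathrm{Vec}(\Ub)$ uniform on the sphere in $\RR^{mn}$, and $r>0$ almost surely by continuity. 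The kernel defining $\Kb_{r}$ and $\Kb_{c}$ is scale-invariant in its argument, so $r$ cancels; moreover, replacing a rank factorization $\Ab$ by $\Ab\Qb$ replaces $\Ub$ by $\Qb\Ub\overset{d}{=}\Ub$ (since $\mathrm{Vec}(\Qb\Ub)=(\Ib\otimes\Qb)\mathrm{Vec}(\Ub)$ with $\Ib\otimes\Qb$ orthogonal), so I may take $\Ab=\Gb\bLambda_{\Sigma}^{1/2}$, $\Bb=\Hb\bLambda_{\Omega}^{1/2}$ from the spectral decompositions $\bSigma=\Gb\bLambda_{\Sigma}\Gb^{\top}$, $\bOmega=\Hb\bLambda_{\Omega}\Hb^{\top}$, where $\Gb^{\top}\Gb=\Ib_{m}$, $\Hb^{\top}\Hb=\Ib_{n}$, $\bLambda_{\Sigma}=\mathrm{diag}(\lambda_{1}(\bSigma),\dots,\lambda_{m}(\bSigma))$ and $\bLambda_{\Omega}=\mathrm{diag}(\lambda_{1}(\bOmega),\dots,\lambda_{n}(\bOmega))$. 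This is exactly the ``without loss of generality'' reduction recorded in the last sentence, now making $\bSigma^{*}=\Ab^{\top}\Ab=\bLambda_{\Sigma}$ and $\bOmega^{*}=\Bb^{\top}\Bb=\bLambda_{\Omega}$ diagonal; its general justification is the same computation, via $\mathrm{Vec}(\Qb^{\top}\Ub\Pb)=(\Pb^{\top}\otimes\Qb^{\top})\mathrm{Vec}(\Ub)$ with $\Pb^{\top}\otimes\Qb^{\top}$ orthogonal.

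With these choices, $\|\Ab\Ub\Bb^{\top}\|_{F}^{2}=\tr(\bSigma^{*}\Ub\bOmega^{*}\Ub^{\top})=\sum_{l=1}^{m}\lambda_{l}(\bSigma)\,\Ub_{l,\cdot}\bOmega^{*}\Ub_{l,\cdot}^{\top}$, so $\Kb_{r}=\Gb\Db\Gb^{\top}$ with $\Db=\EE\big[\bLambda_{\Sigma}^{1/2}\Ub\bOmega^{*}\Ub^{\top}\bLambda_{\Sigma}^{1/2}\,\big/\,\sum_{l}\lambda_{l}(\bSigma)\Ub_{l,\cdot}\bOmega^{*}\Ub_{l,\cdot}^{\top}\big]$, whose $(j,k)$-th numerator entry is $\sqrt{\lambda_{j}(\bSigma)\lambda_{k}(\bSigma)}\,\Ub_{j,\cdot}\bOmega^{*}\Ub_{k,\cdot}^{\top}$. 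Then I would show $\Db$ is diagonal: negating the $j$-th row of $\Ub$ is an orthogonal transformation of $\mathrm{Vec}(\Ub)$, hence law-preserving; it leaves every $\Ub_{l,\cdot}\bOmega^{*}\Ub_{l,\cdot}^{\top}$ fixed but flips the sign of $\Ub_{j,\cdot}\bOmega^{*}\Ub_{k,\cdot}^{\top}$ for each $k\ne j$, so every off-diagonal entry of $\Db$ equals its own negative and is zero. Hence $(\Db)_{jj}=\EE\big[\lambda_{j}(\bSigma)\Ub_{j,\cdot}\bOmega^{*}\Ub_{j,\cdot}^{\top}\big/\sum_{l}\lambda_{l}(\bSigma)\Ub_{l,\cdot}\bOmega^{*}\Ub_{l,\cdot}^{\top}\big]$; and since $\Gb^{\top}\Gb=\Ib_{m}$ and $\Db$ is diagonal with strictly positive entries (as $\bOmega^{*}\succ\mathbf 0$ forces $\Ub_{l,\cdot}\bOmega^{*}\Ub_{l,\cdot}^{\top}>0$ a.s.), $\Kb_{r}=\Gb\Db\Gb^{\top}$ is a spectral decomposition of $\Kb_{r}$ on its $m$-dimensional range, while the remaining $p-m$ eigenvalues of both $\Kb_{r}$ and $\bSigma$ vanish with common eigenspace $\mathrm{col}(\Gb)^{\perp}$; thus $\Kb_{r}$ and $\bSigma$ have the same eigenvectors.

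The step I expect to be the main obstacle is verifying that the diagonal entries of $\Db$ inherit the descending order of $\lambda_{1}(\bSigma)\ge\cdots\ge\lambda_{m}(\bSigma)$, so that $(\Db)_{jj}$ is genuinely $\lambda_{j}(\Kb_{r})$. Writing $w_{l}=\Ub_{l,\cdot}\bOmega^{*}\Ub_{l,\cdot}^{\top}$ and $R=\sum_{l\ne j,k}\lambda_{l}(\bSigma)w_{l}$, swapping rows $j$ and $k$ of $\Ub$ permutes coordinates of $\mathrm{Vec}(\Ub)$ and so preserves its law, giving $(w_{j},w_{k},R)\overset{d}{=}(w_{k},w_{j},R)$ and therefore $(\Db)_{kk}=\EE\big[\lambda_{k}(\bSigma)w_{j}\big/(\lambda_{j}(\bSigma)w_{k}+\lambda_{k}(\bSigma)w_{j}+R)\big]$; subtracting this from the expression for $(\Db)_{jj}$ and combining over a common, a.s.\ positive denominator, the integrand reduces to $w_{j}\big(\lambda_{j}(\bSigma)-\lambda_{k}(\bSigma)\big)\big[(\lambda_{j}(\bSigma)+\lambda_{k}(\bSigma))w_{k}+R\big]$ divided by a positive quantity, which is nonnegative a.s.\ whenever $\lambda_{j}(\bSigma)\ge\lambda_{k}(\bSigma)$, so $(\Db)_{jj}\ge(\Db)_{kk}$. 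The column statements follow verbatim after writing $\Kb_{c}=\Hb\Db'\Hb^{\top}$ with $\Db'=\EE\big[\bLambda_{\Omega}^{1/2}\Ub^{\top}\bSigma^{*}\Ub\bLambda_{\Omega}^{1/2}\big/\sum_{j}\lambda_{j}(\bOmega)\Ub_{\cdot,j}^{\top}\bSigma^{*}\Ub_{\cdot,j}\big]$ and repeating the sign-flip and row-swap arguments with the columns of $\Ub$ (which are likewise exchangeable and sign-symmetric under orthogonal transformations of $\mathrm{Vec}(\Ub)$), so that $\Db'$ is diagonal with the stated descending diagonal and $\Hb$ carries the eigenvectors of $\bOmega$.
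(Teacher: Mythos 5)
Your proposal is correct and follows essentially the same route as the paper's proof: reduce $\Yb-\widetilde\Yb$ to the centered spherical representation $r\Ab\Ub\Bb^\top$, rotate so that $\Ab^\top\Ab$ and $\Bb^\top\Bb$ are diagonal, kill the off-diagonal entries by a sign-flip (orthogonal) transformation of a row of $\Ub$, and order the diagonal entries by a row-swap exchangeability argument. The only cosmetic difference is in the last step, where you bound the difference $(\Db)_{jj}-(\Db)_{kk}$ over a common denominator while the paper bounds the ratio of the two entries; your algebra there is clean and correct.
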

Proposition \ref{pro1} shows that, to recover the eigenspace of the covariance matrix $\bSigma$ ($\bOmega$), we can resort to recovering the eigenspace of $\Kb_r$ ($\Kb_c$), which, as will be discussed in the following, can be efficiently estimated by U-statistics.

Let $\Yb_1,\ldots,\Yb_T\in\RR^{p\times q}$  be $T$ independent data points of a random matrix $\Yb \sim E_{p,q}(\Mb,\bSigma \otimes\bOmega,\psi)$, and the definition of the matrix  Kendall's tau in (\ref{def:maken}) motivates one to consider the following sample matrix
Kendall's  tau estimator, which are  second-order matrix U-statistics:
\[
\widehat \Kb_r=\frac{2}{T(T-1)}\sum_{t<t^\prime}\frac{(\Yb_t-\Yb_{t^\prime})(\Yb_t-\Yb_{t^\prime})^\top}{\|\Yb_t-\Yb_{t^\prime}\|_F^2}, \ \
\widehat \Kb_c=\frac{2}{T(T-1)}\sum_{t<t^\prime}\frac{(\Yb_t-\Yb_{t^\prime})^\top(\Yb_t-\Yb_{t^\prime})}{\|\Yb_t-\Yb_{t^\prime}\|_F^2}.
\]
It can be easily derived that $\EE(\widehat \Kb_r)=\Kb_r$ and $\EE(\widehat \Kb_c)=\Kb_c$ and both $\widehat \Kb_r$ and $\widehat \Kb_c$ are positive semidefinite (PSD) matrices of trace 1. The kernels of the U-statistics $k_{\text{MK}}^r(\cdot)$ : $\RR^{p\times q}\times\RR^{p\times q}\rightarrow \RR^{p\times p}$ and $k_{\text{MK}}^c(\cdot)$ : $\RR^{p\times q}\times\RR^{p\times q}\rightarrow \RR^{q\times q}$, defined as
\[
k_{\text{MK}}^r(\Yb_t,\Yb_{t^\prime})=\frac{(\Yb_t-\Yb_{t^\prime})(\Yb_t-\Yb_{t^\prime})^\top}{\|\Yb_t-\Yb_{t^\prime}\|_F^2}, \ \
k_{\text{MK}}^c(\Yb_t,\Yb_{t^\prime})=\frac{(\Yb_t-\Yb_{t^\prime})^\top(\Yb_t-\Yb_{t^\prime})}{\|\Yb_t-\Yb_{t^\prime}\|_F^2},
\]
are bounded under the spectral norm, i.e., $\|k_{\text{MK}}^r(\Yb_t,\Yb_{t^\prime})\|\leq 1, \|k_{\text{MK}}^c(\Yb_t,\Yb_{t^\prime})\|\leq 1$. Intuitively, such a boundedness property
makes the U-statistics $\widehat \Kb_r$  and $\widehat \Kb_c$ more robust to heavy-tailed distributions.
We  also prove that the kernels $k_{\text{MK}}^r(\Yb_t,\Yb_{t^\prime})$, $k_{\text{MK}}^c(\Yb_t,\Yb_{t^\prime})$ are distribution-free kernel, i.e., for any continuous $\Yb \sim E_{p,q}(\Mb,\bSigma \otimes\bOmega,\psi)$,
\[
k_{\text{MK}}^r(\Yb_t,\Yb_{t^\prime})\overset{d}{=}k_{\text{MK}}^r(\Zb_t,\Zb_{t^\prime}), \ \
k_{\text{MK}}^c(\Yb_t,\Yb_{t^\prime})\overset{d}{=}k_{\text{MK}}^c(\Zb_t,\Zb_{t^\prime}),
\]
where $\Zb_t,\Zb_{t^\prime}$ follow matrix normal distribution $\Zb \sim \mathcal{MN}(\mathbf{0},\bSigma,\bOmega)$ (check Lemma \ref{lemfree} for details). Therefore, $\widehat \Kb_r$  and $\widehat \Kb_c$  enjoy the same distribution-free property as  Tyler's
M-estimator \citep{Tyler1987distribution}  and multivariate Kendall's tau \citep{marden1999some}. However, the matrix Kendall's tau  can be directly extended to analyze high dimensional data as multivariate Kendall's tau, while Tyler's M-estimator cannot \citep{Han2018ECA}.
\subsection{Matrix Elliptical Factor Model}
For $p_1 \times p_2$ matrix  sequences $\{\Xb_{t},t=1,\dots,T\}$, the matrix factor model is as follows:
\begin{equation}
\Xb_{t}=\Rb\Fb_{t}\Cb^\top+\Eb_{t} , t=1,\dots,T, \nonumber
\end{equation}
where $\Rb$ is the $p_1 \times k_{1}$ row factor loading matrix, $\Cb$ is the $p_2 \times k_{2}$ column factor loading matrix, $\Fb_{t}$ is the common factor matrix and $\Eb_{t}$ is the idiosyncratic components.
In the definition of MEFM, we assume the factor matrix $\Fb_t$ and noise matrix $\Eb_t$ are from a joint matrix elliptical distribution with location parameter zero, that is,

\begin{equation}\label{eq_joint_ellip_model}
	\left(\begin{array}{c}\vec(\Fb_t) \\ \vec(\Eb_t)\end{array}\right)=r_{t}\left(\begin{array}{cc}\Ib_{k_2}\otimes \Ib_{k_1} & \zero \\ \zero & \bOmega_e^{1/2}\otimes \bSigma_e^{1/2}\end{array}\right) \frac{\bZ_t}{\|\bZ_t\|_2},
\end{equation}
where $\bZ_t$ is a $(k_1k_2+p_1p_2)$-dimensional isotropic Gaussian vector, and $r_t$ is a positive random variable independent of $\bZ_t$. In fact,  $\Fb_t$ and $\Eb_t$ are both matrix elliptically distributed, that is,
\begin{equation*}
	\Fb_t = \frac{r_t}{\|\bZ_t\|_2} \Zb_t^{F}\sim E_{k_1, k_2}(\zero,\Ib_{k_1} \otimes \Ib_{k_2}, \psi_F), \quad \Eb_t = \frac{r_t}{\|\bZ_t\|_2}\bSigma_e^{1/2} \Zb_t^{E}\bOmega_e^{1/2}\sim E_{p_1, p_2}(\zero, \bSigma_e \otimes \bOmega_e, \psi_E) ,
\end{equation*}
where $\bSigma_e$  of size $p_1\times p_1$ and $\bOmega_e$ of size $p_2\times p_2$ are positive-definite matrices , $\Zb_t^F$ is a $k_1\times k_2$ random matrix by stacking the first $k_1\times k_2$ elements of $\bZ_t$, and $\Zb_t^E$ is of size $p_1\times p_2$ by stacking all the elements left.

Under the above assumption, and by the property of the elliptical vectors, we have that the scatter matrix of $\text{Vec}(\Xb_t)$, denoted as $\bSigma_{\text{Vec}(\Xb_t)}$, has the form
\[
\bSigma_{\text{Vec}(\Xb_t)}=(\Rb\Rb^\top)\otimes(\Cb\Cb^\top)+\bSigma_e\otimes \bOmega_e,
\]
or equivalently, $\bSigma_{\text{Vec}(\Xb_t)}$ has a low-rank matrix plus sparse matrix structure. Neglecting the sparse term ($\bSigma_e\otimes \bOmega_e$), one would find that $\Xb\sim E_{p_1,p_2}(\zero, (\Rb\Rb^\top)\otimes(\Cb\Cb^\top), \psi_X)$, which  motivates one to estimate $\Rb$ and $\Cb$ by the leading eigenvectors of the row/column matrix Kendall's tau, as Proposition \ref{pro1} shows that the row/column matrix Kendall's tau matrix shares the same eigenspace with the row/column scatter matrix with the same descending order of the eigenvalues.
\subsection{Matrix-type Robust Two step  Procedure for MEFM}
In this section, we introduce a Matrix-type Robust Two step (MRTS)
procedure for fitting large-dimensional matrix elliptical factor model. In the
first step, we propose to estimate $\Rb$ and $\Cb$ by the eigenvectors of the row/column matrix Kendall's tau. Let $\widehat\Kb^X_r$ and $\widehat\Kb^X_c$ be the
sample matrix Kendall' tau based on the observations $\{\Xb_{t},t=1,\dots,T\}$, i.e.,
\begin{equation}\label{equ:sammaken}
\widehat \Kb_r^X=\frac{2}{T(T-1)}\sum_{t<t^\prime}\frac{(\Xb_t-\Xb_{t^\prime})(\Xb_t-\Xb_{t^\prime})^\top}{\|\Xb_t-\Xb_{t^\prime}\|_F^2}, \ \
\widehat \Kb_c^X=\frac{2}{T(T-1)}\sum_{t<t^\prime}\frac{(\Xb_t-\Xb_{t^\prime})^\top(\Xb_t-\Xb_{t^\prime})}{\|\Xb_t-\Xb_{t^\prime}\|_F^2}.
\end{equation}
As the eigenvectors of the matrix Kendall's tau matrix $\Kb_r^X$ ($\Kb_r^X$) are
identical to the eigenvectors of the scatter matrix $\bSigma$ ($\bOmega$),  we
estimate the factor loading matrix $\Rb$ ($\Cb$) by $\sqrt{p_1}$ ($\sqrt{p_2}$) times the leading $k_1$ ($k_2$)
eigenvectors of $\widehat \Kb_r^X$ ($\widehat \Kb_c^X$).
In detail, let $\{\hat\br_1,\ldots,\hat\br_{k_1}\}$ be the leading $k_1$ eigenvectors of $\widehat \Kb_r^X$,  $\{\hat\bc_1,\ldots,\hat\bc_{k_2}\}$ be the leading $k_2$ eigenvectors of $\widehat \Kb_c^X$, We set  $\widehat \Rb=\sqrt{p_1}(\hat\br_1,\ldots,\hat\br_{k_1})$ and $\widehat \Cb=\sqrt{p_2}(\hat\bc_1,\ldots,\hat\bc_{k_2})$ as the estimators of the factor loading matrices $\Rb$ and $\Cb$ respectively.  The numbers of factors $(k_1,k_2)$ are relatively small compared with $p_1,p_2$ and $T$. We first
assume that $(k_1,k_2)$ are known and fixed. In section \ref{sec:4},  we provide  consistent estimators for  $(k_1,k_2)$.

In the second step, we estimate the factor matrices $\{\Fb_1,\ldots,\Fb_T\}$ by
regressing $\vec{(\Xb_t)}$ on $(\widehat \Cb\otimes \widehat \Rb)$. In detail,  $\vec{(\Fb_t)}$ is estimated by the following least-squares-type estimator
\[
 \vec{(\widehat\Fb_t)}=\bargmin_{\bbeta\in\RR_{k_1k_2}}\|\vec(\Xb_t)-(\widehat \Cb\otimes \widehat \Rb)\bbeta\|^2.
\]
In fact, it can be shown that by solving the above optimization problem would lead to ${\widehat\Fb_t}=\widehat \Rb^\top\Xb_t\widehat \Cb/(p_1p_2)$.

An alternative approach is to extend the RTS procedure for vector elliptical factor model \citep{He2022large} by directly vectorizing matrix-variate observations. This, however, would ignore the intrinsic matrix structure, lead to inefficient statistical inference and bring in heavy computational burden.
For estimating the loading spaces, applying the previous RTS procedure to vectorized data has a complexity of order $O(T^2p_1^2p_2^2)$, while the complexity of the proposed MRTS procedure is $O\big(T^2p_1p_2\max(p_1,p_2)\big)$.

\subsection{Model Selection}
\label{sec:4}

Thus far we have assumed that the pair of  factor numbers $(k_1 ,k_2)$  is known a priori. In practice, both both $k_1$ and $k_2$  are unknown and thus need be estimated before implementing the RMTS
procedure. In  this section, we introduce a robust method for determining the pair of
factor numbers, which does not require any moment constraints and is of independent interest.
Our ``Matrix Kendall's tau Eigenvalue-Ratio" (MKER) method is  motivated by \cite{ahn2013eigenvalue} and \cite{yu2019robust}.

Given matrix observations $\{\Xb_1,\ldots,\Xb_T\}$, let $\widehat\Kb_r^X$ and  $\widehat\Kb_c^X$ be the sample matrix Kendall's tau given in (\ref{equ:sammaken}) with eigenvalues $\lambda_j(\hat \Kb_r^X),j=1,\ldots,p_1$ and $\lambda_j(\hat \Kb_c^X),j=1,\ldots,p_2$, respectively.
For a prescribed maximum number of factors $k_{{\rm max}}$, we construct the Matrix Kendall's tau Eigenvalue Ratio (``MKER") estimators as
	\begin{equation}\label{equ:fn}
	\hat k_1=\bargmax_{1\le j\le k_{{\rm max}}}\frac{\lambda_j(\hat \Kb_r^X)}{\lambda_{j+1}(\hat \Kb_r^X)},\ \ \hat k_{2}=\argmax_{1\le j\le k_{{\rm max}}}\frac{\lambda_j(\hat \Kb_c^X)}{\lambda_{j+1}(\hat \Kb_c^X)}.
	\end{equation}
To ensure that the denominators are away from zero, in practice one may add a positive but asymptotically negligible term to each $\lambda_j(\hat \Kb_r^X)$ and $\lambda_j(\hat \Kb_c^X)$. For instance,  take $\delta_{1}=1/\sqrt{\min{(p_2,T^{1-\epsilon})}}$, $\delta_{2}=1/\sqrt{\min{(p_1,T^{1-\epsilon})}}$ for a small $\epsilon>0$ and let $\hat\lambda_j(\hat \Kb_r^X)=\lambda_j(\hat \Kb_r^X)+c\delta_{1}$, $\hat\lambda_j(\hat \Kb_c^X)=\lambda_j(\hat \Kb_c^X)+c\delta_{2}$ with a small positive constant $c$. We then replace $\lambda_j(\hat \Kb_r^X)$ and $\lambda_j(\hat \Kb_c^X)$ with $\hat\lambda_j(\hat \Kb_r^X)$ and $\hat\lambda_j(\hat \Kb_c^X)$ in (\ref{equ:fn}), respectively. 

\section{Theoretical Analysis}
In this section, we investigate the theoretical properties of the proposed estimators. We first formally introduce the matrix elliptical factor model along with some structural assumptions.

\begin{assumptionp}{A}[Joint Matrix Elliptical Model]\label{joint_elliptical}
	We say the matrix-variate observations $\Xb_t \in \RR^{p_1 \times p_2}$ follow a matrix elliptical factor model if

\begin{equation*}
\Xb_t =\Rb \Fb_t\Cb^{\top}+\Eb_t,\quad t=1,\dots,T,
\end{equation*}
where $(\text{Vec}(\Fb_t)^\top,\text{Vec}(\Eb_t)^\top)^\top$ satisfies (\ref{eq_joint_ellip_model}), and $\bZ_t$'s therein  are {\it i.i.d.} $(k_1k_2+p_1p_2)$-dimensional standard multivariate Gaussian vectors, $r_t$'s are {\it i.i.d.} positive random scalars that are independent of $\bZ_t$'s. We consider the high-dimensional regime that $p_1, p_2 \to \infty$ and $\log(\max(p_1, p_2) ) = o(T)$. In addition, we assume $ r_t = O_p(\sqrt{p_1 p_2})$ as $p_1, p_2 \to \infty$.
\end{assumptionp}

\begin{assumptionp}{B}[Strong Factor Conditions]\label{strong_factor}
We assume $\Rb^{\top} \Rb/p_1 \rightarrow \Vb_1$ and $\Cb^{\top} \Cb/p_2 \rightarrow \Vb_2$ for some positive definite matrices $\Vb_1$ and $\Vb_2$ as $p_1, p_2 \to \infty$. There exist positive constants $c_1,c_2$
such that $c_1\leq\lambda_{k_1}(\Vb_1)<\cdots<\lambda_{1}(\Vb_1)\leq c_2$,
$c_1\leq\lambda_{k_2}(\Vb_2)<\cdots<\lambda_{1}(\Vb_2)\leq c_2$.
\end{assumptionp}

\begin{assumptionp}{C}[Regular Noise Conditions]\label{regular_noise}
We assume there exist positive constants $c_1$ and $C_1$ such that $c_1\leq \lambda_{p_1}(\bSigma_e)\leq \lambda_{1}(\bSigma_e)\leq C_2$, $c_2\leq \lambda_{p_2}(\bOmega_e)\leq \lambda_{1}(\bOmega_e)\leq C_2$ as $p_1, p_2\rightarrow\infty$.

\end{assumptionp}

Assumption \ref{joint_elliptical} is the joint matrix elliptical model assumption adopted in \cite{Li2022Manifold}. {The condition $r_t=O_p(\sqrt{p_1p_2})$ essentially assumes that $r_t=\sqrt{p_1p_2}\times r^*_t$ for some random variable $r^*_t$; see \eqref{eq_joint_ellip_model}. For instance, assume $(\text{Vec}(\Fb_t)^\top,\text{Vec}(\Eb_t)^\top)^\top$ follows a multivariate $t$ distribution with degrees of freedom  $v$. Then we have $r^*_t\overset{d}{=}\sqrt{F_{d,v}}$, where $F_{d,v}$ stands for $F$-distribution with degrees of freedom $d,v$ where $d=(p_1p_2+k_1k_2)$. It is worth noting that  no moment constraint is exerted on $r^*_t$ or $r_t$, thus allowing for Cauchy-type distributions that do not even have finite means.
This differentiates the current work from the literature on matrix factor models such as \cite{fan2021} and \cite{Yu2021Projected}.} Assumptions \ref{strong_factor} and \ref{regular_noise} are standard in large-dimensional matrix factor analysis literature, see for example \cite{fan2021} and \cite{Yu2021Projected}. The eigenvalues of $\Vb_i, i=1,2$ are
assumed to be distinct such that the corresponding eigenvectors are identifiable. We assume strong factor conditions
in Assumption \ref{strong_factor}, indicating that the row and column factors are pervasive along both dimensions.

As our first main result, the following theorem provides the convergence rates of the factor loading matrix estimators $\widehat \Rb$ and $\widehat \Cb$ under both the Frobenious and spectral norms.
\begin{theorem}\label{th1}
Under Assumptions \ref{joint_elliptical}-\ref{regular_noise} with $(k_1, k_2)$ fixed, and $p_1,p_2,T\rightarrow\infty$,  there exist  matrices $\hat{\Hb}_{R}$ and $\hat{\Hb}_{C}$ such that $\hat\Hb_R^\top\Vb_1\hat\Hb_R\rightarrow \Ib_{k_1}$,  $\hat\Hb_C^\top\Vb_2\hat\Hb_C\rightarrow \Ib_{k_2}$ and
\begin{equation}
\frac{1}{p_1}\norm{\hat\Rb-\Rb\hat{\Hb}_{R}}_F^2=O_p\left(\frac{1}{Tp_2}+\frac{1}{p_1^2}\right),\quad
\frac{1}{p_2}\norm{\hat\Cb-\Cb\hat{\Hb}_{C}}_F^2=O_p\left(\frac{1}{Tp_1}+\frac{1}{p_2^2}\right).\nonumber
\end{equation}
Moreover,
\begin{equation}
\frac{1}{p_1}\norm{\hat\Rb-\Rb\hat{\Hb}_{R}}^2=O_p\left(\frac{1}{Tp_2}+\frac{1}{p_1^2}\right),\quad
\frac{1}{p_2}\norm{\hat\Cb-\Cb\hat{\Hb}_{C}}^2=O_p\left(\frac{1}{Tp_1}+\frac{1}{p_2^2}\right).\nonumber
\end{equation}
\end{theorem}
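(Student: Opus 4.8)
The plan is to follow the standard perturbation-theoretic route for PCA-type estimators, but with the sample matrix Kendall's tau $\widehat\Kb_r^X$ (resp.\ $\widehat\Kb_c^X$) playing the role that the sample covariance plays in the usual argument. The starting point is the decomposition $\widehat\Kb_r^X = \Kb_r^X + (\widehat\Kb_r^X - \Kb_r^X)$, where $\Kb_r^X$ is the population row matrix Kendall's tau of $\Xb_t$. By Proposition~\ref{pro1}, $\Kb_r^X$ shares its eigenspace with the row scatter matrix of $\Xb_t$, which under Assumption~\ref{joint_elliptical} is $(\Rb\Rb^\top)\otimes(\Cb\Cb^\top)+\bSigma_e\otimes\bOmega_e$; the leading $k_1$ eigenvectors of $\Kb_r^X$ are therefore (up to rotation) those of $\Rb\Rb^\top$. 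First I would quantify, using Assumptions~\ref{strong_factor}--\ref{regular_noise}, that the leading $k_1$ eigenvalues of $\Kb_r^X$ are of order $1/p_2$-separated from the remaining ones while the bulk eigenvalues are $O(1/(p_1 p_2))$ or smaller — this is the ``signal eigengap'' and is exactly what produces the $1/p_1^2$ term in the final rate (the bias from neglecting the noise scatter term $\bSigma_e\otimes\bOmega_e$). Then a Davis--Kahan $\sin\Theta$ argument gives
\[
\frac{1}{p_1}\bigl\|\hat\Rb - \Rb\hat\Hb_R\bigr\|^2 \;\lesssim\; \frac{\|\widehat\Kb_r^X - \Kb_r^X\|^2}{(\text{eigengap})^2} \;+\; (\text{bias})^2,
\]
with $\hat\Hb_R$ the usual rotation matrix built from the eigenvectors and eigenvalues, for which $\hat\Hb_R^\top\Vb_1\hat\Hb_R\to\Ib_{k_1}$ follows by the same reasoning as in \cite{He2022large}.

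The heart of the matter, and the main obstacle, is a sharp bound on the stochastic term $\|\widehat\Kb_r^X-\Kb_r^X\|$. Here I would exploit the two structural facts emphasized right after the definition of the sample matrix Kendall's tau: (i) $\widehat\Kb_r^X$ is a second-order $\RR^{p_1\times p_1}$-valued $U$-statistic with kernel bounded by $1$ in spectral norm, and (ii) the kernel is distribution-free (Lemma~\ref{lemfree}), so in distribution one may replace $\Xb_t$ by a Gaussian matrix with the same separable scatter structure and, crucially, the heavy-tailed radial variables $r_t$ disappear entirely — this is what lets us avoid moment conditions on the idiosyncratic errors. I would then apply a matrix Bernstein / matrix concentration inequality for $U$-statistics (via the Hoeffding decomposition, bounding the first projection $\EE[\,k_{\mathrm{MK}}^r(\Xb_t,\Xb_{t'})\mid \Xb_t\,] - \Kb_r^X$ and the degenerate second-order remainder separately) to obtain a bound of order $\sqrt{\log p_1 / T}$ for the operator-norm deviation in the worst case, but — and this is the delicate point — we need the \emph{effective} rate, which in the factor setting behaves like $\sqrt{1/(Tp_2)}$ because the relevant ``variance proxy'' of the centered kernel restricted to the signal subspace scales with $1/p_2$ (the normalization $\|\Xb_t-\Xb_{t'}\|_F^2 \asymp p_1 p_2$ in the denominator pushes the fluctuations of the rank-$k_1$ part down by a factor $p_2$ relative to the trace-$1$ normalization). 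Getting this refined $1/(Tp_2)$ scaling, rather than the crude $1/T$, is where most of the technical work lies; it likely requires separating $\widehat\Kb_r^X$ into its action on the estimated signal subspace versus its complement and using the trace-$1$ constraint to control the latter.

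Once $\|\widehat\Kb_r^X-\Kb_r^X\| = O_p\bigl(\sqrt{1/(Tp_2)} + 1/p_1\bigr)$ (the $1/p_1$ piece again coming from the deterministic approximation, e.g.\ replacing $\|\Xb_t-\Xb_{t'}\|_F^2$ by its leading behavior and controlling cross terms between signal and noise) is in hand, the spectral-norm bound
\[
\frac{1}{p_1}\bigl\|\hat\Rb-\Rb\hat\Hb_R\bigr\|^2 = O_p\!\left(\frac{1}{Tp_2}+\frac{1}{p_1^2}\right)
\]
drops out of Davis--Kahan. For the Frobenius-norm bound one uses that $\hat\Rb - \Rb\hat\Hb_R$ has rank at most $2k_1$ with $k_1$ fixed, so $\|\cdot\|_F^2 \le 2k_1\|\cdot\|^2$ and the same rate holds. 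The column statements for $\hat\Cb$ follow by the entirely symmetric argument with $\widehat\Kb_c^X$, $\bOmega$, $\Vb_2$, and the roles of $p_1$ and $p_2$ interchanged. I would relegate the $U$-statistic concentration lemma and the deterministic-approximation lemma (the $1/p_1$ bias analysis) to the supplement, since those are the routine-but-lengthy parts, and keep the Davis--Kahan assembly in the main text.
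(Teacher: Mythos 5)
Your high-level skeleton (identify the population eigenspace via Proposition \ref{pro1}, then control the perturbation of the sample Kendall's tau) is reasonable, but the central quantitative claim your argument rests on is false, and the displayed Davis--Kahan inequality cannot deliver the stated rate. The full operator-norm deviation $\|\widehat\Kb_r^X-\Kb_r^X\|$ is \emph{not} $O_p(\sqrt{1/(Tp_2)}+1/p_1)$: the signal block
$\frac{2}{T(T-1)}\sum_{t<s}\Rb(\Fb_t-\Fb_s)\Cb^\top\Cb(\Fb_t-\Fb_s)^\top\Rb^\top/\norm{\Xb_t-\Xb_s}_F^2$
fluctuates around its mean at rate $T^{-1/2}$ in operator norm (the ratio $\norm{\Rb(\Fb_t-\Fb_s)\Cb^\top}_F^2/\norm{\Xb_t-\Xb_s}_F^2$ is an $O(1)$ random variable averaged over $O(T)$ effective samples), and the paper's own Lemma \ref{eigenvalue} only establishes $O_p(T^{-1/2+\epsilon}+p_2^{-1/2})$ for the global deviation. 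Plugging the true order into your bound $\norm{\hat\Rb-\Rb\hat\Hb_R}^2/p_1\lesssim\|\widehat\Kb_r^X-\Kb_r^X\|^2/(\text{gap})^2$ gives only $O_p(1/T)$ --- precisely the slow MPCA rate the theorem is designed to beat. The perturbation inside $\mathrm{span}(\Rb)$ is harmless for the subspace distance, but your argument as written does not exploit that; you would need a $\sin\Theta$ theorem driven by the off-diagonal block $\Rb_\perp^\top(\widehat\Kb_r^X-\Kb_r^X)\Rb$ only, and you do not supply the (nontrivial) bound for that block. Your remark that the refinement ``likely requires separating the action on the signal subspace versus its complement'' identifies the right neighborhood but not the actual obstruction or its resolution. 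A secondary inaccuracy: $\Xb_t$ is not matrix elliptical with a single separable scatter (its scatter is $(\Rb\Rb^\top)\otimes(\Cb\Cb^\top)+\bSigma_e\otimes\bOmega_e$), so Lemma \ref{lemfree} does not literally let you replace $\Xb_t$ by a matrix normal; what survives is only the cancellation of the radial variable $r_t$ in the self-normalized kernel.

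The paper avoids Davis--Kahan altogether. It starts from the exact eigen-equation $\widehat\Kb_r^X\hat\Rb=\hat\Rb\hat\bLambda$, expands the kernel numerator into four blocks $\Rb\Mb_1\Rb^\top+\Mb_2\Rb^\top+\Rb\Mb_3+\Mb_4$, and \emph{defines} the rotation as $\hat\Hb_R=\Mb_1\Rb^\top\hat\Rb\hat\bLambda^{-1}$, so that the entire $T^{-1/2}$-fluctuating signal block is absorbed into $\hat\Hb_R$ and only the factor--noise cross terms and the noise quadratic term contribute to $\hat\Rb-\Rb\hat\Hb_R=(\Mb_2\Rb^\top+\Rb\Mb_3+\Mb_4)\hat\Rb\hat\bLambda^{-1}$. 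The rate then comes from three ingredients you would still need to produce: (i) Frobenius-norm moment bounds $\norm{\Mb_2}_F^2,\norm{\Mb_3}_F^2=O_p(1/(Tp_1p_2)+1/(p_1^2p_2^2))$, obtained by a pairing/permutation reduction of the $U$-statistic to i.i.d.\ averages followed by explicit Gaussian computations after the radial variable cancels (not a Hoeffding-plus-matrix-Bernstein route); (ii) the key observation that the \emph{projected} noise term $\Mb_4\Rb$ is much smaller than $\norm{\Mb_4}\,\norm{\Rb}$ would suggest; and (iii) a self-bounding inequality, since $\Mb_4\hat\Rb$ is controlled by $\norm{\Mb_4\Rb}_F+\norm{\Mb_4}\cdot\norm{\hat\Rb-\Rb\hat\Hb_R}_F$ with $\norm{\Mb_4}=o_p(1)$. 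Your rank-$2k_1$ observation equating the Frobenius and spectral bounds is fine, and the symmetric treatment of $\hat\Cb$ is as in the paper, but without (i)--(iii) the proof does not close.
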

\begin{table*}[!h]
 \begin{center}
  \small
  \addtolength{\tabcolsep}{0pt}
  \caption{A comparison of convergence rates of loading matrices estimators under squared Frobenius norm. ``IE" and ``PE" denote the initial estimator and projection estimator by \cite{Yu2021Projected}, ``$\alpha$-PCA" denotes  the estimator by \cite{fan2021}, MPCA$_F$ denotes the estimator by \cite{Li2022Manifold}. }\label{tab:conver}
   \renewcommand{\arraystretch}{5}
  \scalebox{0.7}{
    \begin{tabular*}{23cm}{ccccccccc}
                 \toprule[1.2pt]
     & &\textbf{MRTS}&\textbf{IE} &\textbf{$\alpha$-PCA}&\textbf{PE}&\textbf{MPCA$_F$}\\
     \hline
 &\textbf{Working model}&Matrix Elliptical Factor Model &Matrix Factor Model &Matrix Factor Model&Matrix Factor Model&Matrix Elliptical Factor model \\
 &$\frac{1}{p_1}\norm{ \hat{\Rb}-\Rb\hat{\Hb}_R}_F^2$&$O_p(\frac{1}{Tp_2}+\frac{1}{p_1^2})$ &$O_p(\frac{1}{Tp_2}+\frac{1}{p_1^2})$& $O_p(\frac{1}{Tp_2}+\frac{1}{p_1})$& $O_p(\frac{1}{Tp_2}+\frac{1}{p_1^2p_2^2}+\frac{1}{T^2p_1^2})$&$O_p(\frac{1}{T}+\frac{1}{\sqrt{p_2}})$\\
 &$\frac{1}{p_2}\norm{ \hat{\Cb}-\Cb\hat{\Hb}_C}_F^2$&$O_p(\frac{1}{Tp_1}+\frac{1}{p_2^2})$ &$O_p(\frac{1}{Tp_1}+\frac{1}{p_2^2})$ & $O_p(\frac{1}{Tp_1}+\frac{1}{p_2})$&$O_p(\frac{1}{Tp_1}+\frac{1}{p_1^2p_2^2}+\frac{1}{T^2p_2^2})$&$O_p(\frac{1}{T}+\frac{1}{\sqrt{p_1}})$\\

    \toprule[1.2pt]

  \end{tabular*}}
 \end{center}
\end{table*}
Assume we vectorize the matrix observations by stacking the columns, model (\ref{eq_joint_ellip_model}) degenerates to the vector elliptical factor model in \cite{He2022large} with loading matrix $(\Cb\otimes \Rb)$. Following the theoretical analysis by \cite{He2022large}, the convergence for the loading matrix under squared Frobenius norm will be $O_p(T^{-1}+(p_1p_2)^{-2})$. Therefore, the matrix elliptical factor model will be more efficient as long as $T\ll\min\{p_1^2,p_2^2\}$, i.e., under ``large dimension versus small sample size" regimes. Table \ref{tab:conver} compares the convergence rates of loading matrices estimators under squared Frobenius norm. The derived rates for MRTS in Theorem \ref{th1} are much faster than those of  the  MPCA$_F$ method by \cite{Li2022Manifold}. Indeed, the derived rates for MRTS match with those of the Initial Estimators (``IE" in Table \ref{tab:conver}) in \cite{Yu2021Projected}, which are slightly better than the convergence rates of $\alpha$-PCA methods proposed by \cite{fan2021}. The rates of the projected estimators (``PE" in Table \ref{tab:conver}) by \cite{Yu2021Projected} are faster under light-tailed cases due to the projection technique. It is  possible to further improve the convergence rates of MRTS  using the projection technique proposed by \cite{Yu2021Projected}. This is beyond the scope of this work, and will be left for future research.

Theorems \ref{th2} and \ref{th3} below show consistency of the factor scores $\widehat \Fb_t$ and the common components $\widehat \Sbb_t=\widehat \Rb\widehat \Fb_t\widehat \Cb^\top$, both under the Frobenius norm.

\begin{theorem}\label{th2}
Under  Assumptions \ref{joint_elliptical}-\ref{regular_noise} with $(k_1, k_2)$ fixed, and $\min \{T,p_1,p_2 \} \rightarrow \infty$, we have
$$\norm{\hat{\Fb}_t-\hat{\Hb}_R^{-1}\Fb_t\hat{\Hb}_C^{-1\top}}_F^2=O_p\bigg(\frac{1}{p_1p_2}\bigg).$$
\end{theorem}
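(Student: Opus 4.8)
The plan is to exploit the closed form $\widehat{\Fb}_t=\widehat{\Rb}^{\top}\Xb_t\widehat{\Cb}/(p_1p_2)$ recorded in Section 2.3. Substituting $\Xb_t=\Rb\Fb_t\Cb^{\top}+\Eb_t$ and regrouping gives
\[
\widehat{\Fb}_t=\Big(\tfrac{1}{p_1}\widehat{\Rb}^{\top}\Rb\Big)\Fb_t\Big(\tfrac{1}{p_2}\Cb^{\top}\widehat{\Cb}\Big)+\tfrac{1}{p_1p_2}\widehat{\Rb}^{\top}\Eb_t\widehat{\Cb}.
\]
I would show that the second (noise) term is the dominant contribution, of order $O_p\big(1/\sqrt{p_1p_2}\big)$ in Frobenius norm, while the first (signal) term reproduces $\widehat{\Hb}_R^{-1}\Fb_t\widehat{\Hb}_C^{-1\top}$ up to a remainder of smaller order; the conclusion then follows from $\|a+b\|_F^2\le 2\|a\|_F^2+2\|b\|_F^2$.

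For the signal term, write $\widehat{\Rb}=\Rb\widehat{\Hb}_R+\bm{\Delta}_R$, $\widehat{\Cb}=\Cb\widehat{\Hb}_C+\bm{\Delta}_C$ with the $\widehat{\Hb}_R,\widehat{\Hb}_C$ of Theorem \ref{th1}; these are $O_p(1)$ with $O_p(1)$ inverses (from $\widehat{\Hb}_R^{\top}\Vb_1\widehat{\Hb}_R\to\Ib_{k_1}$, $\widehat{\Hb}_C^{\top}\Vb_2\widehat{\Hb}_C\to\Ib_{k_2}$ and Assumption \ref{strong_factor}), and $p_1^{-1}\|\bm{\Delta}_R\|_F^2=O_p(\tfrac{1}{Tp_2}+\tfrac{1}{p_1^2})$, $p_2^{-1}\|\bm{\Delta}_C\|_F^2=O_p(\tfrac{1}{Tp_1}+\tfrac{1}{p_2^2})$. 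Expanding $\tfrac{1}{p_1}\widehat{\Rb}^{\top}\Rb=\widehat{\Hb}_R^{\top}\tfrac{\Rb^{\top}\Rb}{p_1}+\tfrac{1}{p_1}\bm{\Delta}_R^{\top}\Rb$ and invoking $\tfrac{\Rb^{\top}\Rb}{p_1}\to\Vb_1$, $\widehat{\Hb}_R^{\top}\Vb_1\widehat{\Hb}_R\to\Ib_{k_1}$ (so $\widehat{\Hb}_R^{\top}\Vb_1\to\widehat{\Hb}_R^{-1}$) together with the eigenvector perturbation expansion of $\bm{\Delta}_R$ from the proof of Theorem \ref{th1}, one obtains $\tfrac{1}{p_1}\widehat{\Rb}^{\top}\Rb=\widehat{\Hb}_R^{-1}+o_p(1/\sqrt{p_1p_2})$, and symmetrically $\tfrac{1}{p_2}\Cb^{\top}\widehat{\Cb}=\widehat{\Hb}_C^{-1\top}+o_p(1/\sqrt{p_1p_2})$; depending on the exact normalization of $\widehat{\Hb}_R,\widehat{\Hb}_C$ chosen in that proof this step may even be exact. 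Since $\Fb_t=\tfrac{r_t}{\|\bZ_t\|_2}\Zb_t^{F}$ with $\Zb_t^F$ a fixed-size standard Gaussian block and $r_t/\|\bZ_t\|_2=O_p(1)$ (because $r_t=O_p(\sqrt{p_1p_2})$ by Assumption \ref{joint_elliptical} and $\|\bZ_t\|_2^2=\chi^2_{k_1k_2+p_1p_2}$ concentrates at $p_1p_2$), we get $\|\Fb_t\|_F=O_p(1)$ without any moment assumption, so the signal term equals $\widehat{\Hb}_R^{-1}\Fb_t\widehat{\Hb}_C^{-1\top}$ up to $o_p(1/\sqrt{p_1p_2})$ in Frobenius norm.

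For the noise term I would use the elliptical representation $\vec(\Eb_t)\overset{d}{=}\rho_t(\bOmega_e^{1/2}\otimes\bSigma_e^{1/2})\Ub_t$, where $\Ub_t$ is uniform on the unit sphere of $\RR^{p_1p_2}$, is independent of $\rho_t\ge 0$, and $\rho_t\le r_t=O_p(\sqrt{p_1p_2})$. After the routine leave-$t$-out replacement of $(\widehat{\Rb},\widehat{\Cb})$ by versions independent of $\Eb_t$ (which perturbs everything by a negligible $O_p(1/T)$ factor, permissible since $\log(\max(p_1,p_2))=o(T)$), conditioning on $(\widehat{\Rb},\widehat{\Cb},\rho_t)$ and using $\EE[\Ub_t\Ub_t^{\top}]=(p_1p_2)^{-1}\Ib_{p_1p_2}$ gives
\[
\EE\big[\|\widehat{\Rb}^{\top}\Eb_t\widehat{\Cb}\|_F^2\mid\widehat{\Rb},\widehat{\Cb},\rho_t\big]=\tfrac{\rho_t^2}{p_1p_2}\,\tr(\widehat{\Cb}^{\top}\bOmega_e\widehat{\Cb})\,\tr(\widehat{\Rb}^{\top}\bSigma_e\widehat{\Rb})\le\tfrac{\rho_t^2}{p_1p_2}\|\bOmega_e\|\,\|\bSigma_e\|\,\|\widehat{\Cb}\|_F^2\,\|\widehat{\Rb}\|_F^2,
\]
which is $O_p(p_1p_2)$ because $\|\widehat{\Rb}\|_F^2=p_1k_1$, $\|\widehat{\Cb}\|_F^2=p_2k_2$, Assumption \ref{regular_noise} bounds $\|\bSigma_e\|,\|\bOmega_e\|$, and $\rho_t^2=O_p(p_1p_2)$. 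Hence $\|\widehat{\Rb}^{\top}\Eb_t\widehat{\Cb}\|_F^2=O_p(p_1p_2)$ and the noise term has squared Frobenius norm $O_p(1/(p_1p_2))$; combining with the signal bound yields $\|\widehat{\Fb}_t-\widehat{\Hb}_R^{-1}\Fb_t\widehat{\Hb}_C^{-1\top}\|_F^2=O_p(1/(p_1p_2))$.

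The main obstacle is the sharp control of the cross terms $p_1^{-1}\bm{\Delta}_R^{\top}\Rb$ and $p_2^{-1}\bm{\Delta}_C^{\top}\Cb$ in the signal step: a crude Cauchy--Schwarz bound only gives $O_p(\sqrt{\tfrac{1}{Tp_2}+\tfrac{1}{p_1^2}})$, which need not be $o(1/\sqrt{p_1p_2})$ for every $(T,p_1,p_2)$. Getting past this requires reusing the expansion from the proof of Theorem \ref{th1}, which writes $\bm{\Delta}_R$ as the noise component of $\widehat{\Kb}_r^X$ acting on $\widehat{\Rb}$ divided by eigenvalue gaps, and exploiting the extra averaging there to show that the projection of $\bm{\Delta}_R$ onto the column space of $\Rb$ decays at essentially the squared loading rate. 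A secondary technicality, again routine under $\log(\max(p_1,p_2))=o(T)$, is justifying the leave-$t$-out decoupling used for the noise term.
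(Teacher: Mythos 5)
Your skeleton agrees with the paper's: both start from $\hat\Fb_t=\hat\Rb^\top\Xb_t\hat\Cb/(p_1p_2)$, split into a signal and a noise part, and show the noise part $\hat\Rb^\top\Eb_t\hat\Cb/(p_1p_2)$ contributes the dominant $O_p(1/(p_1p_2))$. You have also correctly diagnosed that the whole difficulty sits in the cross term $p_1^{-1}\Rb^\top(\hat\Rb-\Rb\hat\Hb_R)$, for which Cauchy--Schwarz only yields $O_p(1/(Tp_2)+1/p_1^2)$ rather than the needed $O_p(1/(p_1p_2))$. But this is exactly the step you do not carry out: the paper devotes a separate lemma to it, going back to the expansion $\hat\Rb-\Rb\hat\Hb_R=(\Mb_2\Rb^\top+\Rb\Mb_3+\Mb_4)\hat\Rb\hat\bLambda^{-1}$ and re-running the permutation/U-statistic moment calculations for the pre-multiplied quantities $\Rb^\top\Mb_2$, $\Mb_3\hat\Rb$ and $\Rb^\top\Mb_4\hat\Rb$ to get $\norm{p_1^{-1}\hat\Rb^\top(\hat\Rb-\Rb\hat\Hb_R)}_F^2=O_p(1/(Tp_1p_2))$. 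Naming the obstacle and the strategy is not the same as closing it, so as written the proposal has a genuine gap at its central step. A secondary issue in the signal term: your route via $\Rb^\top\Rb/p_1\to\Vb_1$ and $\hat\Hb_R^\top\Vb_1\to\hat\Hb_R^{-1}$ only controls the error as $o_p(1)$ with no rate, which cannot produce an $O_p(1/(p_1p_2))$ bound; the paper instead uses the exact identity $\hat\Rb^\top\hat\Rb/p_1=\Ib_{k_1}$ (writing $\Rb=\hat\Rb\hat\Hb_R^{-1}-(\hat\Rb\hat\Hb_R^{-1}-\Rb)$), so that the only remainders are the cross terms above --- the possibility you flag with ``this step may even be exact'' is in fact the only viable one.

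For the noise term your leave-$t$-out decoupling is both unnecessary and not obviously rate-preserving: removing observation $t$ perturbs the U-statistic by $O(1/T)$ in operator norm, and the resulting perturbation of $\hat\Rb^\top\Eb_t\hat\Cb/(p_1p_2)$ is, by a crude bound, only $O_p(1/T^2)$ in squared Frobenius norm, which exceeds $1/(p_1p_2)$ precisely in the ``large dimension versus small sample size'' regime $T\ll\min\{p_1^2,p_2^2\}$ that the paper emphasizes. The paper sidesteps independence entirely: it peels $\hat\Rb=(\hat\Rb-\Rb\hat\Hb_R)+\Rb\hat\Hb_R$ and $\hat\Cb=(\hat\Cb-\Cb\hat\Hb_C)+\Cb\hat\Hb_C$, controls the three terms containing estimation errors by Cauchy--Schwarz with the Theorem \ref{th1} rates, and computes $\norm{\Rb^\top\Eb_t\Cb}_F^2=O_p(p_1p_2)$ directly from the elliptical representation of $\Eb_t$ (your conditional second-moment computation for this deterministic-loading version is essentially the paper's Lemma \ref{I7} and is fine). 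I would recommend adopting that decomposition and supplying a full proof of the sharp bound on $p_1^{-1}\hat\Rb^\top(\hat\Rb-\Rb\hat\Hb_R)$; without it the claimed rate does not follow.
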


\begin{theorem}\label{th3}
Under  Assumptions \ref{joint_elliptical}-\ref{regular_noise} with $(k_1, k_2)$ fixed, and $\min \{T,p_1,p_2 \} \rightarrow \infty$, we have
$$\frac{1}{p_1p_2}\norm{\hat{\Sbb}_t-\Sbb_t}_F^2=O_p\bigg(\frac{1}{p_1p_2}+\frac{1}{Tp_1}+\frac{1}{Tp_2} \bigg) ,$$
where $\Sbb_t =\Rb  \Fb_t  \Cb^\top$.
\end{theorem}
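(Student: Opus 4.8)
The plan is to deduce Theorem \ref{th3} from Theorems \ref{th1} and \ref{th2} by a telescoping identity in the rotated loadings and factors appearing there. Put $\widetilde\Rb=\Rb\widehat\Hb_R$, $\widetilde\Cb=\Cb\widehat\Hb_C$, $\widetilde\Fb_t=\widehat\Hb_R^{-1}\Fb_t\widehat\Hb_C^{-\top}$, so that $\Sbb_t=\Rb\Fb_t\Cb^\top=\widetilde\Rb\widetilde\Fb_t\widetilde\Cb^\top$ whereas $\widehat\Sbb_t=\widehat\Rb\widehat\Fb_t\widehat\Cb^\top$, and therefore
\[
\widehat\Sbb_t-\Sbb_t=(\widehat\Rb-\widetilde\Rb)\widehat\Fb_t\widehat\Cb^\top+\widetilde\Rb(\widehat\Fb_t-\widetilde\Fb_t)\widehat\Cb^\top+\widetilde\Rb\widetilde\Fb_t(\widehat\Cb-\widetilde\Cb)^\top.
\]
Before bounding the three terms I would collect the order bounds feeding into all of them: $\|\widehat\Rb\|=\sqrt{p_1}$, $\|\widehat\Cb\|=\sqrt{p_2}$ (the columns are $\sqrt{p_1}$ resp.\ $\sqrt{p_2}$ times orthonormal eigenvectors); $\|\widehat\Hb_R^{\pm1}\|,\|\widehat\Hb_C^{\pm1}\|=O_p(1)$ from $\widehat\Hb_R^\top\Vb_1\widehat\Hb_R\to\Ib_{k_1}$, $\widehat\Hb_C^\top\Vb_2\widehat\Hb_C\to\Ib_{k_2}$ and Assumption \ref{strong_factor}; and $\|\Fb_t\|_F=O_p(1)$, since $\Fb_t=(r_t/\|\bZ_t\|_2)\Zb_t^F$ by \eqref{eq_joint_ellip_model} with $r_t=O_p(\sqrt{p_1p_2})$, $\|\bZ_t\|_2\asymp\sqrt{p_1p_2}$ and $\Zb_t^F$ a fixed-dimensional Gaussian block, whence $\|\widetilde\Fb_t\|_F=O_p(1)$ and, via Theorem \ref{th2}, $\|\widehat\Fb_t\|_F=O_p(1)$, and also $\|\Sbb_t\|_F\le\|\Rb\|\,\|\Fb_t\|_F\,\|\Cb\|=O_p(\sqrt{p_1p_2})$.

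The middle term is the cleanest: $(p_1p_2)^{-1}\|\widetilde\Rb(\widehat\Fb_t-\widetilde\Fb_t)\widehat\Cb^\top\|_F^2\le(p_1p_2)^{-1}\|\widetilde\Rb\|^2\|\widehat\Fb_t-\widetilde\Fb_t\|_F^2\|\widehat\Cb\|^2=(p_1p_2)^{-1}\cdot O_p(p_1)\cdot O_p((p_1p_2)^{-1})\cdot O_p(p_2)=O_p((p_1p_2)^{-1})$ by Theorem \ref{th2}. For the first term I would substitute $\widehat\Fb_t=\widehat\Rb^\top\Xb_t\widehat\Cb/(p_1p_2)$ and $\Xb_t=\Rb\Fb_t\Cb^\top+\Eb_t$, obtaining a ``signal'' part $(\widehat\Rb-\widetilde\Rb)\widehat\Rb^\top\Rb\Fb_t\Cb^\top\widehat\Cb\widehat\Cb^\top/(p_1p_2)$ and a ``noise'' part $(\widehat\Rb-\widetilde\Rb)\widehat\Rb^\top\Eb_t\widehat\Cb\widehat\Cb^\top/(p_1p_2)$; using $\Rb^\top\Rb/p_1\to\Vb_1$, $\Cb^\top\Cb/p_2\to\Vb_2$, the order bounds above, and Theorem \ref{th1} for $p_1^{-1}\|\widehat\Rb-\Rb\widehat\Hb_R\|_F^2$, both parts reduce (to leading order) to the loading-space error, contributing $O_p((Tp_2)^{-1})$; the third term is symmetric and contributes $O_p((Tp_1)^{-1})$. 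Summing the three bounds with the $O_p((p_1p_2)^{-1})$ from the middle term yields the claimed rate.

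The step I expect to require genuine care — and which the proof of Theorem \ref{th2} must already handle — is the bound $\|\widehat\Rb^\top\Eb_t\widehat\Cb\|_F=O_p(\sqrt{p_1p_2})$, the quantity governing how a single noise matrix propagates through the estimated spaces. The difficulty is that $\widehat\Rb,\widehat\Cb$ are not independent of $\Eb_t$: they are eigenvectors of the sample matrix Kendall's tau $\widehat\Kb_r^X,\widehat\Kb_c^X$, which are $U$-statistics over all $\binom{T}{2}$ pairs and hence depend on $\Eb_t$ through the $T-1$ pairs containing index $t$. The crude bound $\|\widehat\Rb^\top\Eb_t\widehat\Cb\|_F\le\|\widehat\Rb\|\,\|\Eb_t\|\,\|\widehat\Cb\|=O_p(\sqrt{p_1p_2\max(p_1,p_2)})$ is too lossy. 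The remedy is a leave-one-out/decoupling argument: replace $\widehat\Rb,\widehat\Cb$ by the versions $\widehat\Rb^{(-t)},\widehat\Cb^{(-t)}$ built from $\{\Xb_s:s\neq t\}$, which are independent of $\Eb_t$, bound the replacement error by an amount of smaller order using the boundedness of the Kendall's tau kernel under the spectral norm, and then use that $\Rb,\Cb$ are deterministic with $\EE\|\Rb^\top\Eb_t\Cb\|_F^2\asymp p_1p_2$ (from the separable covariance of $\Eb_t$ in Assumption \ref{joint_elliptical}) to obtain the required order conditionally. Carrying out this decoupling cleanly for the $\|\cdot\|_F$-normalized, and hence nonlinear, Kendall's tau kernel is the main obstacle; everything else is routine bookkeeping once Theorems \ref{th1} and \ref{th2} are in hand.
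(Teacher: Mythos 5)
Your telescoping identity is exactly the paper's own decomposition up to regrouping: the paper writes $\hat{\Sbb}_t-\Sbb_t=(\hat\Rb-\Rb\hat{\Hb}_R)\hat\Fb_t(\hat\Cb-\Cb\hat{\Hb}_C)^\top+(\hat\Rb-\Rb\hat{\Hb}_R)\hat\Fb_t\hat{\Hb}_C^\top\Cb^\top+\Rb\hat{\Hb}_R\hat\Fb_t(\hat\Cb-\Cb\hat{\Hb}_C)^\top+\Rb(\hat{\Hb}_R\hat\Fb_t\hat{\Hb}_C^\top-\Fb_t)\Cb^\top$, which is your three-term split with the first term expanded once more, and its entire proof is then ``apply Theorems \ref{th1} and \ref{th2} term by term with Cauchy--Schwarz.'' Your bound on the middle term is precisely that argument. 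Where you diverge is in the outer terms: you re-substitute $\hat\Fb_t=\hat\Rb^\top\Xb_t\hat\Cb/(p_1p_2)$ and then flag a leave-one-out decoupling of $\hat\Rb,\hat\Cb$ from $\Eb_t$ as the main remaining obstacle. That detour is unnecessary and the flagged obstacle is a misdiagnosis. Once Theorem \ref{th2} is in hand you already know $\norm{\hat\Fb_t}_F=O_p(1)$ (as you yourself note), so $(p_1p_2)^{-1}\norm{(\hat\Rb-\Rb\hat{\Hb}_R)\hat\Fb_t\hat\Cb^\top}_F^2\le p_1^{-1}\norm{\hat\Rb-\Rb\hat{\Hb}_R}_F^2\cdot\norm{\hat\Fb_t}^2\cdot O_p(1)=O_p\big(\frac{1}{Tp_2}+\frac{1}{p_1^2}\big)$ directly from Theorem \ref{th1}; no fresh interaction between $\hat\Rb$ and $\Eb_t$ has to be controlled at the level of Theorem \ref{th3}, and symmetrically for the $\hat\Cb$ term.

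Moreover, even at the one place where the quantity $\hat\Rb^\top\Eb_t\hat\Cb$ genuinely must be handled --- inside the proof of Theorem \ref{th2} --- the paper never uses a leave-one-out construction. It decouples by the substitution $\hat\Rb=(\hat\Rb-\Rb\hat{\Hb}_R)+\Rb\hat{\Hb}_R$, $\hat\Cb=(\hat\Cb-\Cb\hat{\Hb}_C)+\Cb\hat{\Hb}_C$: the cross terms are killed by Cauchy--Schwarz because the loading-error rates multiply $\norm{\Eb_t}_F^2=O_p(p_1p_2)$ or $\norm{\Eb_t\Cb}_F^2=O_p(p_1p_2)$, and the leading term $\hat{\Hb}_R^\top\Rb^\top\Eb_t\Cb\hat{\Hb}_C$ involves only the \emph{deterministic} loadings sandwiched between $O_p(1)$ fixed-dimensional matrices, so $\norm{\Rb^\top\Eb_t\Cb}_F^2=O_p(p_1p_2)$ follows from a direct moment computation under the separable elliptical structure (Lemma \ref{I7}). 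So the ``$\norm{\cdot}_F$-normalized nonlinear kernel'' difficulty you anticipate never arises. One small bookkeeping point common to your writeup and worth stating explicitly: by Theorem \ref{th1} the outer terms contribute $O_p\big(\frac{1}{Tp_2}+\frac{1}{p_1^2}\big)+O_p\big(\frac{1}{Tp_1}+\frac{1}{p_2^2}\big)$, not merely $O_p\big(\frac{1}{Tp_2}\big)+O_p\big(\frac{1}{Tp_1}\big)$, and the $p_1^{-2}+p_2^{-2}$ part is absorbed into $\frac{1}{p_1p_2}$ only when $p_1\asymp p_2$; the paper elides this in the same way.
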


The convergence rates of factor scores and signal matrices are comparable to those in \cite{He2022large}, \cite{Yu2021Projected} and \cite{Chen2020StatisticalIF}. This is because that as long as the the loadings are estimated accurately, the estimation errors for $\Fb_t$  mainly depend on the idiosyncratic errors $\Eb_t$. Even if under the oracle case that the loading matrices are given, the  estimation error is still of order $(p_1p_2)^{-1}$ under the squared Frobenius norm. The estimation errors from $\hat \Fb_t$ will further  affect the estimation accuracy for the signal matrices $\Sbb_t$.

The following theorem shows  the consistency of the MKER estimators introduced in Section~\ref{sec:4}.
\begin{theorem}\label{k1k2Consistency}
Under Assumptions \ref{joint_elliptical}-\ref{regular_noise} with $k_1, k_2 \geq 1$, we have $\PP(\hat{k}_1=k_1) \rightarrow 1$ and $\PP(\hat{k}_2=k_2) \rightarrow 1$ as $\min \{T,p_1,p_2 \} \rightarrow \infty$.
\end{theorem}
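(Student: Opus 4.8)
It suffices to show $\PP(\hat k_1=k_1)\to1$; the argument for $\hat k_2$ is identical with rows/columns (and $p_1,p_2$) interchanged, and since $\min\{T,p_1,p_2\}\to\infty$ the two statements then hold jointly. Write $\delta:=\delta_1=1/\sqrt{\min(p_2,T^{1-\epsilon})}$ and $\boldsymbol{\Delta}:=\widehat\Kb_r^X-\Kb_r^X$. The plan rests on three ingredients: (i) $\|\boldsymbol{\Delta}\|=O_p(T^{-1/2})$; (ii) $\lambda_{k_1}(\Kb_r^X)\ge c$ for some constant $c>0$; and (iii) $\lambda_{k_1+1}(\Kb_r^X)=O(1/p_1)$. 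Granted these, Weyl's inequality together with the regularizer $c\delta$ forces the regularized ratio $\hat\lambda_j(\widehat\Kb_r^X)/\hat\lambda_{j+1}(\widehat\Kb_r^X)$ to diverge at $j=k_1$ and to stay $O_p(1)$ at every other $j\le k_{\max}$, which is exactly consistency.

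\textbf{Ingredient (i).} The kernel $k_{\text{MK}}^r$ is PSD with unit trace and $\|k_{\text{MK}}^r\|\le1$, so $\EE[(k_{\text{MK}}^r)^2]\preceq\EE[k_{\text{MK}}^r]=\Kb_r^X$; hence the H\'ajek projection $h(\Xb_t)=\EE[k_{\text{MK}}^r(\Xb_t,\Xb')\mid\Xb_t]$ of the second-order U-statistic $\widehat\Kb_r^X$ has centred second moment $\preceq\Kb_r^X$, whose variance proxy has trace at most one and --- because the sign matrix $k_{\text{MK}}^r$ genuinely fluctuates along the leading signal direction --- operator norm of constant order, i.e.\ intrinsic dimension $O(1)$. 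Applying the intrinsic-dimension matrix Bernstein inequality to $\tfrac{2}{T}\sum_t\bigl(h(\Xb_t)-\Kb_r^X\bigr)$, together with the standard $O_p(1/T)$ bound on the degenerate Hoeffding remainder, gives $\|\boldsymbol{\Delta}\|=O_p(T^{-1/2})$. No moment assumption on $r_t$ enters here --- boundedness of the kernel is what delivers robustness --- and since $\delta\ge T^{-1/2}$ this yields $\|\boldsymbol{\Delta}\|=O_p(\delta)$. (A crude $O_p(\sqrt{\log(p_1)/T})$ bound would not suffice under Assumption \ref{joint_elliptical} alone, so the intrinsic-dimension sharpening is essential.)

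\textbf{Ingredients (ii)--(iii): population eigenstructure.} A distribution-free reduction comes first: $k_{\text{MK}}^r$ depends on $\Xb_t-\Xb_{t'}$ only through its Frobenius-normalized direction; $\text{Vec}(\Xb_t-\Xb_{t'})=\bSigma_{\text{Vec}(\Xb_t)}^{1/2}W$ with $W$ spherically symmetric (a difference of two i.i.d.\ spherical vectors); and a spherical direction has the law of a Gaussian direction. Hence, by the reasoning behind Lemma \ref{lemfree} applied to $\text{Vec}(\Xb_t)$, $k_{\text{MK}}^r(\Xb_t,\Xb_{t'})\overset{d}{=}k_{\text{MK}}^r(\Zb_t,\Zb_{t'})$ where $\Zb_t=\Rb\Zb_t^F\Cb^\top+\Zb_t^E$ is a \emph{Gaussian} matrix factor model with $\Zb_t^F\sim\mathcal{MN}(\zero,\Ib_{k_1},\Ib_{k_2})$ and $\Zb_t^E\sim\mathcal{MN}(\zero,\bSigma_e,\bOmega_e)$ \emph{independent}, so $\Kb_r^X=\EE[k_{\text{MK}}^r(\Zb_t,\Zb_{t'})]$. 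Expanding the numerator into signal--signal, signal--noise and noise--noise blocks over the common denominator $D:=\|\Zb_t-\Zb_{t'}\|_F^2$ gives $\Kb_r^X=\Rb\mathbf{M}_R\Rb^\top+\Kb_r^{\text{noi}}+\Kb_r^{\text{cr}}$, with $\mathbf{M}_R$ a $k_1\times k_1$ PSD matrix, $\Kb_r^{\text{noi}}=\EE\bigl[(\Zb_t^E-\Zb_{t'}^E)(\Zb_t^E-\Zb_{t'}^E)^\top/D\bigr]$, and $\Kb_r^{\text{cr}}$ the cross term. Conditioning on $\Zb_t^F-\Zb_{t'}^F$, $D$ equals $\|\Rb(\Zb_t^F-\Zb_{t'}^F)\Cb^\top\|_F^2+\|\Zb_t^E-\Zb_{t'}^E\|_F^2$ plus a mean-zero term of order $\sqrt{p_1p_2}$; by Assumptions \ref{strong_factor}--\ref{regular_noise} both squared norms are of exact order $p_1p_2$ (the second concentrating), so $D\asymp p_1p_2$ with high probability. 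Then $p_1\mathbf{M}_R$ converges to a fixed positive-definite matrix with eigenvalues bounded away from $0$ and $\infty$, so the $k_1$ nonzero eigenvalues of $\Rb\mathbf{M}_R\Rb^\top=(p_1\mathbf{M}_R)^{1/2}(\Rb^\top\Rb/p_1)(p_1\mathbf{M}_R)^{1/2}(1+o(1))$ are all of constant order --- they need not be distinct, since Assumption \ref{strong_factor} is used only to bound them above and below. Since $\EE[(\Zb_t^E-\Zb_{t'}^E)(\Zb_t^E-\Zb_{t'}^E)^\top]=2\,\tr(\bOmega_e)\bSigma_e$ has operator norm $O(p_2)$, conditioning again on $\Zb_t^F-\Zb_{t'}^F$ yields $\|\Kb_r^{\text{noi}}\|=O(p_2/(p_1p_2))=O(1/p_1)$, and a symmetrization argument in $\Zb_t^E-\Zb_{t'}^E$ gives $\|\Kb_r^{\text{cr}}\|=O((p_1p_2)^{-1})$. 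By Weyl, $\lambda_{k_1}(\Kb_r^X)\ge\lambda_{k_1}(\Rb\mathbf{M}_R\Rb^\top)-\|\Kb_r^{\text{noi}}\|-\|\Kb_r^{\text{cr}}\|\ge c-o(1)$ (ingredient (ii)) and $\lambda_j(\Kb_r^X)=O(1/p_1)$ for $k_1<j\le p_1$ (ingredient (iii)).

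\textbf{Conclusion and main obstacle.} By Weyl, $|\lambda_j(\widehat\Kb_r^X)-\lambda_j(\Kb_r^X)|\le\|\boldsymbol{\Delta}\|=O_p(\delta)$ for all $j$. For $1\le j\le k_1-1$, both $\hat\lambda_j(\widehat\Kb_r^X)$ and $\hat\lambda_{j+1}(\widehat\Kb_r^X)$ lie in $[c/2,\,1+c\delta]$ with probability tending to one, so their ratio is $O_p(1)$; for $k_1<j\le k_{\max}$, $\hat\lambda_j(\widehat\Kb_r^X)\le\lambda_{k_1+1}(\Kb_r^X)+O_p(\delta)+c\delta=O_p(\delta)$ while $\hat\lambda_{j+1}(\widehat\Kb_r^X)\ge c\delta$, so again the ratio is $O_p(1)$; at $j=k_1$, $\hat\lambda_{k_1}(\widehat\Kb_r^X)\ge c/2$ while $\hat\lambda_{k_1+1}(\widehat\Kb_r^X)\le O(1/p_1)+O_p(\delta)+c\delta=o_p(1)$, so the ratio tends to infinity in probability. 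Hence with probability tending to one the maximizer over $1\le j\le k_{\max}$ is $j=k_1$, i.e.\ $\PP(\hat k_1=k_1)\to1$. The main obstacle is ingredient (iii) within the population analysis: because the denominator $\|\Xb_t-\Xb_{t'}\|_F^2$ is random and correlated with the signal part, $\Kb_r^X$ is \emph{not} simply proportional to $\Rb\Rb^\top$ (contrary to the noiseless case of Proposition \ref{pro1}), so one must control $\mathbf{M}_R$ precisely enough to keep $\lambda_{k_1}(\Kb_r^X)$ bounded away from zero and, more delicately, drive the leading noise-direction eigenvalue down to $O(1/p_1)$ --- the order at which it is swamped by the regularizer $\delta$, which is what ties the admissible choice of $\delta_1$ to the dimensions.
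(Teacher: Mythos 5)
Your overall strategy is the same as the paper's: the theorem is, in the paper, a two-line consequence of its Lemma \ref{eigenvalue}, which establishes exactly your three ingredients (spiked population eigenstructure of $\Kb_r^X$, operator-norm concentration of $\widehat\Kb_r^X$, and then Weyl plus the regularized eigenvalue ratio). Where you differ is in how the key lemma is executed. For the concentration step, the paper does not use the H\'ajek projection: it pairs the observations via a permutation average to reduce the U-statistic to an i.i.d.\ sum, splits the summand $\Kb_t$ into a rank-$k_1$ signal block (bounded in norm by a constant, handled by matrix Bernstein with dimension factor $k_1$, giving $O_p(T^{-1/2+\epsilon})$) and noise/cross blocks with $\EE\|\Kb_t^j\|^2=O(p_2^{-1})$ (giving $O_p(p_2^{-1/2})$ under $\log p_1=o(T)$); the resulting bound $O_p(T^{-1/2+\epsilon}+p_2^{-1/2})$ is tailored to match $\delta_1$, whereas your uniform $O_p(T^{-1/2})$ rests on the unproved assertion that the matrix variance of the projected kernel has operator norm of constant order per summand --- plausible, but it is precisely the content you would need to verify, and the degenerate-remainder bound in operator norm also deserves a line (it follows from $\EE\|k^r_{\text{MK}}\|_F^2\le 1$). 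For the population step your decomposition mirrors the paper's, but two of your rate claims are not right as stated: the cross term is \emph{not} killed by symmetrizing $\Zb_t^E-\Zb_{t'}^E$, because the denominator $\|\Rb\Delta\Fb\Cb^\top+\Delta\Eb\|_F^2$ contains the sign-odd trace term $2\tr(\Rb\Delta\Fb\Cb^\top\Delta\Eb^\top)$ and is therefore not invariant under $\Delta\Eb\mapsto-\Delta\Eb$; the paper instead expands $1/D$ around the symmetrized denominator and bounds the resulting correction by Cauchy--Schwarz, obtaining $O(p_2^{-1/2})$ rather than your $O((p_1p_2)^{-1})$, and the same correction enters the noise block, so its norm is $O(1/p_1+(p_1p_2)^{-1/2})$ rather than a clean $O(1/p_1)$. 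None of this changes the conclusion --- all these terms are still dominated by the constant-order signal gap and absorbed by the regularizer at the level of rigor the paper itself operates at --- but your stated rates for ingredients (i) and (iii) are sharper than what your arguments actually deliver, and the final assembly only needs the weaker bounds the paper proves.
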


\section{Simulation Study}
\subsection{Data Generation}
We first describe the data generation mechanism of the synthetic dataset to assess the performance of the proposed \textbf{M}atrix-\textbf{R}obust-\textbf{T}wo-\textbf{S}tep (MRTS) method compared with some state-of-the-art methods.

We use similar data-generating models as in \cite{He2021Matrix}. We set $k_{1}=k_{2}=3$, draw the entries of $\Rb$ and $\Cb$ independently from the uniform distribution $\mathcal{U}(-1,1)$, and let
	 \[
\begin{array}{cccc}
	 \text{Vec}(\mathbf{F}_t) = \phi \times \text{Vec}(\mathbf{F}_{t-1}) + \sqrt{1-\phi^2} \times \text{Vec}(\mathbf{\bepsilon}_t), \\
	 \text{Vec}(\mathbf{E}_t) = \psi \times \text{Vec}(\mathbf{E}_{t-1}) + \sqrt{1-\psi^2} \times \text{Vec}(\mathbf{U}_t),
\end{array}
	 \]
where $\phi$ and $\psi$ controls the temporal
correlations, $\text{Vec}(\mathbf{\bepsilon}_t)$ and $\text{Vec}(\Ub_t)$ are jointly generated from elliptical distributions. Before we give the data generating scenarios, we first review the matrix normal distribution and matrix $t$-distribution. In detail, when a random matrix $\mathbf{U}_t$ is from a matrix normal distribution $ \mathcal{MN} (\mathbf{0},\mathbf{U}_E,\mathbf{V}_E) $, then Vec($\mathbf{U}_t$) ${\sim}$ $\mathcal{N}$($\mathbf{0}$, $\mathbf{V}_E$ $\otimes$ $\mathbf{U}_E$). If $\mathbf{U}_t$ is from a matrix $t$-distribution $t_{p_1,p_2} (\nu,\mathbf{M},\mathbf{U}_E,\mathbf{V}_E) $, it has probability density function
	 \[f(\mathbf{U}_t;\nu,\mathbf{M},\mathbf{U}_E,\mathbf{V}_E)
	 =K \times \Big\vert \mathbf{I}_{p_1} \mathbf{U}_E^{-1} (\mathbf{U}_t -\mathbf{M}) \mathbf{V}_E^{-1} (\mathbf{U}_t -\mathbf{M})^\top \Big\vert^{-\dfrac{\nu+p_1+p_2-1}{2}},
	 \]
where $K$ is the regularization parameter. In our simulation study, we set $\Mb=\zero$ and let $\mathbf{U}_E$ and $\mathbf{V}_E$ be matrices with ones on the diagonal, and with $ {1}/{p_1} $ and $ {1}/{p_2} $ as off-diagonal entries, respectively.

\subsection{Estimation error for loading spaces}

In this section, we compare the MRTS method with the RMFA method by \cite{He2021Matrix}, the $\alpha$-PCA ($\alpha=0$) method by \cite{fan2021}, the PE method by \cite{Yu2021Projected} and the MPCA$_{F}$ by \cite{Li2022Manifold}. We consider the following two scenarios:

 \textbf{Scenario A}: $\phi=0$ and $\psi=0$, $T \in \{20,50,100\},p_1=p_2 \in \{ 20,50 \}$, $\big(\text{Vec}(\mathbf{\bepsilon}_t)^\top,\text{Vec}(\Ub_t)^\top\big)^\top$ are generated in the following ways:
     (i)  $\big(\text{Vec}(\mathbf{\bepsilon}_t)^\top,\text{Vec}(\Ub_t)^\top\big)^\top$  are {\it i.i.d.} jointly Gaussian distributions $\mathcal{N}\Big(\mathbf{0},\text{diag}\big(\Ib_{k_2}\otimes \Ib_{k_1},\Ub_{E}\otimes \Vb_{E}\big)\Big)$;
     (ii) $\big(\text{Vec}(\mathbf{\bepsilon}_t)^\top,\text{Vec}(\Ub_t)^\top\big)^\top$ are {\it i.i.d.}  jointly $t$ distributions $t_{\nu}\Big(\mathbf{0},\text{diag}\big(\Ib_{k_2}\otimes \Ib_{k_1},\Ub_{E}\otimes \Vb_{E}\big)\Big)$, $\nu=3,2,1$.

 \textbf{Scenario B}: $\phi=0.1$ and $\psi=0.1$, $T \in \{20,50\},p_1=p_2 \in \{ 20,50 \}$, $\big(\text{Vec}(\mathbf{\bepsilon}_t)^\top,\text{Vec}(\Ub_t)^\top\big)^\top$  are generated in the same ways as in Scenario A.

We adopt a metric between linear spaces which was also utilized in \cite{He2021Matrix}. For two column-wise orthogonal matrices $(\bQ_1)_{p\times q_1}$ and $(\bQ_2)_{p\times q_2}$, we define
\[
\mathcal{D}(\bQ_1,\bQ_2)=\bigg(1-\frac{1}{\max{(q_1,q_2)}}\mbox{Tr}\Big(\bQ_1\bQ_1^{\top}\bQ_2\bQ_2^{\top}\Big)\bigg)^{1/2}.
\]
By the definition of $\mathcal{D}(\bQ_1,\bQ_2)$, we can easily deduce that its value  lies in the interval $[0,1]$, which measures the distance between the column spaces spanned by  $\bQ_1$ and $\bQ_2$. The column spaces spanned by $\bQ_1$ and $\bQ_2$  are the same when $\mathcal{D}(\bQ_1,\bQ_2)=0$, while    are orthogonal when $\mathcal{D}(\bQ_1,\bQ_2)=1$.  The Gram-Schmidt orthonormalizing transformation can be applied when $\bQ_1$ and $\bQ_2$ are not column-orthogonal matrices.

\begin{table}[!h]
	 	\caption{Averaged estimation errors and standard errors of $\mathcal{D} (\hat{\mathbf{R}}, \mathbf{R})$ for Scenario A under joint normal distribution and joint $t$ distribution over 500 replications.}
	 	 \label{tab:main1}\renewcommand{\arraystretch}{1} \centering
	 	\selectfont
	 	\begin{threeparttable}
	 		 \scalebox{0.9}{\begin{tabular*}{18.5cm}{cccccccccccccccccccc}
	 				\toprule[2pt]
	 			    &\multirow{1}{*}{Evaluation}   &\multirow{1}{*}{$T$}
	 				&\multirow{1}{*}{$p_1$}        &\multirow{1}{*}{$p_2$}
                    &\multirow{1}{*}{MRTS}          &\multirow{1}{*}{RMFA}
                    &\multirow{1}{*}{$\alpha$-PCA} &\multirow{1}{*}{PE}
                    &\multirow{1}{*}{MPCA$_{F}$}   \cr
	 			    \cmidrule(lr){8-11} \\
	 				\midrule[1pt]
	 				
	 				 &&&& \multicolumn{5}{c}{\multirow{1}{*}{\textbf{Normal Distribution}}}\\
	 \cmidrule(lr){3-12}
	 				&$\mathcal{D} (\hat{\mathbf{R}}, \mathbf{R})$ &20 &20  &20 &0.1189(0.0363)&0.0922(0.0162)&0.1127(0.0352)&0.0927(0.0166)&0.1407(0.0223)\\
	 				                                           &  &   &50  &50  & 0.0622(0.0075)  &0.0569(0.0059) &0.0596(0.0070) &0.0568(0.0060) &0.0985(0.0093)  \\

   \cmidrule(lr){3-12}
	 				&$\mathcal{D} (\hat{\mathbf{R}}, \mathbf{R})$ &50 &20  &20  &0.0811(0.0227)  &0.0574(0.0089) &0.0774(0.0215) &0.0577(0.0090) &0.0868(0.0121)   \\
	 				                                           &  &   &50  &50  &0.0384(0.0041)  &0.0351(0.0032) &0.0374(0.0039) &0.0350(0.0032) &0.0614(0.0050)  \\
	 				\cmidrule(lr){3-12}
	 				&$\mathcal{D} (\hat{\mathbf{R}}, \mathbf{R})$ &100 &20  &20 &0.0654(0.0208)& 0.0405(0.0065)& 0.0619(0.0202)& 0.0406(0.0067)& 0.0609(0.0088)\\
	 				                                           &  &   &50  &50  &0.0279(0.0032)& 0.0247(0.0021)& 0.0271(0.0031)& 0.0246(0.0022)& 0.0430(0.0031)   \\
	 				
                     \hline
	 				 &&&& \multicolumn{5}{c}{\multirow{1}{*}{\textbf{$t_1$ Distribution}}}\\
  \cmidrule(lr){3-12}
	 				&$\mathcal{D} (\hat{\mathbf{R}}, \mathbf{R})$ &20 &20  &20  &0.1300(0.0357)&0.2268(0.1428)&0.4307(0.1770)&0.4239(0.1893)&0.1395(0.0213)\\
	 				                                           &  &   &50  &50  &0.0688(0.0080)  &0.1336(0.1010) &0.2870(0.1799) &0.2817(0.1866) &0.0992(0.0094)  \\

   \cmidrule(lr){3-12}
	 				&$\mathcal{D} (\hat{\mathbf{R}}, \mathbf{R})$ &50 &20  &20  &0.0880(0.0249)  &0.1881(0.1446) &0.4173(0.1926) &0.4041(0.2075) &0.0861(0.0117)  \\
	 				                                           &  &   &50  &50  &0.0427(0.0045)  &0.1068(0.0814) &0.2672(0.1749) &0.2642(0.1828) &0.0611(0.0047)  \\
	 				\cmidrule(lr){3-12}
	 				&$\mathcal{D} (\hat{\mathbf{R}}, \mathbf{R})$ &100 &20  &20 & 0.0693(0.0239) &0.1781(0.1475)& 0.4319(0.1912)& 0.4251(0.2056)& 0.0609(0.0088)  \\
	 				                                           &  &   &50  &50  &0.0309(0.0037)& 0.0946(0.0846)& 0.2656(0.1723)& 0.2627(0.1797)& 0.0430(0.0033)   \\
	 				
                     \hline
	 				&&&& \multicolumn{5}{c}{\multirow{1}{*}{\textbf{$t_2$ Distribution}}}\\
	 				\cmidrule(lr){3-12}
	 				&$\mathcal{D} (\hat{\mathbf{R}}, \mathbf{R})$ &20 &20  &20  &0.1257(0.0349)&0.1277(0.0517)&0.2645(0.1507)&0.2429(0.1592
                     )&0.1386(0.0208)\\
	 				                                           &  &   &50  &50  &0.0669(0.0076)  &0.0783(0.0179) &0.1472(0.1026) &0.1408(0.1028) &0.0992(0.0087)  \\

   \cmidrule(lr){3-12}
	 				&$\mathcal{D} (\hat{\mathbf{R}}, \mathbf{R})$ &50 &20  &20  &0.0852(0.0231)  &0.0851(0.0256) &0.2125(0.1403) &0.1942(0.1500) &0.0860(0.0118)  \\
	 				                                           &  &   &50  &50  &0.0417(0.0041)  &0.0530(0.0188 ) & 0.1205(0.0987)&0.1158(0.1009) &0.0611(0.0047)  \\
	 				\cmidrule(lr){3-12}
	 				&$\mathcal{D} (\hat{\mathbf{R}}, \mathbf{R})$ &100 &20  &20 &0.0706(0.0249) &0.0624(0.0193)& 0.1944(0.1321)& 0.1712(0.1380) & 0.0613(0.0095)   \\
	 				                                           &  &   &50  &50  &0.0302(0.0036)& 0.0402(0.0238)& 0.1011(0.0840)& 0.0966(0.0860)& 0.0429(0.0033)   \\
	 				
                     \hline
                     &&&& \multicolumn{5}{c}{\multirow{2}{*}{\textbf{$t_3$ Distribution}}}\\
                     \cmidrule(lr){3-12}
	 				&$\mathcal{D} (\hat{\mathbf{R}}, \mathbf{R})$ &20 &20  &20  &0.1272(0.0391)&0.1117(0.0229)&0.1935(0.0978)&0.1679(0.0946)&0.1410(0.0226)  \\
	 				                                           &  &   &50  &50  &0.0657(0.0083)  &0.0676(0.0100) &0.0974(0.0372) &0.0926(0.0365) & 0.0988(0.0092) \\

   \cmidrule(lr){3-12}
	 				&$\mathcal{D} (\hat{\mathbf{R}}, \mathbf{R})$ &50 &20  &20  &0.0840(0.0220)  &0.0711(0.0142) &0.1415(0.0814) &0.1186(0.0753) &0.0868(0.0125)  \\
	 				                                           &  &   &50  &50  &0.0411(0.0046)  &0.0439(0.0063) &0.0752(0.0427) &0.0718(0.0478) &0.0612(0.0049)  \\
	 				\cmidrule(lr){3-12}
	 				&$\mathcal{D} (\hat{\mathbf{R}}, \mathbf{R})$ &100 &20  &20 &0.0686(0.0232)& 0.0507(0.0111) &0.1153(0.0691)& 0.0951(0.0678)& 0.0605(0.0086)   \\
	 				                                           &  &   &50  &50  &0.0296(0.0034)& 0.0312(0.0044)& 0.0601(0.0405)& 0.0571(0.0431)& 0.0429(0.0032)   \\
	 				\bottomrule[2pt]
 				\end{tabular*}}
 			\end{threeparttable}
     \end{table}

\begin{table}[!h]
	 	\caption{Averaged estimation errors and standard errors of $\mathcal{D} (\hat{\mathbf{R}}, \mathbf{R})$ for Scenario B under joint normal distribution and joint $t$ distribution over 500 replications.}
	 	 \label{tab:main2}\renewcommand{\arraystretch}{1} \centering
	 	\selectfont
	 	\begin{threeparttable}
	 		 \scalebox{0.9}{\begin{tabular*}{18.5cm}{cccccccccccccccccccc}
	 				\toprule[2pt]
	 			    &\multirow{1}{*}{Evaluation}   &\multirow{1}{*}{$T$}
	 				&\multirow{1}{*}{$p_1$}        &\multirow{1}{*}{$p_2$}
                    &\multirow{1}{*}{MRTS}          &\multirow{1}{*}{RMFA}
                    &\multirow{1}{*}{$\alpha$-PCA} &\multirow{1}{*}{PE}
                    &\multirow{1}{*}{MPCA$_{F}$}   \cr
	 			    \cmidrule(lr){8-11} \\
	 				\midrule[1pt]
	 				
	 				 &&&& \multicolumn{5}{c}{\multirow{1}{*}{\textbf{Normal Distribution}}}\\
	 \cmidrule(lr){3-12}
	 				&$\mathcal{D} (\hat{\mathbf{R}}, \mathbf{R})$ &20 &20  &20 &0.1194(0.0358)  &0.0930(0.0164) &0.1135(0.0348) &0.0936(0.0168) &0.1412(0.0219)\\
	 				                                           &  &   &50  &50  &0.0627(0.0076)  &0.0574(0.0060) &0.0602(0.0071) &0.0574(0.0060) & 0.0988(0.0095) \\

   \cmidrule(lr){3-12}
	 				&$\mathcal{D} (\hat{\mathbf{R}}, \mathbf{R})$ &50 &20  &20  &0.0816(0.0230)  &0.0580(0.0090) &0.0779(0.0217) &0.0583(0.0091) &0.0872(0.0122)  \\
	 				                                           &  &   &50  &50  &0.0388(0.0041)  &0.0354(0.0032) &0.0377(0.0040) &0.0354(0.0032) &0.0617(0.0049)  \\
	 			\cmidrule(lr){3-12}
	 				&$\mathcal{D} (\hat{\mathbf{R}}, \mathbf{R})$ &100 &20  &20 &0.0656(0.0207)& 0.0409(0.0065)& 0.0621(0.0202)& 0.0410(0.0067)& 0.0612(0.0087)   \\
	 				                                           &  &   &50  &50  &0.0281(0.0033)& 0.0249(0.0022)& 0.0274(0.0031)& 0.0249(0.0022)& 0.0431(0.0031)   \\	
	 				
                     \hline
	 				 &&&& \multicolumn{5}{c}{\multirow{1}{*}{\textbf{$t_1$ Distribution}}}\\
  \cmidrule(lr){3-12}
	 				&$\mathcal{D} (\hat{\mathbf{R}}, \mathbf{R})$ &20 &20  &20  &0.1445(0.0426)  &0.2365(0.1476) &0.4310(0.1766) &0.4242(0.1885) &0.1520(0.0247)\\
	 				                                           &  &   &50  &50  &0.0769(0.0111)  &0.1391(0.1041) &0.2867(0.1795) &0.2813(0.1861) & 0.1061(0.0121) \\

   \cmidrule(lr){3-12}
	 				&$\mathcal{D} (\hat{\mathbf{R}}, \mathbf{R})$ &50 &20  &20  &0.0968(0.0274)  &0.1959(0.1471) &0.4178(0.1927) &0.4049(0.2077) &0.0924(0.0131)  \\
	 				                                           &  &   &50  &50  &0.0484(0.0058)  &0.1123(0.0874) &0.2676(0.1750) &0.2647(0.1830) &0.0651(0.0056)  \\
	 				
	 				\cmidrule(lr){3-12}
	 				&$\mathcal{D} (\hat{\mathbf{R}}, \mathbf{R})$ &100 &20  &20 &0.0759(0.0269)& 0.1857(0.1503)& 0.4320(0.1912)& 0.4252(0.2056)& 0.0658(0.0093)   \\
	 				                                           &  &   &50  &50  & 0.0347(0.0040)& 0.0995(0.0899)& 0.2656(0.1719)& 0.2628(0.1796)& 0.0457(0.0037)  \\
                     \hline
	 				&&&& \multicolumn{5}{c}{\multirow{1}{*}{\textbf{$t_2$ Distribution}}}\\
	 				\cmidrule(lr){3-12}
	 				&$\mathcal{D} (\hat{\mathbf{R}}, \mathbf{R})$ &20 &20  &20  &0.1305(0.0371)  &0.1312(0.0568) &0.2648(0.1503) &0.2435(0.1597) &0.1416(0.0217)\\
	 				                                           &  &   &50  &50  &0.0691(0.0083)  &0.0801(0.0197) &0.1476(0.1027) &0.1412(0.1028) &0.1005(0.0094)  \\

   \cmidrule(lr){3-12}
	 				&$\mathcal{D} (\hat{\mathbf{R}}, \mathbf{R})$ &50 &20  &20  &0.0876(0.0236)  &0.0875(0.0275) &0.2125(0.1398) &0.1946(0.1502) &0.0877(0.0122)  \\
	 				                                           &  &   &50  &50  &0.0432(0.0044)  &0.0545(0.0206) &0.1209(0.0992) &0.1161(0.1013) &0.0622(0.0050)  \\
	 				\cmidrule(lr){3-12}
	 				&$\mathcal{D} (\hat{\mathbf{R}}, \mathbf{R})$ &100 &20  &20 & 0.0721(0.0255)& 0.0643(0.0206)& 0.1946(0.1322)& 0.1712(0.1378)& 0.0624(0.0096)  \\
	 				                                           &  &   &50  &50  &0.0313(0.0037)& 0.0415(0.0255)& 0.1012(0.0840)& 0.0966(0.0860)& 0.0436(0.0034)   \\
	 				
                     \hline
                     &&&& \multicolumn{5}{c}{\multirow{2}{*}{\textbf{ $t_3$ Distribution}}}\\
                     \cmidrule(lr){3-12}
	 				&$\mathcal{D} (\hat{\mathbf{R}}, \mathbf{R})$ &20 &20  &20  &0.1295(0.0401)  &0.1135(0.0237) &0.1941(0.0983) &0.1685(0.0945) &0.1436(0.0233)\\
	 				                                           &  &   &50  &50  &0.0669(0.0086)  &0.0686(0.0104) &0.0978(0.0371) &0.0930(0.0361) &0.0999(0.0097)  \\

   \cmidrule(lr){3-12}
	 				&$\mathcal{D} (\hat{\mathbf{R}}, \mathbf{R})$ &50 &20  &20  &0.0853(0.0227)  &0.0723(0.0150) &0.1420(0.0817) &0.1189(0.0752) &0.0877(0.0128) \\
	 				                                           &  &   &50  &50  &0.0419(0.0047)  &0.0447(0.0068) &0.0755(0.0431) &0.0723(0.0502) &0.0617(0.0051)  \\
	 				\cmidrule(lr){3-12}
	 				&$\mathcal{D} (\hat{\mathbf{R}}, \mathbf{R})$ &100 &20  &20 & 0.0694(0.0230)& 0.0516(0.0117)& 0.1155(0.0689)& 0.0952(0.0677)& 0.0615(0.0084)  \\
	 				                                           &  &   &50  &50  & 0.0302(0.0034)& 0.0318(0.0047)& 0.0602(0.0406)& 0.0572(0.0432)& 0.0434(0.0033)  \\
	 				\bottomrule[2pt]
 				\end{tabular*}}
 			\end{threeparttable}
     \end{table}

Tables \ref{tab:main1} and \ref{tab:main2} show the averaged estimation errors of $\mathcal{D} (\hat{\mathbf{R}}, \mathbf{R})$ with standard errors in parentheses  for joint normal distribution and joint $t$ distribution under Scenarios A and B, respectively. All methods benefit from large dimensions, and for the light-tailed normal distribution, the proposed MRTS method performs comparably with the $\alpha$-PCA method by \cite{fan2021}, PE method  by \cite{Yu2021Projected} and RMFA by \cite{He2021Matrix}, while much better the MPCA by \cite{Li2022Manifold} especially in small $T$ case due to faster convergence rate. What we want to emphasize is that the MRTS method shows great advantage over other methods under heavy-tailed $t$ distributions. This shows that the MRTS performs robustly and much better when the data are heavy-tailed, and performs comparably with others when data are light-tailed. By comparing the results in Table \ref{tab:main1} and Table \ref{tab:main2}, we can see that the proposed MRTS method is not sensitive to the weak temporal dependency of the factors and noises, although we need the independent condition for theoretical analysis.

\subsection{Estimation error for common components}
 In this section, we  compare the performances of the MRTS method with those of the RMFA method, the $\alpha$-PCA method, the PE method and the MPCA$_{F}$ in terms of estimating the common component matrices.
We evaluate the performance of different methods by  the Mean Squared Error, i.e.,
\[
\text{MSE}=\frac{1}{Tp_1p_2}\sum_{t=1}^T\|\hat\Sbb_t-\Sbb_t\|_F^2,
\]
where the $\hat\Sbb_t$ refers to an arbitrary estimate and  $\Sbb_t$ is the true common component matrix at time point $t$.

\begin{table}[!h]
	 	\caption{Mean squared error and its standard under Scenario A over 500 replications.}
	 	 \label{tab:main3}\renewcommand{\arraystretch}{1} \centering
	 	\selectfont
	 	
	 	\begin{threeparttable}
	 		 \scalebox{0.9}{\begin{tabular*}{17cm}{ccccccccccccccccccccccccccccccccccccccccccc}
	 				\toprule[2pt]
	 				&\multirow{2}{*}{Distribution}  &\multirow{2}{*}{$p_1$}
	 				&\multirow{2}{*}{MRTS}  &\multirow{2}{*}{RMFA}
                    &\multirow{2}{*}{$\alpha$-PCA}  &\multirow{2}{*}{PE}
                     &\multirow{2}{*}{MPCA$_{F}$}   \cr
	 				\cmidrule(lr){8-15} \\
	 				\midrule[1pt]
	 				 \multicolumn{8}{c}{\multirow{1}{*}{$T=20,p_2=p_1 $}}\\
	 				\cmidrule(lr){2-17}
	 				&Normal  &20   &0.0453(0.0065) &0.0369(0.0036) &0.0429(0.0060) &0.0371(0.0037) &0.0557(0.0067)  \\
	 				      &  &50  &0.0105(0.0008) &0.0094(0.0007) &0.0099(0.0008) &0.0094(0.0007) & 0.0211(0.0024) \\
	 				
	 				\cmidrule(lr){2-17}
	 				&$t_3$  &20   &0.1385(0.1263) &0.1477(0.2537) &0.3388(0.9222) &0.3224(0.9629) &0.1618(0.1434)\\
	 				&  &50    &0.0304(0.0207) &0.0357(0.0370) &0.0720(0.1233) &0.0704(0.1347) & 0.0580(0.0399) \\
	 			    \hline
	 				\multicolumn{8}{c}{\multirow{1}{*}{$T=50,p_2=p_1 $}}\\
	 				\cmidrule(lr){2-17}
	 				&Normal  &20   &0.0337(0.0042) &0.0283(0.0024) &0.0327(0.0040) &0.0283(0.0024) &0.0355(0.0032)  \\
	 				      &  &50  &0.0064(0.0004) &0.0059(0.0003) &0.0062(0.0004) &0.0059(0.0003) &0.0106(0.0008)  \\
	 				
	 				\cmidrule(lr){2-17}
	 				&$t_3$  &20   &0.1037(0.0803) &0.1009(0.1106) &0.2777(0.9210) &0.2654(0.9586) &0.1059(0.0839)\\
	 				&  &50    &0.0199(0.0147) &0.0243(0.0415) &0.0642(0.2261) &0.0624(0.2367) &0.0315(0.0228)  \\
                    \hline
	 			     \multicolumn{8}{c}{\multirow{1}{*}{$T=100,p_2=p_1 $}}\\
                     \cmidrule(lr){2-17}
                     &Normal  &20   &0.0300(0.0035) &0.0253(0.0017) &0.0292(0.0033) &0.0254(0.0017) &0.0289(0.0019)  \\
	 				      &  &50  &0.0051(0.0003) &0.0047(0.0002) &0.0050(0.0003) &0.0047(0.0002) &0.0071(0.0004)  \\
	 				
	 				\cmidrule(lr){2-17}
	 				&$t_3$  &20   &0.0897(0.0503) &0.0853(0.0906) &0.1964(0.6719) &0.1834(0.7119) &0.0855(0.0509)\\
	 				&  &50    &0.0161(0.0119) &0.0183(0.0246) &0.0515(0.1750) &0.0499(0.1815) &0.0217(0.0153)  \\
	 				\bottomrule[2pt]
	 		\end{tabular*}}
	 	\end{threeparttable}
	 	
	 \end{table}

Table \ref{tab:main3} shows that averaged MSEs with standard errors in parentheses under Scenario A for normal distribution and $t_3$ distribution.
From Table \ref{tab:main3}, we can see that our MRTS method performs very well in all settings. Firstly, the MRTS methods perform comparably with the PE method in the normal case, and  MRTS, RMFA and MPCA$_{F}$ perform better than the other methods in the heavy-tailed case. Especially when $p_{1}$ and $p_{2}$ are large, the advantages of MRTS over the other methods are more obvious.

\subsection{Estimating the numbers of factors}
In this section, we compare the empirical performances of the proposed MKER method with the $\alpha$-PCA based ER method ($\alpha$-PCA-ER) by \cite{fan2021}, the IterER method by \cite{Yu2021Projected}, the IC and ER method based on iTIPUP by \cite{han2022rank} (denote as iTIP-IC and iTIP-EC), the TCorTh by \cite{lam2021rank} and the Rit-ER method by \cite{He2021Matrix} in terms of estimating the pair of factor numbers.
\begin{table}[!h]
	 	\caption{The frequencies of exact estimation and underestimation of the numbers of factors under Scenario A over 500 replications. }
	 	 \label{tab:main4}\renewcommand{\arraystretch}{1} \centering
	 	\selectfont
	 	
	 	\begin{threeparttable}
	 		 \scalebox{0.8}{\begin{tabular*}{20cm}{cccccccccccccccccccccccccccccccccccc}
	 				\toprule[2pt]
	 		   		&\multirow{2}{*}{Distribution}  &\multirow{2}{*}{$p_1$}
                     &\multirow{2}{*}{MKER} & \multirow{2}{*}{Rit-ER}
	 				&\multirow{2}{*}{IterER}  &\multirow{2}{*}{$\alpha$-PCA-ER}
                    &\multirow{2}{*}{iTIP-IC}  &\multirow{2}{*}{iTIP-EC}
                     &\multirow{2}{*}{TCorTh}  \cr
	 				\cmidrule(lr){7-15} \\
	 				\midrule[1pt]
	 				 \multicolumn{10}{c}{\multirow{1}{*}{$T=20,p_2=p_1 $}}\\
	 				\cmidrule(lr){2-17}
	 				&Normal  &20  &0.676(0.046)&0.990(0.000)&0.990(0.000)&0.594(0.070)&0.000(1.000) &0.110(0.386) &0.680(0.048)  \\
	 				      &  &50  &1.000(0.000)&1.000(0.000) &1.000(0.000)&1.000(0.000)&0.000(1.000)&0.174(0.178) &1.000(0.000)  \\

	 				\cmidrule(lr){2-17}
	 				&$t_1$  &20   &0.620(0.066) &0.330(0.256) &0.226(0.470) &0.060(0.774) &0.090(0.370)&0.056(0.538)&0.272(0.380)\\
	 				&  &50    &0.998(0.000)&0.666(0.094) &0.532(0.304)&0.302(0.548)&0.146(0.478)&0.118(0.406)&0.700(0.162)  \\
	 				
	 				\cmidrule(lr){2-17}
                    &$t_2$  &20   &0.638(0.050)&0.706(0.068)&0.634(0.182)&0.228(0.464)&0.018(0.952)&0.078(0.452)&0.386(0.220)\\
	 				&  &50    &1.000(0.000) &0.906(0.002) & 0.880(0.066)&0.680(0.168)&0.012(0.970)&0.094(0.310)&0.928(0.028) \\
	 				\cmidrule(lr){2-17}
	 				&$t_3$  &20   &0.646(0.050)&0.870(0.016)&0.806(0.066)&0.360(0.276)&0.000(1.000)&0.090(0.430)&0.482(0.148)\\
	 				&  &50    &1.000(0.000) &0.906(0.002) &0.880(0.066)&0.680(0.168)&0.012(0.970) &0.094(0.310)&0.928(0.028) \\
	 			    \hline
	 				\multicolumn{10}{c}{\multirow{1}{*}{$T=50,p_2=p_1 $}}\\
	 				\cmidrule(lr){2-17}
	 				&Normal  &20   &0.772(0.014)&1.000(0.000) &1.000(0.000)&0.734(0.032)&0.000(1.000) &0.092(0.276)&0.954(0.000)  \\
	 				      &  &50  &1.000(0.000) &1.000(0.000) &1.000(0.000) &1.000(0.000)&0.000(1.000)&0.168(0.158) &1.000(0.000)  \\

	 				\cmidrule(lr){2-17}
	 				&$t_1$  &20   &0.742(0.012)&0.420(0.186) &0.292(0.466)&0.106(0.740) &0.088(0.194)&0.062(0.552)&0.556(0.054)\\
	 				&  &50    &1.000(0.000)&0.688(0.060) &0.560(0.294)&0.356(0.494) &0.176(0.266)&0.128(0.370)&0.698(0.014) \\
	 				
	 				\cmidrule(lr){2-17}
                    &$t_2$  &20   &0.796(0.020)&0.832(0.022) &0.738(0.124)&0.314(0.338) &0.014(0.946)&0.064(0.468)&0.786(0.014)\\
	 				&  &50    &1.000(0.000)&0.938(0.006) &0.916(0.058)&0.802(0.136) &0.006(0.974)&0.170(0.242)&0.932(0.004)  \\
	 				\cmidrule(lr){2-17}
	 				&$t_3$  &20   &0.760(0.018)&0.956(0.000)&0.914(0.032)&0.530(0.164)&0.000(1.000) &0.090(0.402)&0.876(0.012)\\
	 				&  &50    &1.000(0.000) &0.990(0.000) &0.988(0.006)&0.954(0.030)&0.000(1.000) &0.156(0.204)&0.992(0.000) \\
	 			 \hline
	 				\multicolumn{10}{c}{\multirow{1}{*}{$T=100,p_2=p_1 $}}\\
	 				\cmidrule(lr){2-17}
	 				&Normal  &20   &0.824(0.010)&1.000(0.000)&1.000(0.000)&0.786(0.016)&0.000(1.000)&0.076(0.334)&0.996(0.000)  \\
	 				      &  &50  &1.000(0.000)&1.000(0.000)&1.000(0.000)&1.000(0.000)&0.000(1.000)&0.186(0.130)&1.000(0.000)  \\

	 				\cmidrule(lr){2-17}
	 				&$t_1$  &20   &0.810(0.010)&0.402(0.202)&0.278(0.438)&0.092(0.724)&0.078(0.046)&0.052(0.560)&0.458(0.014)\\
	 				&  &50    &1.000(0.000)&0.702(0.044)&0.568(0.304)&0.348(0.514)&0.196(0.122)&0.134(0.366)&0.468(0.000) \\
	 				
	 				\cmidrule(lr){2-17}
                    &$t_2$  &20   &0.806(0.008)&0.874(0.012)&0.780(0.110)&0.400(0.294)&0.012(0.968)&0.072(0.376)&0.864(0.002)\\
	 				&  &50    &1.000(0.000)&0.964(0.000)&0.938(0.034)&0.852(0.094)&0.010(0.980)&0.166(0.254)&0.900(0.000) \\
	 				\cmidrule(lr){2-17}
	 				&$t_3$  &20   &0.818(0.008)&0.966(0.000)&0.950(0.014)&0.640(0.080)&0.000(1.000)&0.082(0.376)&0.960(0.000)\\
	 				&  &50    &1.000(0.000)&0.992(0.000)&0.992(0.006)&0.968(0.016)&0.000(1.000)&0.196(0.182)&0.980(0.000) \\
	 				\bottomrule[2pt]
	 		\end{tabular*}}
	 	\end{threeparttable}
	 	
	 \end{table}

Table \ref{tab:main4}  presents the frequencies of exact estimation and underestimation over 500 replications under Scenario A by different methods. We set $k_{\max}$ = 8. Under the normal case, we see that the Rit-ER and IterER perform comparably and both perform better than the others. However, as $p_{1}$ and $p_2$ increase, it can be seen that all the methods's performances get better. The proposed MKER method performs robustly and always performs the best for the heavy-tailed $t$ distribution cases. And it can also be seen that as the dimension of $p_1$ and $p_2$ increase, the proportion of exact estimation by MKER has the tendency to converge to 1, which is consistent with our theoretical analysis.

\section{Real Data Examples}
\subsection{COVID-CT dataset}
Coronavirus disease 2019 (COVID-19) is a contagious disease caused by a virus, the severe acute respiratory syndrome coronavirus 2 (SARS-CoV-2). The first known case was identified in Wuhan, China, in December 2019. The disease then spreads worldwide rapidly, leading to the COVID-19 pandemic and causing millions of deaths until now. It is critical to identify
 patients infected with COVID-19 and isolate them. Chest computed tomography (CT) scans can be used to screen the positives from the population.
We use the open-source chest CT dataset called COVID-CT which was ever analyzed in \cite{zhang2022low} and is available from \url{https://github.com/UCSD-AI4H/COVID-CT}.  The dataset is consisted of 2D grayscale images mainly in the format of PNG and JPEG, including 349 COVID-19 positive CT scans and 397 negative CT scans in the early pandemic. Figure \ref{cov:1} illustrates the original CT scans of one COVID-19 positive case (the left panel) and one negative case (the right panel).

\begin{figure}[!h]
  \centering
  \includegraphics[width=11cm, height=4.5cm]{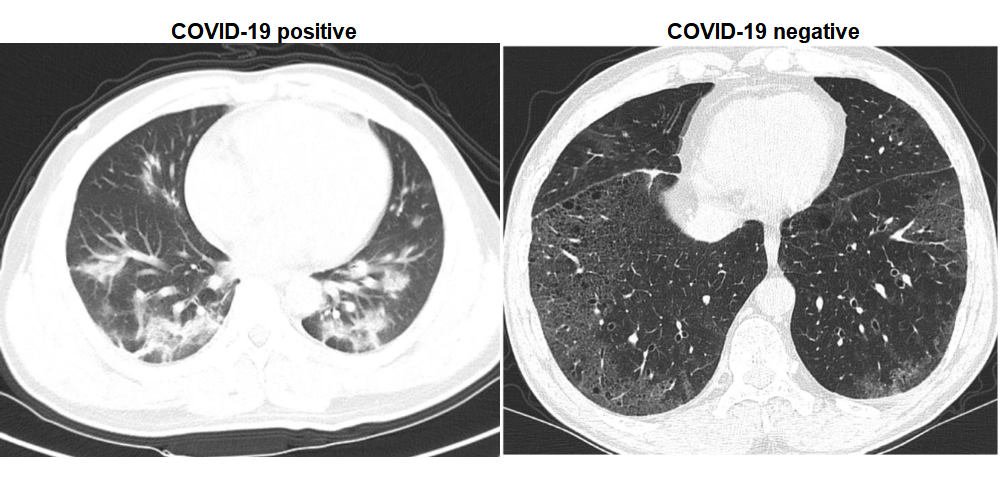}
 \caption{CT images of COVID-19 positive case and negative case.}\label{cov:1}
 \end{figure}

A 2D grayscale image can be represented as a matrix. The color value or gray value of each pixel on a grayscale image, refers to the color depth of a point in a black-and-white image, generally ranging from 0 to 255, with 255 for white and 0 for black. The brighter the original region is, the larger pixel value it has. By contrast, the darker the original region is, the smaller pixel value it has. Each pixel value can be rescaled by dividing 255, therefore a grayscale image is converted into a matrix with all  entries in the interval [0,1].
\begin{figure}[!h]
	  \centering

	   \subfigure[] {
	    \begin{minipage}[htbp]{0.4\textwidth}
	    	\centering
	    	\includegraphics[height=5cm,width=6.9cm]{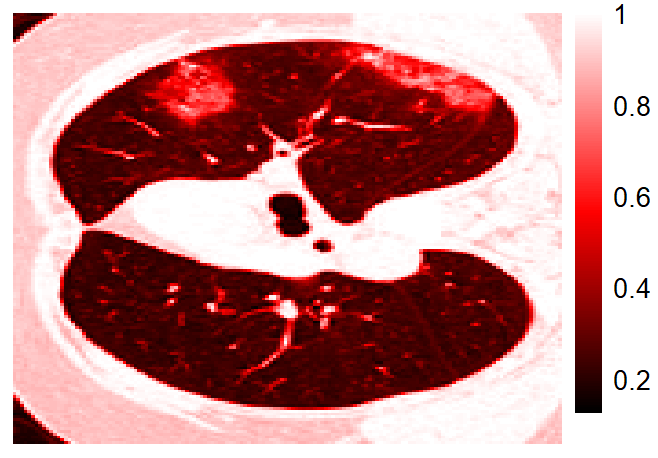}
	    \end{minipage}}
       \subfigure[] {
		\begin{minipage}[htbp]{0.4\textwidth}
			\centering
			\includegraphics[height=5cm,width=6.9cm]{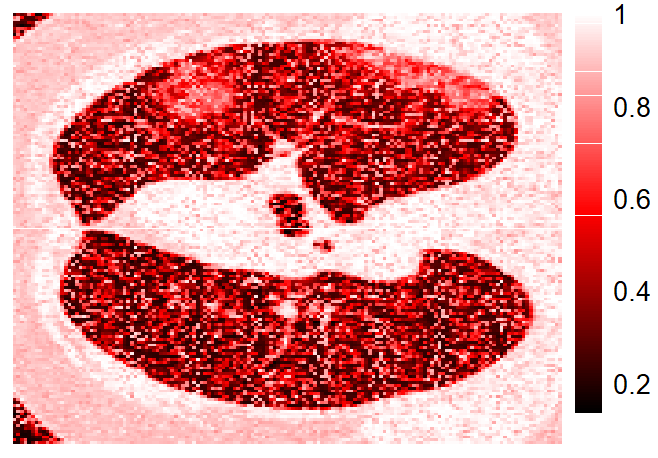}
		\end{minipage} }

      \subfigure[] {
		\begin{minipage}[htbp]{0.4\textwidth}
			\centering
			\includegraphics[height=5cm,width=6.9cm]{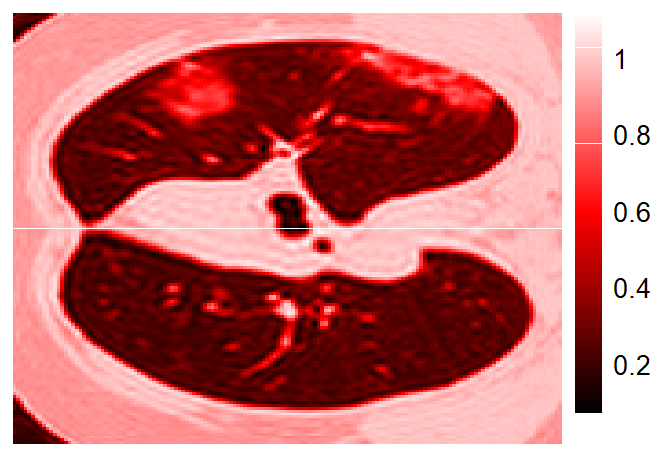}
		\end{minipage} }
     \subfigure[] {
		\begin{minipage}[htbp]{0.4\textwidth}
			\centering
			\includegraphics[height=5cm,width=6.9cm]{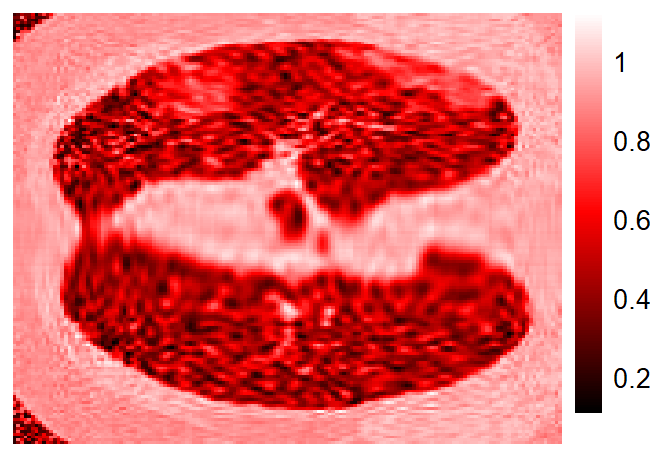}
		\end{minipage} }
		\caption{Four heatmaps of CT images. Subfigures (a) and (b) are down-sampled CT images of one positive case without/with contamination respectively. Subfigures (c) and (d) display the corresponding heatmaps of the reconstructed CT images using the estimated
common components by MRTS. }\label{commonrec}
	\end{figure}
We read these grayscale images by R package EBImage. Then we compress images into uniform dimensions (height,width)$=(150,150)$ by upsampling and downsampling. As a result, we have the dataset $\{Y_i,\Xb_i\}_{1 \leq i \leq 746}$, where $Y_i$ is the class label. Denote $Y_i=1$ as a positive case and $Y_i=0$ as a negative case, and $\Xb_i \in \RR^{150 \times 150}$ represents CT scan image of subject $i$. Then we construct the matrix  factor model and estimate the loading matrices as well as factor score matrices by different methods (e.g., MRTS) by setting $k_1=k_2=3$.  Then we get the dataset $\{Y_i,\vec(\widehat\Fb_i)\}_{1 \leq i \leq 746}$, where $\widehat\Fb_i$ is the estimated factor score matrix for subject $i$. We randomly select 70$\%$ of the samples as the training set and 30$\%$ of the samples as the testing set, and repeat this process 500 times.
In each replication we put the estimated factor scores into Support Vector Machine (SVM) for classification.
\begin{table*}[!h]
 \begin{center}
  \small
  \addtolength{\tabcolsep}{0pt}
  \caption{The means (standard errors) of classification metrics on real COVID-CT data with 500 replicates.}\label{classification-result}
   \renewcommand{\arraystretch}{1.5}
  \scalebox{0.7}{
    \begin{tabular*}{18.5cm}{ccccccc}
                 \toprule[1.2pt]
     &&MRTS&$\alpha$-PCA&PE&RMFA&MPCA$_F$\\

               \toprule[0.8pt]
               \multirow{3}*{Original datset}
               &Precision&\textbf{0.7335(0.0586)}&0.7243(0.0594)&0.7218(0.0615) &0.7224(0.0618)&0.7289(0.0598)  \\
                &AUC&\textbf{0.7299(0.0283)}&0.7242(0.0295)&0.7244(0.0297)&0.7260(0.0296)&0.7277(0.0289)    \\
                &Error Rate&\textbf{0.3646(0.0330)}&0.3750(0.0334)&0.3737(0.0331)&0.3730(0.0335)&0.3676(0.0340) \\
     \hline
     \multirow{3}*{Contaminated dataset}
               &Precision&\textbf{0.7244(0.0554)}&0.7120(0.0596)&0.7085(0.0597)&0.7084(0.0588)&0.7131(0.0580)\\
                &AUC&\textbf{0.7288(0.0287)}&0.7221(0.0293)&0.7213(0.0290)&0.7223(0.0290)&0.7225(0.0293) \\
                &Error Rate&\textbf{0.3653(0.0322)}&0.3769(0.0335)&0.3767(0.0339)&0.3764(0.0340)&0.3741(0.0343)\\
    \toprule[1.2pt]

  \end{tabular*}}
 \end{center}
\end{table*}


Meanwhile, to measure the robustness of different methods, we contaminated the original dataset in the following way.  We  replaced 30$\%$ entries of the matrix $\Xb_i$ with random numbers generated from uniform distribution $\mathcal{U}(0.8,1)$.   Subfigure (a) and Subfigure (b) in Figure \ref{commonrec} are heatmaps of CT images of one positive case without/with contamination respectively. In order to give a direct illustration of the effect of our MRTS method in extracting CT image information, we display the heatmaps of reconstructed CT images using the estimated common components in subfigures (c) and (d) of Figure \ref{commonrec}, from which
 we can see intuitively that the reconstructed image is very similar to the original image thereby indicating that the common components extracted by our method is a good approximation of the original image. The classification results are reported in Table \ref{classification-result}, from which we see that MRTS performs the best in terms of the metrics Precision, AUC and Error rate. MRTS, RMFA and MPCA$_F$ are all robust methods, and outperform the other two non-robust methods for both the original and contaminated  datasets.

\subsection{Portfolio returns dataset}
In this section, we illustrate the empirical performance of the MRTS method by analyzing a heavy-tailed financial portfolio dataset, which was also studied in \cite{wang2019factor}, \cite{Yu2021Projected}, and \cite{He2021Matrix}. The dataset contains monthly returns of 100 portfolios ranging from January
1964 to December 2019, which are  structured into a $10\times 10$ matrix at each time point,  with rows corresponding to 10 levels of market capital size (denoted as S1-S10) and columns corresponding to 10 levels of book-to-equity ratio (denoted as BE1-BE10). The dataset is available at the website \url{http://mba.tuck.dartmouth.edu/pages/faculty/ken.french/data_library.html}. We refer to \cite{He2021Matrix}  for further details on this dataset and the proprocessing procedure (stationarity and missing values).

\cite{He2021Matrix} showed the heavy-tailed property of the returns data  and concluded that robust analysis methods are more appropriate. Moreover, \cite{He2021Matrix} showed that it is reasonable to adopt a matrix factor model for this real dataset by using the test proposed in \cite{He2021Vector}.

For the preprocessed monthly returns dataset, MKER method
suggests that $(k_1,k_2) = (1,1)$ while the Rit-ER suggests that $(k_1,k_2) = (1,2)$. The difference between the estimates by different methods may be explained by the heavy-tailedness of the returns data. As overestimation is better than underestimation, and for better illustration, we take $(k_1,k_2) = (1,2)$.

\begin{table*}[!h]
 	\begin{center}
 		\small
 		\addtolength{\tabcolsep}{1pt}
 		\caption{Loading matrices for Fama--French data set, after varimax rotation and scaling by 30.}\label{tab5}
 		 \renewcommand{\arraystretch}{1}
 		\scalebox{1}{ 		 \begin{tabular*}{14.5cm}{cc|cccccccccc}
 				\toprule[1.2pt]
 				 \multicolumn{12}{l}{Size}\\
 				\toprule[1.2pt]
 				 Method&Factor&S1&S2&S3&S4&S5&S6&S7&S8&S9&S10\\\hline
 \multirow{2}*{MRTS}&1&\cellcolor {Lavender}-41&\cellcolor {Lavender}-42&\cellcolor {Lavender}-41&\cellcolor {Lavender}-38&\cellcolor {Lavender}-32&\cellcolor {Lavender}-23&\cellcolor {Lavender}-14&-5&8&\cellcolor {Lavender}24
 	\\
 	&2&\cellcolor {Lavender}11&6&-3&-6&\cellcolor {Lavender}-15&\cellcolor {Lavender}-26&\cellcolor {Lavender}-35&\cellcolor {Lavender}-45&\cellcolor {Lavender}-53&\cellcolor {Lavender}-44
 	\\\hline
 				 \multirow{2}*{RMFA}&1&\cellcolor {Lavender}-16&\cellcolor {Lavender}-15&\cellcolor {Lavender}-13&\cellcolor {Lavender}-11&-8&-6&-3&0&4&6\\
 				 &2&-6&-2&2&5&8&\cellcolor {Lavender}10&\cellcolor {Lavender}12&\cellcolor {Lavender}14&\cellcolor {Lavender}15&\cellcolor {Lavender}10\\\hline
 				 \multirow{2}*{PE}&1&\cellcolor {Lavender}-16&\cellcolor {Lavender}-15&\cellcolor {Lavender}-12&\cellcolor {Lavender}-10&-8&-5&-3&-1&4&7
 				\\
 				 &2&-6&-1&3&5&8&\cellcolor {Lavender}11&\cellcolor {Lavender}12&\cellcolor {Lavender}13&\cellcolor {Lavender}15&\cellcolor {Lavender}10
 				\\\hline
 	 			 \multirow{2}*{$\alpha$-PCA}&1&\cellcolor {Lavender}-14&\cellcolor {Lavender}-14&\cellcolor {Lavender}-13&\cellcolor {Lavender}-11&\cellcolor {Lavender}-9&-7&-4&-2&3&7
 	\\
 	&2&-4&-2&1&3&6&\cellcolor {Lavender}9&\cellcolor {Lavender}12&\cellcolor {Lavender}13&\cellcolor {Lavender}16&\cellcolor {Lavender}14
 	\\\hline
 \multirow{2}*{MPCA$_{F}$}&1&\cellcolor {Lavender}-47&\cellcolor {Lavender}-46&\cellcolor {Lavender}-40&\cellcolor {Lavender}-34&\cellcolor {Lavender}-29&\cellcolor {Lavender}-22&\cellcolor {Lavender}-12&-5&7&\cellcolor {Lavender}20
 	\\
 	&2&\cellcolor {Lavender}19&\cellcolor {Lavender}9&-6&\cellcolor {Lavender}-11&\cellcolor {Lavender}-20&\cellcolor {Lavender}-27&\cellcolor {Lavender}-36&\cellcolor {Lavender}-44&\cellcolor {Lavender}-51&\cellcolor {Lavender}-38
 \\

 				\bottomrule[1.2pt]		
\multicolumn{12}{l}{Book-to-Equity}\\
 				\toprule[1.2pt]
 				 Method&Factor&BE1&BE2&BE3&BE4&BE5&BE6&BE7&BE8&BE9&BE10\\\hline
 \multirow{2}*{MRTS}&1&\cellcolor {Lavender}17&8&-7&\cellcolor {Lavender}-18&\cellcolor {Lavender}-27&\cellcolor {Lavender}-35&\cellcolor {Lavender}-39&\cellcolor {Lavender}-41&\cellcolor {Lavender}-40&\cellcolor {Lavender}-37\\
 				 &2&\cellcolor {Lavender}53&\cellcolor {Lavender}55&\cellcolor {Lavender}41&\cellcolor {Lavender}30&\cellcolor {Lavender}19&\cellcolor {Lavender}10&0&-3&-3&-4\\\hline
 				 \multirow{2}*{RMFA}&1&6&1&-3&-6&\cellcolor {Lavender}-9&\cellcolor {Lavender}-11&\cellcolor {Lavender}-12&\cellcolor {Lavender}-13&\cellcolor {Lavender}-12&\cellcolor {Lavender}-11
 				\\
 				&2&\cellcolor {Lavender}19&\cellcolor {Lavender}17&\cellcolor {Lavender}12&\cellcolor {Lavender}9&5&3&0&-1&-1&0\\\hline
 				 \multirow{2}*{PE}&1&6&1&-4&-7&\cellcolor {Lavender}-10&\cellcolor {Lavender}-11&\cellcolor {Lavender}-12&\cellcolor {Lavender}-12&\cellcolor {Lavender}-12&\cellcolor {Lavender}-10
\\
&2&\cellcolor {Lavender}20&\cellcolor {Lavender}17&\cellcolor {Lavender}11&8&4&2&0&-1&-1&0
\\\hline
 				 \multirow{2}*{$\alpha$-PCA}&1&6&2&-4&-7&\cellcolor {Lavender}-10&\cellcolor {Lavender}-11&\cellcolor {Lavender}-12&\cellcolor {Lavender}-13&\cellcolor {Lavender}-12&\cellcolor {Lavender}-11
\\
&2&\cellcolor {Lavender}19&\cellcolor {Lavender}18&\cellcolor {Lavender}12&8&4&2&0&-1&-1&-1
\\\hline
\multirow{2}*{MPCA$_{F}$}&1&\cellcolor {Lavender}19&7&\cellcolor {Lavender}-9&\cellcolor {Lavender}-18&\cellcolor {Lavender}-28&\cellcolor {Lavender}-35&\cellcolor {Lavender}-38&\cellcolor {Lavender}-41&\cellcolor {Lavender}-41&\cellcolor {Lavender}-37
\\
&2&\cellcolor {Lavender}58&\cellcolor {Lavender}54&\cellcolor {Lavender}38&\cellcolor {Lavender}29&\cellcolor {Lavender}18&\cellcolor {Lavender}11&1&-2&-3&-4
\\

 				\bottomrule[1.2pt]		
 		\end{tabular*}}		
 	\end{center}
 \end{table*}

The estimated front and back loading matrices after varimax rotation and scaling are reported in Table \ref{tab5}, from which  we observe that the PE, $\alpha$-PCA  lead to very similar estimated loadings while the robust methods MRTS, RMFA, MPCA$_F$ lead to very similar estimated loadings.
To  compare the matrix factor analysis methods, we also adopt a rolling-validation procedure as in \cite{wang2019factor} and \cite{He2021Matrix}. For each year $t$ from 1996 to 2019, we recursively  use the $n$ (bandwidth) half years observations before year $t$ to train the matrix factor model, then we estimate the row/column loading matrices. The loadings are then used to train regression model to derive the  factor scores and the corresponding residuals of the 12 months in the current year. In detail, let $\Yb_t^i$ and $\hat\Yb_t^i$ be the observed and estimated price matrix of month $i$ in year $t$, and let $\bar {\Yb}_t$ be the mean price matrix. Further define
\[
\text{MSE}_t=\frac{1}{12\times10\times10}\sum_{i=1}^{12}\|\hat\Yb_t^i-\Yb_t^i\|_F^2~~\mbox{ and }~~ \rho_t=\frac{\sum_{i=1}^{12}\|\hat\Yb_t^i-\Yb_t^i\|_F^2}{\sum_{i=1}^{12}\|\Yb_t^i-\bar\Yb_t\|_F^2}
\]
as the mean squared pricing error and unexplained proportion of total variances, respectively.  For  year $t$ in the rolling-validation procedure, we measure the variation of loading space  by $v_t:=\mathcal{D}(\hat\Cb_t\otimes \hat\Rb_t,\hat\Cb_{t-1}\otimes \hat\Rb_{t-1})$.

We report the results of the means of MSE, $\rho$ and $v$ by RMFA, PE, ACCE and  $\alpha$-PCA  in Table \ref{tab6}. Firstly, we see that in all combinations considered, the robust methods MRTS, RMFA, MPCA$_F$ methods always have  lower pricing errors than the non-robust methods PE and $\alpha$-PCA. In terms of estimating  factor loading spaces,  MRTS, RMFA,  MPCA$_F$ methods  always perform more stably than the non-robust methods.
Since heavy tailedness is a well-known stylized fact of financial returns and stock-level predictor variables, the robust MRTS method is strongly recommended to cope with matrix-variate financial data.
\begin{table*}[!h]
	\begin{center}
		\small
		\addtolength{\tabcolsep}{0pt}
		\caption{Rolling validation with $k_1=1$ and $k_2=2$ for the Fama--French portfolios, the sample size of training set is $6n$. $\overline{MSE}$, $\bar \rho$, $\bar v$ are the mean pricing error, mean unexplained proportion of total variances and mean variation of the estimated loading space. }\label{tab6}
		 \renewcommand{\arraystretch}{1.5}
		\scalebox{0.7}{ 	
			 \begin{tabular*}{9cm}{ccccccccc}
				\toprule[1.2pt]
           \multirow{2}{*}&$\overline{MSE}$\\
				       &n&MRTS&$\alpha$-PCA&PE&RMFA&MPCA$_F$\\
\hline
&17&0.745& 0.749&0.746&0.744&\textbf{0.741}
\\
&19&0.747&0.751&0.748&0.746&\textbf{0.743}

\\
&21&0.747&0.751&0.748&0.746&\textbf{0.743}
\\

                  \toprule[0.8pt]

				\multirow{2}{*}&$\bar \rho$\\
				       &n&MRTS&$\alpha$-PCA&PE&RMFA&MPCA$_F$\\
\hline
&17&0.710&0.715&0.710&0.706&\textbf{0.701}
\\
&19&0.711&0.717&0.711&0.707&\textbf{0.701}

\\
&21&0.711&0.718&0.711&0.707&\textbf{0.701}
\\
                 \toprule[0.8pt]

				\multirow{2}{*}&$\bar v$\\
				       &n&MRTS&$\alpha$-PCA&PE&RMFA&MPCA$_F$\\

\hline
&17&0.093&0.182&0.094&0.086&\textbf{0.079}
\\
&19&0.079&0.177&0.082&0.073&\textbf{0.067}
\\
&21&0.073&0.172&0.073&0.067&\textbf{0.062}
	
\\
				
				\bottomrule[1.2pt]		
		\end{tabular*}}		
	\end{center}
\end{table*}

\section{Conclusion and Discussion}
In this paper we  propose a new type of Kendall's tau for random matrices, named as matrix Kendall's tau. We show that the row/column matrix Kendall's tau share the same eigenspace with the row/column scatter matrix for matrix-elliptical distribution, with the same descending order of the eigenvalues. The sample version of the row/column matrix Kendall's tau is a U-statistic with a bounded kernel (under operator norm) and enjoys the same distribution-free property as multivariate Kendall's tau. 
Secondly, we propose a Matrix-type Robust Two Step (MRTS) method to estimate the loading and factor spaces for MEFM by matrix Kendall's tau, which achieves faster convergence rates than the Manifold Principal Component Analysis (MPCA) for estimating the loading spaces. We also propose robust and consistent MKER estimators for the pair of factor numbers by exploiting the eigenvalue-ratios of the sample matrix Kendall's tau. Numerical results show that the proposed MRTS/MKER methods outperform the existing methods particularly in heavy-tailed cases. The current work can be extended along several directions. Firstly, the matrix Kendall's tau is a useful tool for matrix elliptical distribution, and provides a robust approach to  perform 2-Dimensional Principal Component Analysis (2D-PCA) and to estimate separable covariance matrices in high dimensions. We leave these extensions as future work. Secondly, for matrix factor model, the projection technique is attractive as it increases the signal-to-noise ratio and leads to more accurate estimators. The projection technique can be incorporated in the current framework. In detail, given observations $\{\Xb_t\}$,  we first obtain a projection matrix $\widehat\Cb$ that consists of the leading $k_2$
eigenvectors of $\widehat \Kb_c^X$, multiplied by $\sqrt{p_2}$. Then we project the original matrix observations to a lower dimensional space, i.e., $\Yb_t=\Xb_t\widehat\Cb/p_2$. We further construct row matrix Kendall's tau $\widehat \Kb_r^Y$ based on $\Yb_t$'s, and obtain a projected estimator $\widetilde \Rb$ by the leading $k_1$
eigenvectors of $\widehat \Kb_c^Y$, multiplied by $\sqrt{p_1}$. The projected estimator $\widetilde \Cb$ can be obtained by a similar procedure applied to matrix observations $\{\Xb_t^\top\}$.
The theoretical analysis is more challenging and will be left for future work.

\section*{Acknowledgements}

He's work is supported by  National Science Foundation (NSF) of  China (12171282,11801316), National Statistical Scientific Research Key Project (2021LZ09), Young Scholars Program of Shandong University, Project funded by
China Postdoctoral Science Foundation (2021M701997).

\bibliographystyle{model2-names}
\bibliography{Ref}

\setlength{\bibsep}{1pt}

\renewcommand{\baselinestretch}{1}
\setcounter{footnote}{0}
\clearpage
\setcounter{page}{1}
\title{
	\begin{center}
		\Large Supplementary Materials for ``Matrix Kendall's tau in High-dimensions: A  Robust Statistic for  Matrix  Factor Model"
	\end{center}
}
\date{}
\begin{center}
		
	\author{
	Yong He\footnotemark[1],~
	Yalin Wang\footnotemark[1]
,~Long Yu\footnotemark[2],~
	Wang Zhou\footnotemark[3],~
	Wen-Xin Zhou\footnotemark[4]
	}
\renewcommand{\thefootnote}{\fnsymbol{footnote}}
\footnotetext[1]{Institute of Financial Studies, Shandong University, China; e-mail:{\tt heyong@sdu.edu.cn}}
\footnotetext[2]{Shanghai University of Finance and Economics, China. e-mail:{\tt : fduyulong@163.com}}
\footnotetext[3]{National University of Singapore, Singapore. e-mail:{\tt wangzhou@nus.edu.sg }}
\footnotetext[4]{University of California, San Diego, USA. e-mail:{\tt wez243@ucsd.edu}}

\end{center}
\maketitle
\appendix
This document provides detailed proofs and additional simulation results to the main paper. The notation $\Xb$ and $\tilde\Xb$ are used repeatedly in the proof, but can represent different matrices in different lemmas.

\section{Proof of Proposition}
\subsection*{Proof of Proposition \ref{pro1}}
\begin{proof} We only show the results for $\Kb_{r}$ and the results for $\Kb_{c}$ can be derived in a similar way.

By Lemma \ref{lem}, it is equivalent to consider $\Kb_{r}=\EE\left(\frac{(\Xb-\Mb)(\Xb-\Mb)^\top}{\norm{\Xb-\Mb}_{F}^{2}}\right)$. Let $\Rb=\big(\bu_{1}(\bSigma),\ldots,\bu_{p}(\bSigma)\big)$ be the eigenvector matrix of $\bSigma$. Then, we have
\begin{equation}
\Rb^\top \frac{\Xb-\Mb}{\norm{\Xb-\Mb}_{F}}=\frac{\Rb^\top(\Xb-\Mb)}{\norm{\Rb^\top(\Xb-\Mb)}_{F}}=\frac{\Zb}{\norm{\Zb}_F},\nonumber
\end{equation}
where $\Zb=\Db\Ub\Bb^\top$ with $\Db=\Big(\text{diag}\big(\sqrt{\lambda_1(\bSigma)},\ldots,\sqrt{\lambda_m(\bSigma)}\big),\mathbf{0}\Big)^\top \in \RR^{p \times m}$. Therefore,
\begin{equation}
\Kb_{r}=\EE\left(\frac{(\Xb-\Mb)(\Xb-\Mb)^\top}{\norm{\Xb-\Mb}_{F}^{2}}\right)=\Rb\EE\left(\frac{\Zb\Zb^\top}{\norm{\Zb}_{F}^2}\right) \Rb^\top. \nonumber
\end{equation}

Next, we prove that $\EE\left(\frac{\Zb\Zb^\top}{\norm{\Zb}_{F}^2}\right)$ is a diagonal matrix.
For any matrix $\Pb=$diag$(\bv)$, where $\bv=(v_{1},\ldots,v_{m})^\top$ satisfies that $v_{j}=$ 1 or $-1$ for $j=1,\ldots,m$, we have
\begin{equation}
\Pb \frac{\Zb}{\norm{\Zb}_{F}}=\frac{\Pb\Zb}{\norm{\Pb\Zb}_{F}} \overset{d}{=} \frac{\Zb}{\norm{\Zb}_{F}} \Rightarrow \EE\left(\frac{\Zb\Zb^\top}{\norm{\Zb}_{F}^2}\right)=\Pb \EE\left(\frac{\Zb\Zb^\top}{\norm{\Zb}_{F}^2}\right) \Pb. \nonumber
\end{equation}
This result holds if and only if $\EE\left(\frac{\Zb\Zb^\top}{\norm{\Zb}_{F}^2}\right)$ is a diagonal matrix. In other words, $\Kb_r$ shares the same eigenvector space as $\bSigma$.

The last step is to show that the diagonals of $\EE\left(\frac{\Zb\Zb^\top}{\norm{\Zb}_{F}^2}\right)$ are decreasing. Recall that $\Zb=\Db\Ub\Bb^\top$, so
\begin{equation}
\EE\left(\frac{\Zb\Zb^\top}{\norm{\Zb}_{F}^2}\right)=\EE\left(\frac{\Db\Ub\Bb^\top\Bb\Ub^\top\Db^\top}{\norm{\Db\Ub\Bb^\top}_{F}^2}\right)=\EE \left( \frac{\Db\Ub\bOmega^{*}\Ub^\top\Db^\top}{\norm{\Db\Ub\Bb^\top}_{F}^2}\right), \nonumber
\end{equation}

\begin{equation}
\Rightarrow\begin{aligned}
\left[\EE\left(\frac{\Zb\Zb^\top}{\norm{\Zb}_{F}^2}\right)\right]_{jj}=\EE\left(\frac{\lambda_{j}(\bSigma)\Ub_{j,\cdot}\bOmega^{*}\Ub_{j,\cdot}^\top}{\lambda_{1}(\bSigma)\Ub_{1,\cdot}\bOmega^{*}\Ub_{1,\cdot}^\top+\cdots+\lambda_{m}(\bSigma^{*})\Ub_{m,\cdot}\bOmega^{*}\Ub_{m,\cdot}^\top}\right).\nonumber
\end{aligned}
\end{equation}
without loss of generality, we assume $\bOmega^{*}=\Bb^\top\Bb$ to be a diagonal matrix.
Then, for any $k \textless s$,
\begin{equation}
\frac{\left[\EE\left(\frac{\Zb\Zb^\top}{\norm{\Zb}_{F}^2}\right)\right]_{kk}}{\left[\EE\left(\frac{\Zb\Zb^\top}{\norm{\Zb}_{F}^2}\right)\right]_{ss}}=\frac{\EE \frac{\lambda_{k}(\bSigma)\Ub_{k,\cdot}\bOmega^{*}\Ub_{k,\cdot}^\top}{\lambda_{k}(\bSigma)E_{1}+\lambda_{s}(\bSigma)E_{2}+E_3}}{\EE \frac{\lambda_{s}(\bSigma)\Ub_{s,\cdot}\bOmega^{*}\Ub_{s,\cdot}^\top}{\lambda_{k}(\bSigma)E_{1}+\lambda_{s}(\bSigma)E_{2}+E_3}} \textless
\frac{\EE \frac{\lambda_{k}(\bSigma)\Ub_{k,\cdot}\bOmega^{*}\Ub_{k,\cdot}^\top}{\lambda_{k}(\bSigma)E_{1}+\lambda_{k}(\bSigma)E_{2}+E_3}}{\EE \frac{\lambda_{s}(\bSigma)\Ub_{s,\cdot}\bOmega^{*}\Ub_{s,\cdot}^\top}{\lambda_{s}(\bSigma)E_{1}+\lambda_{s}(\bSigma)E_{2}+E_3}}=
\frac{\EE \frac{\Ub_{k,\cdot}\bOmega^{*}\Ub_{k,\cdot}^\top}{E_{1}+E_{2}+E_3/\lambda_{k}(\bSigma)}}{\EE \frac{\Ub_{s,\cdot}\bOmega^{*}\Ub_{s,\cdot}^\top}{E_{1}+E_{2}+E_3/\lambda_{s}(\bSigma)}} \textless 1 .\nonumber
\end{equation}
where  $E_1=\Ub_{k,\cdot}\bOmega^{*}\Ub_{k,\cdot}^\top$, $E_2=\Ub_{s,\cdot}\bOmega^{*}\Ub_{s,\cdot}^\top$, $E_3=\sum_{t\notin \left\{k,s\right\}}\lambda_{t}(\bSigma)\Ub_{t,\cdot}\bOmega^{*}\Ub_{t,\cdot}^\top$. This completes the proof.
\end{proof}

\section{Proof of the Main Theorems}

\subsection*{Proof of Theorem \ref{th1}: $\hat{\Rb}$ and $\hat{\Cb}$ converge in Frobenius norm}
\begin{proof}
Define $\hat{\bLambda}$ as the diagonal matrix composed of the leading $k_1$ eigenvalues of $\hat{\Kb}_r$. Lemma \ref{eigenvalue} implies that $\hat{\bLambda}$ is asymptotically invertible and $\norm{\hat{\bLambda}^{-1}}_F=O_p(1)$. Because $\hat{\Rb}/\sqrt{p_1}$ is composed of the leading eigenvectors of $\hat{\Kb}_r$, we have
\begin{equation}\label{Kr definition}
\hat{\Kb}_r\hat{\Rb}=\hat{\Rb}\hat{\bLambda}.
\end{equation}
Expand $\hat{\Kb}_r$ by its definition
\begin{equation}
\begin{aligned}
\hat{\Kb}_r&=\frac{2}{T(T-1)}\sum_{1 \leq t \textless s \leq T}\frac{(\Xb_t-\Xb_s)(\Xb_t-\Xb_s)^\top}{\norm{\Xb_t-\Xb_s}_F^2}\\
&=\frac{2}{T(T-1)}\sum_{1 \leq t \textless s \leq T}\frac{\big(\Rb(\Fb_t-\Fb_s)\Cb^\top+(\Eb_t-\Eb_s)\big) \big(\Rb(\Fb_t-\Fb_s)\Cb^\top+(\Eb_t-\Eb_s)\big)^\top}{\norm{\Xb_t-\Xb_s}_F^2}.\nonumber
\end{aligned}
\end{equation}
Denote
$$
\begin{aligned}
&\Mb_1=\frac{2}{T(T-1)}\sum_{1 \leq t \textless s \leq T}\frac{(\Fb_t-\Fb_s)\Cb^\top\Cb(\Fb_t-\Fb_s)^\top}{\norm{\Xb_t-\Xb_s}_F^2},\\
&\Mb_2=\frac{2}{T(T-1)}\sum_{1 \leq t \textless s \leq T}\frac{(\Eb_t-\Eb_s)\Cb(\Fb_t-\Fb_s)^\top}{\norm{\Xb_t-\Xb_s}_F^2},\\
&\Mb_3=\frac{2}{T(T-1)}\sum_{1 \leq t \textless s \leq T}\frac{(\Fb_t-\Fb_s)\Cb^\top(\Eb_t-\Eb_s)^\top}{\norm{\Xb_t-\Xb_s}_F^2},\\
&\Mb_4=\frac{2}{T(T-1)}\sum_{1 \leq t \textless s \leq T}\frac{(\Eb_t-\Eb_s)(\Eb_t-\Eb_s)^\top}{\norm{\Xb_t-\Xb_s}_F^2}.
\end{aligned}
$$
Then,
$$\hat{\Kb}_r=\Rb\Mb_1\Rb^\top+\Mb_2\Rb^\top+\Rb\Mb_3+\Mb_4.$$
According to equation (\ref{Kr definition}), we have
$$\hat{\Rb}=(\Rb\Mb_1\Rb^\top+\Mb_2\Rb^\top+\Rb\Mb_3+\Mb_4)\hat{\Rb}\hat{\bLambda}^{-1}.$$
Let $\hat{\Hb}_R=\Mb_1\Rb^\top\hat{\Rb}\hat{\bLambda}^{-1}$, so
\begin{equation} \label{R-RH}
\hat{\Rb}-\Rb\hat{\Hb}_R=(\Mb_2\Rb^\top+\Rb\Mb_3+\Mb_4)\hat{\Rb}\hat{\bLambda}^{-1}.
\end{equation}
Lemma \ref{m1} shows that $\|\hat\Hb_R\|_F^2\le O_p(1)$.
Lemmas \ref{m2m3} and \ref{m4} show that
$$ \norm{\Mb_2}_F^2=O_p\Big(\frac{1}{Tp_1p_2}+\frac{1}{p_1^2p_2^2}\Big), $$
$$\norm{\Mb_3}_F^2=O_p\Big(\frac{1}{Tp_1p_2}+\frac{1}{p_1^2p_2^2}\Big),$$
while
\begin{equation}
\frac{1}{p_1}\norm{\Mb_4\hat{\Rb}}_F^2=O_p\Big(\frac{1}{Tp_2}+\frac{1}{p_1^2}\Big)+o_p(1) \times \frac{1}{p_1}\norm{\hat{\Rb}-\Rb\hat{\Hb}_R}_F^2.\nonumber
\end{equation}
Therefore, it is easy to prove that
\begin{equation}
\frac{1}{p_1}\norm{\hat\Rb-\Rb\hat{\Hb}_{R}}_F^2=O_p\Big(\frac{1}{Tp_2}+\frac{1}{p_1^2}\Big).\nonumber
\end{equation}
The proof of $\frac{1}{p_2}\norm{\hat\Cb-\Cb\hat{\Hb}_{C}}_F^2=O_p\Big(\frac{1}{Tp_1}+\frac{1}{p_2p_1^2}\Big)$ is similar. It remains to show the properties of $\hat\Hb_R$ and $\hat \Hb_C$. This is easy because
\[
\begin{split}
	 \hat\Hb_R^\top\Vb_1\hat\Hb_R=&\hat\Hb_R^\top\Big(\Vb_1-\frac{1}{p_1}\Rb^\top\Rb\Big)\hat\Hb_R+\frac{1}{p_1}(\Rb\hat\Hb_R)^\top(\Rb\hat\Hb_R-\hat\Rb)+\frac{1}{p_1}(\Rb\hat\Hb_R-\hat\Rb)^\top\hat\Rb+\frac{1}{p_1}\hat\Rb^\top\hat\Rb\\
	=&o_p(1)+\frac{1}{p_1}\hat\Rb^\top\hat\Rb=o_p(1)+\Ib_{k_1},
\end{split}
\]
where the $o_p(1)$ holds in terms of Frobenius norm. The proof of $\hat\Hb_C$ is similar and omitted.
\end{proof}

\subsection*{Proof of Theorem \ref{th2}: Convergence rate of $\hat{\Fb}_t$}
\begin{proof}
 By the least square optimization, we can get that $$\hat{\Fb}_t=\frac{1}{p_1p_2}\hat{\Rb}^\top\Xb_t\hat{\Cb}=\frac{1}{p_1p_2}\hat{\Rb}^\top\Rb\Fb_t\Cb^\top\hat{\Cb}+\frac{1}{p_1p_2}\hat{\Rb}^\top\Eb_t\hat{\Cb}.$$
Writing $\Rb=\hat{\Rb}\hat{\Hb}_R^{-1}-(\hat{\Rb}\hat{\Hb}_R^{-1}-\Rb)$ and $\Cb=\hat{\Cb}\hat{\Hb}_C^{-1}-(\hat{\Cb}\hat{\Hb}_C^{-1}-\Cb)$,  we obtain
\begin{equation}\label{facDEC}
\begin{aligned}
\hat{\Fb}_t-\hat{\Hb}_R^{-1}\Fb_t\hat{\Hb}_C^{-1\top}&=\frac{1}{p_1p_2}\hat{\Rb}^\top(\hat{\Rb}-\Rb\hat{\Hb}_R)\hat{\Hb}_R^{-1}\Fb_t\hat{\Hb}_C^{-1\top}(\hat{\Cb}-\Cb\hat{\Hb}_C)\hat{\Cb}+\frac{1}{p_1p_2}\hat{\Rb}^\top\Eb_t\hat{\Cb}\\
&-\frac{1}{p_1}\hat{\Rb}^\top(\hat{\Rb}-\Rb\hat{\Hb}_R)\hat{\Hb}_R^{-1}\Fb_t\hat{\Hb}_C^{-1\top}-\frac{1}{p_2}\hat{\Hb}_R^{-1}\Fb_t\hat{\Hb}_C^{-1\top}(\hat{\Cb}-\Cb\hat{\Hb}_C)^\top\hat{\Cb}.
\end{aligned}
\end{equation}
For $\hat{\Rb}^\top\Eb_t\hat{\Cb}$, further write $\hat{\Rb}=(\hat{\Rb}-\Rb\hat{\Hb}_R)+\Rb\hat{\Hb}_R$ and $\hat{\Cb}=(\hat{\Cb}-\Cb\hat{\Hb}_C)+\Cb\hat{\Hb}_C$. Then,
\begin{equation}
\begin{aligned}
\hat{\Fb}_t-\hat{\Hb}_R^{-1}\Fb_t\hat{\Hb}_C^{-1\top}&=\frac{1}{p_1p_2}\hat{\Rb}^\top(\hat{\Rb}-\Rb\hat{\Hb}_R)\hat{\Hb}_R^{-1}\Fb_t\hat{\Hb}_C^{-1\top}(\hat{\Cb}-\Cb\hat{\Hb}_C)\hat{\Cb}\\
&-\frac{1}{p_1}\hat{\Rb}^\top(\hat{\Rb}-\Rb\hat{\Hb}_R)\hat{\Hb}_R^{-1}\Fb_t\hat{\Hb}_C^{-1\top}-\frac{1}{p_2}\hat{\Hb}_R^{-1}\Fb_t\hat{\Hb}_C^{-1\top}(\hat{\Cb}-\Cb\hat{\Hb}_C)^\top\hat{\Cb}\\
&+\frac{1}{p_1p_2}(\hat{\Rb}-\Rb\hat{\Hb}_R)^\top\Eb_t(\hat{\Cb}-\Cb\hat{\Hb}_C)+\frac{1}{p_1p_2}(\hat{\Rb}-\Rb\hat{\Hb}_R)^\top\Eb_t\Cb\hat{\Hb}_C\\
&+\frac{1}{p_1p_2}\hat{\Hb}_R^\top\Rb^\top\Eb_t(\hat{\Cb}-\Cb\hat{\Hb}_C)+\frac{1}{p_1p_2}\hat{\Hb}_R^\top\Rb^\top\Eb_t\Cb\hat{\Hb}_C\\
&=\sum_{i=1}^7I_i.\nonumber
\end{aligned}
\end{equation}
Since $\frac{1}{p_1}\norm{\hat{\Rb}-\Rb\hat{\Hb}_R}_F^2=o_p(1)$ and $\frac{1}{p_2}\norm{\hat{\Cb}-\Cb\hat{\Hb}_C}_F^2=o_p(1)$ by Theorem \ref{th1}, term $I_1$ is dominated by $I_2$ and $I_3$.
By Lemma \ref{I2I3}, we have
$$\norm{I_2}_F^2=O_p\Big(\frac{1}{Tp_1p_2}\Big), \quad \norm{I_3}_F^2=O_p\Big(\frac{1}{Tp_1p_2}\Big).$$
For $I_4$, we have
\[
\|I_4\|_F^2\le \frac{1}{p_1^2p_2^2}\|\hat\Rb-\Rb\hat\Hb_R\|_F^2\|\Eb_t\|_F^2\|\hat\Cb-\Cb\hat\Hb_C\|_F^2=O_p\bigg(\Big(\frac{1}{Tp_2}+\frac{1}{p_1^2}\Big)\Big(\frac{1}{Tp_1}+\frac{1}{p_2^2}\Big)\bigg)=o_p\bigg(\frac{1}{p_1p_2}\bigg).
\]
For $I_5$, we have
\[
\|I_5\|_F^2\le \frac{1}{p_1^2p_2^2}\|\hat\Rb-\Rb\hat\Hb_R\|_F^2\|\Eb_t\Cb\|_F^2\|\hat\Hb_C\|_F^2=O_p\bigg(\frac{1}{Tp_2^2}+\frac{1}{p_1^2p_2}\bigg).
\]
Similarly,
\[
\|I_6\|_F^2=O_p\bigg(\frac{1}{Tp_1^2}+\frac{1}{p_1p_2^2}\bigg)\Rightarrow \|I_5\|_F^2+\|I_6\|_F^2=o_p\bigg(\frac{1}{p_1p_2}\bigg).
\]
On the other hand, Lemma \ref{I7} indicates that $\|I_7\|_F^2=O_p(1/(p_1p_2))$, which dominates in the errors. Then, we conclude that
$$\norm{\hat{\Fb}_t-\hat{\Hb}_R^{-1}\Fb_t\hat{\Hb}_C^{-1\top}}_F^2=O_p\Big(\frac{1}{p_1p_2}\Big),$$
which concludes the theorem.

\end{proof}
\subsection*{Proof of Theorem \ref{th3}: Convergence rate of $\hat{\Sbb}_t$}
By definition,
\begin{equation}
\begin{aligned}
\hat{\Sbb}_t-\Sbb_t=&\hat{\Rb}\hat{\Fb}_t\hat{\Cb}^\top-\Rb\Fb_t\Cb^\top=(\hat{\Rb}-\Rb\hat{\Hb}_R+\Rb\hat{\Hb}_R)\hat{\Fb}_t(\hat{\Cb}-\Cb\hat{\Hb}_C+\Cb\hat{\Hb}_C)^\top-\Rb\Fb_t\Cb^\top\\
=&(\hat{\Rb}-\Rb\hat{\Hb}_R)\hat{\Fb}_t(\hat{\Cb}-\Cb\hat{\Hb}_C)^\top+(\hat{\Rb}-\Rb\hat{\Hb}_R)\hat{\Fb}_t\hat{\Hb}_t^\top\Cb^\top
+\Rb\hat{\Hb}_R\hat{\Fb}_t(\hat{\Cb}-\Cb\hat{\Hb}_C)^\top\\
&+\Rb(\hat{\Hb}_R\hat{\Fb}_t\hat{\Hb}_C-\Fb_t)\Cb^\top.\nonumber
\end{aligned}
\end{equation}
By Theorems \ref{th1} and \ref{th2}, we have
$$\frac{1}{p_1p_2}\norm{\hat{\Sbb}_t-\Sbb_t}_F^2=O_p\Big(\frac{1}{p_1p_2}+\frac{1}{Tp_1}+\frac{1}{Tp_2}\Big).$$

\subsection*{Proof of Theorem \ref{k1k2Consistency}: Consistency of $\hat{k}_1$ and $\hat{k}_2$}
According to Lemma \ref{eigenvalue}, we have $\hat\lambda_j(\hat{\Kb}_r^X)\asymp 1, j \leq k_1$ and $\hat\lambda_j(\hat{\Kb}_r^X)=O_p(\delta_1), j \textgreater k_1$.
It is easy to check that for $j \textless k_1$ or $j \textgreater k_1$, $\hat\lambda_j(\hat{\Kb}_r^X)/\hat\lambda_{j+1}(\hat{\Kb}_r^X)= O(1)$, while for $j=k_1$ we have $\hat\lambda_j(\hat{\Kb}_r^X)/\hat\lambda_{j+1}(\hat{\Kb}_r^X) \rightarrow \infty$. Then $\hat{k}_1$ is consistent. The proof of the consistency of $\hat{k}_2$ can be similarly derived and is thus omitted.

\section{Technical Lemmas}

\begin{lemma}\label{lem}
	Let $\Xb \sim E_{p,q}(\Mb,\bSigma \otimes\bOmega,\psi)$ be a continuous random matrix, $\tilde{\Xb}$ be an independent copy of $\Xb$. Then we have
	\begin{equation}
		\Kb_{r}=\EE\left(\frac{(\Xb-\tilde{\Xb})(\Xb-\tilde{\Xb})^\top}{\norm{ \Xb-\tilde{\Xb}}_{F}^{2}}\right)=\EE\left(\frac{(\Xb-\Mb)(\Xb-\Mb)^\top}{\norm{ \Xb-\Mb}_{F}^{2}}\right).\nonumber
	\end{equation}
	\begin{equation}
		\Kb_{c}=\EE\left(\frac{(\Xb-\tilde{\Xb})^\top(\Xb-\tilde{\Xb})}{\norm{ \Xb-\tilde{\Xb}}_{F}^{2}}\right)=\EE\left(\frac{(\Xb-\Mb)^\top(\Xb-\Mb)}{\norm{ \Xb-\Mb}_{F}^{2}}\right).\nonumber
	\end{equation}
\end{lemma}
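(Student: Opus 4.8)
The plan is to combine the stochastic representation of matrix elliptical distributions from Definition \ref{def:MED} with the scale-invariance of the Kendall's tau kernels. The key observation is that the kernel $(\Xb-\tilde\Xb)(\Xb-\tilde\Xb)^\top/\|\Xb-\tilde\Xb\|_F^2$ is invariant under $\Xb-\tilde\Xb \mapsto c(\Xb-\tilde\Xb)$ for any nonzero scalar $c$, and likewise for the column kernel. Hence it suffices to show that $\Xb-\tilde\Xb$ and $\Xb-\Mb$ agree in distribution up to an independent positive radial rescaling, after which the two kernels become equal in distribution and the two expectations coincide.

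First I would check that both centered matrices are matrix elliptical with location $\mathbf{0}$ and the common scatter $\bSigma\otimes\bOmega$. For $\Xb-\Mb$ this is immediate from Definition \ref{def:MED}. For $\Xb-\tilde\Xb$, since $\tilde\Xb$ is an independent copy, $\varphi_{\Xb-\tilde\Xb}(\Tb)=\varphi_{\Xb}(\Tb)\,\overline{\varphi_{\tilde\Xb}(\Tb)}=\psi^2\big[\tr(\Tb^\top\bSigma\Tb\bOmega)\big]$, so $\Xb-\tilde\Xb\sim E_{p,q}(\mathbf{0},\bSigma\otimes\bOmega,\psi^2)$. Invoking the representation in Definition \ref{def:MED} with the \emph{same} rank factorizations $\bSigma=\Ab\Ab^\top$, $\bOmega=\Bb\Bb^\top$, we may write $\Xb-\Mb\overset{d}{=}r\,\Ab\Ub\Bb^\top$ and $\Xb-\tilde\Xb\overset{d}{=}\tilde r\,\Ab\Ub\Bb^\top$, where in both representations $\mathrm{Vec}(\Ub)$ is uniformly distributed on the unit sphere of $\RR^{mn}$ and $r,\tilde r$ are nonnegative scalars independent of $\Ub$ (their laws differ, but this is irrelevant). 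Continuity of $\Xb$ ensures $r>0$ and $\tilde r>0$ almost surely, so the denominators never vanish.

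Substituting the representations, the radial factors cancel from numerator and denominator:
\begin{equation*}
\frac{(\Xb-\Mb)(\Xb-\Mb)^\top}{\|\Xb-\Mb\|_F^2}\overset{d}{=}\frac{r^2\,\Ab\Ub\Bb^\top\Bb\Ub^\top\Ab^\top}{r^2\,\|\Ab\Ub\Bb^\top\|_F^2}=\frac{\Ab\Ub\Bb^\top\Bb\Ub^\top\Ab^\top}{\|\Ab\Ub\Bb^\top\|_F^2},
\end{equation*}
and the identical computation for $\Xb-\tilde\Xb$ (with $\tilde r$ in place of $r$) yields the same right-hand side in distribution; taking expectations proves the identity for $\Kb_r$. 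Replacing $(\cdot)(\cdot)^\top$ by $(\cdot)^\top(\cdot)$ throughout gives the statement for $\Kb_c$. The only delicate point, and the one I would be most careful about, is the passage to the stochastic representation: it is the ``if and only if'' characterization in Definition \ref{def:MED} that guarantees the \emph{same} uniform-on-the-sphere factor $\Ub$ (with the same $\Ab,\Bb$) serves both $\Xb-\Mb$ and $\Xb-\tilde\Xb$, which is exactly what makes the two kernels equal in law once the independent radial scalars are cancelled; everything else is bookkeeping.
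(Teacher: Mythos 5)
Your proposal is correct and follows essentially the same route as the paper's proof: compute the characteristic function of $\Xb-\tilde\Xb$ to identify it as $E_{p,q}(\mathbf{0},\bSigma\otimes\bOmega,\psi^2)$, invoke the stochastic representation with the same $\Ab,\Ub,\Bb$, and cancel the (almost surely positive) radial factor from numerator and denominator of the scale-invariant kernel. No gaps.
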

\begin{proof}
	By the difinition of the elliptical distribution, $\Xb \sim E_{p,q}(\Mb,\bSigma \otimes\bOmega,\psi)$, $\tilde{\Xb} \sim E_{p,q}(\Mb,\bSigma \otimes\bOmega,\psi)$. We have $\varphi_{\Xb-\tilde{\Xb}}(\Tb )=\psi^2[\tr(\Tb^\top\bSigma\Tb\bOmega)]$, implying that $\Xb-\tilde{\Xb} \sim E_{p,q}(\mathbf{0},\bSigma \otimes\bOmega,\psi^2)$. Hence, there exists a nonnegative random variable $r'$, such that $\Xb-\tilde{\Xb} \overset{d}{=}r'\Ab\Ub\Bb^\top$. Because $\Xb$ is continuous, we have $\PP(r'=0)=0$. Therefore,
	\begin{equation}
		\begin{aligned}
			 \Kb_r&=\EE\left(\frac{(\Xb-\tilde{\Xb})(\Xb-\tilde{\Xb})^\top}{\norm{\Xb-\tilde{\Xb}}_{F}^{2}}\right)=\EE\left(\frac{(r'\Ab\Ub\Bb^\top)(r'\Ab\Ub\Bb^\top)^\top}{\norm {r'\Ab\Ub\Bb^\top}_{F}^2}\right)\\
			 &=\EE\left(\frac{(\Ab\Ub\Bb^\top)(\Ab\Ub\Bb^\top)^\top}{\norm{(\Ab\Ub\Bb^\top)}_{F}^2}\right)=\EE\left(\frac{(r\Ab\Ub\Bb^\top)(r\Ab\Ub\Bb^\top)^\top}{\norm{(r\Ab\Ub\Bb^\top)}_{F}^2}\right)\\
			&=\EE\left(\frac{(\Xb-\Mb)(\Xb-\Mb)^\top}{\norm{\Xb-\Mb}_{F}^{2}}\right).\nonumber
		\end{aligned}
	\end{equation}
	The proof process for $\Kb_{c}$ is similar.
\end{proof}

\begin{lemma}\label{lemfree}
	Let $\Xb \sim E_{p,q}(\Mb,\bSigma \otimes\bOmega,\psi)$ be a continuous random matrix, $\bSigma=\Ab\Ab^\top$ and $\bOmega=\Bb\Bb^\top$. It takes another stochastic representation:
	\begin{equation}
		\Xb \overset{d}{=}\Mb + r\frac{\Zb}{\norm{\Ab^{\dagger}\Zb \Bb^{\dagger}}_{F}}. \nonumber
	\end{equation}
	where $\Zb \sim \mathcal{MN}(\mathbf{0},\bSigma,\bOmega)$, $\Ab^{\dagger}$ and $\Bb^{\dagger}$ are the Moore-Penrose pseudoinverse of $\Ab$ and $\Bb$ respectively.
\end{lemma}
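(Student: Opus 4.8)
The plan is to derive the claimed representation from the stochastic representation in Definition \ref{def:MED} by replacing the uniformly distributed factor $\Ub$ with the normalized version of an isotropic Gaussian matrix. Recall from Definition \ref{def:MED} that $\Xb \overset{d}{=} \Mb + r\Ab\Ub\Bb^\top$, where $\Ub$ is $m \times n$ with $\vec(\Ub)$ uniform on the unit sphere of $\RR^{mn}$ ($m = \text{rank}(\bSigma)$, $n = \text{rank}(\bOmega)$), and $r \ge 0$ is independent of $\Ub$. First I would introduce an $m \times n$ random matrix $\Gb$ with $\vec(\Gb) \sim \mathcal{N}(\zero, \Ib_{mn})$, taken independent of $r$. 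By the rotational invariance of the standard Gaussian, $\Gb/\norm{\Gb}_F$ has $\vec$ uniform on the unit sphere and is independent of $\norm{\Gb}_F$; in particular $\Ub \overset{d}{=} \Gb/\norm{\Gb}_F$, so substituting gives $\Xb \overset{d}{=} \Mb + r\,\Ab\Gb\Bb^\top/\norm{\Gb}_F$.

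Next I would set $\Zb := \Ab\Gb\Bb^\top$ and identify its law: since $\vec(\Zb) = (\Bb\otimes\Ab)\vec(\Gb)$ is centered Gaussian with covariance $(\Bb\otimes\Ab)(\Bb\otimes\Ab)^\top = (\Bb\Bb^\top)\otimes(\Ab\Ab^\top) = \bOmega\otimes\bSigma$, we have $\Zb \sim \mathcal{MN}(\zero,\bSigma,\bOmega)$, which is exactly the (possibly degenerate) matrix-normal law appearing in the statement. The crucial point is then to reconstruct the scalar $\norm{\Gb}_F$ from $\Zb$ alone. Since $\bSigma = \Ab\Ab^\top$ and $\bOmega = \Bb\Bb^\top$ are rank factorizations, $\Ab$ and $\Bb$ have full column rank, so $\Ab^\dagger\Ab = \Ib_m$ and $\Bb^\top(\Bb^\dagger)^\top = (\Bb^\dagger\Bb)^\top = \Ib_n$; hence applying pseudoinverses on both sides of $\Zb = \Ab\Gb\Bb^\top$ yields $\Ab^\dagger\Zb(\Bb^\top)^\dagger = (\Ab^\dagger\Ab)\,\Gb\,(\Bb^\top(\Bb^\dagger)^\top) = \Gb$ almost surely, and therefore $\norm{\Gb}_F = \norm{\Ab^\dagger\Zb\Bb^\dagger}_F$, interpreting the triple product in the statement via $(\Bb^\top)^\dagger = (\Bb^\dagger)^\top$.

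Substituting back gives $\Xb \overset{d}{=} \Mb + r\,\Zb/\norm{\Ab^\dagger\Zb\Bb^\dagger}_F$, with $r$ independent of $\Zb$ (because $\Zb$ is a deterministic function of $\Gb$, which was chosen independent of $r$); continuity of $\Xb$ forces $\PP(r = 0) = 0$ (as already used in Lemma \ref{lem}) and $\PP(\Gb = \zero) = 0$, so the ratio is almost surely well defined, which completes the argument. I expect the only delicate part to be the bookkeeping with pseudoinverses in the rank-deficient setting --- keeping straight which of $\Bb^\dagger$, $(\Bb^\top)^\dagger$, $(\Bb^\dagger)^\top$ appears, and verifying that $\Ab^\dagger\Ab$ and $\Bb^\top(\Bb^\dagger)^\top$ are genuinely the identity rather than mere orthogonal projections --- together with correctly invoking the radial/angular decomposition of the isotropic Gaussian; none of these steps is deep, but a careless version of the argument could slip here.
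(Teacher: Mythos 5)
Your proposal is correct and follows essentially the same route as the paper's own proof: replace the uniform factor $\Ub$ by a normalized isotropic Gaussian $\Gb/\norm{\Gb}_F$, set $\Zb=\Ab\Gb\Bb^\top$, identify its matrix-normal law via $\vec(\Zb)=(\Bb\otimes\Ab)\vec(\Gb)$, and recover $\norm{\Gb}_F$ using $\Ab^\dagger\Ab=\Ib_m$ and the left-inverse property of $\Bb^\dagger$. Your explicit handling of $(\Bb^\top)^\dagger=(\Bb^\dagger)^\top$ and of the almost-sure well-definedness of the ratio is in fact slightly more careful than the paper's notation, but it is the same argument.
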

\begin{proof}
	Let $\Xb=\Mb+r\Ab\Ub\Bb^\top$, $\text{rank}(\bSigma)=\text{rank}(\Ab)=m$, $\text{rank}(\bOmega)=\text{rank}(\Bb)=n$. Let $\Ub=\frac{\bvarepsilon}{\norm{\bvarepsilon}_{F}}$, so that $\text{Vec}(\Ub)=\frac{\text{Vec}(\bvarepsilon)}{\norm{\text{Vec}(\bvarepsilon)}_2}$, where $\text{Vec}(\bvarepsilon)$ is a standard normal vector in $\RR^{mn}$. Note that if $\Ab=\Vb_1\Db\Vb_2^\top$ is the singular value decomposition of $\Ab \in \RR^{p \times m}$ with $\Vb_1 \in \RR^{p \times m}$ and $\Db$,$\Vb_2 \in \RR^{m \times m}$, then $\Ab^{\dagger}=\Vb_{2}\Db^{-1}\Vb_{1}^\top$. Since $\text{rank}(\Ab)=m$, we have $\Ab^{\dagger}\Ab=\Ib_{m}$. $\Bb$ is similar. Let $\Zb=\Ab\bvarepsilon\Bb^\top$,
	\begin{equation}
		\text{Vec}(\Zb)=(\Bb \otimes \Ab)\text{Vec}(\bvarepsilon) \sim \mathcal{N}(\mathbf{0},\bOmega \otimes \bSigma) \Leftrightarrow \Zb \sim \mathcal{MN}(\mathbf{0},\bSigma,\bOmega) \nonumber
	\end{equation}
	then we have $\Xb=\Mb+r\Ab\frac{\bvarepsilon}{\norm{\bvarepsilon}_{F}}\Bb=\Mb+r\frac{\Zb}{\norm{\Ab^{\dagger}\Zb \Bb^{\dagger}}_{F}}$.
	The proof is complete.
\end{proof}

The following Lemma \ref{m2m3} to \ref{m4} establish the stochastic order of the following quantities:
$$
\begin{aligned}
	&\Mb_1=\frac{2}{T(T-1)}\sum_{1 \leq t \textless s \leq T}\frac{(\Fb_t-\Fb_s)\Cb^\top\Cb(\Fb_t-\Fb_s)^\top}{\norm{\Xb_t-\Xb_s}_F^2},\\
	&\Mb_2=\frac{2}{T(T-1)}\sum_{1 \leq t \textless s \leq T}\frac{(\Eb_t-\Eb_s)\Cb(\Fb_t-\Fb_s)^\top}{\norm{\Xb_t-\Xb_s}_F^2},\\
	&\Mb_3=\frac{2}{T(T-1)}\sum_{1 \leq t \textless s \leq T}\frac{(\Fb_t-\Fb_s)\Cb^\top(\Eb_t-\Eb_s)^\top}{\norm{\Xb_t-\Xb_s}_F^2},\\
	&\Mb_4=\frac{2}{T(T-1)}\sum_{1 \leq t \textless s \leq T}\frac{(\Eb_t-\Eb_s)(\Eb_t-\Eb_s)^\top}{\norm{\Xb_t-\Xb_s}_F^2}.
\end{aligned}
$$

\begin{lemma}\label{m2m3}
	Under Assumption \ref{joint_elliptical}-\ref{regular_noise}, as $\min \{T,p_1,p_2 \} \rightarrow \infty$, we have
	$$ \norm{\Mb_2}_F^2=O_p\Big(\frac{1}{Tp_1p_2}+\frac{1}{p_1^2p_2^2}\Big), $$
	$$\norm{\Mb_3}_F^2=O_p\Big(\frac{1}{Tp_1p_2}+\frac{1}{p_1^2p_2^2}\Big).$$
\end{lemma}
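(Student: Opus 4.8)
The plan is to view $\Mb_2$ and $\Mb_3$ as matrix-valued $U$-statistics in the i.i.d.\ ``noise'' vectors $Z_t:=(\vec(\Fb_t)^\top,\vec(\Eb_t)^\top)^\top$ (recall $\Rb,\Cb$ are deterministic) and to control their second moments. Since $\Mb_3=\Mb_2^\top$, it suffices to treat $\Mb_2=\binom{T}{2}^{-1}\sum_{1\le t<s\le T}\phi(Z_t,Z_s)$ with kernel $\phi(Z_t,Z_s):=(\Eb_t-\Eb_s)\Cb(\Fb_t-\Fb_s)^\top/\norm{\Xb_t-\Xb_s}_F^2$, which is symmetric in $(t,s)$ and depends on $(Z_t,Z_s)$ only through $Z_t-Z_s$. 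Applying the standard Hoeffding variance decomposition entrywise gives $\EE\norm{\Mb_2}_F^2=\norm{\overline\phi}_F^2+O\big(T^{-1}\zeta_1+T^{-2}\zeta_2\big)$, where $\overline\phi:=\EE\phi(Z_1,Z_2)$, $\zeta_1=\EE\norm{\EE[\phi(Z_1,Z_2)\mid Z_1]}_F^2-\norm{\overline\phi}_F^2$ and $\zeta_2=\EE\norm{\phi(Z_1,Z_2)-\overline\phi}_F^2$. By (conditional) Jensen's inequality both $\zeta_1$ and $\zeta_2$ are at most $\EE\norm{\phi(Z_1,Z_2)}_F^2$, so the lemma reduces to the two estimates $\EE\norm{\phi(Z_1,Z_2)}_F^2=O((p_1p_2)^{-1})$ and $\norm{\overline\phi}_F^2=O((p_1p_2)^{-2})$.

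I would then reduce both estimates to the Gaussian case. The numerator and denominator of $\phi$ are each quadratic in $Z_1-Z_2$, so $\phi(Z_1,Z_2)=h(Z_1-Z_2)$ with $h$ positively homogeneous of degree zero. By Assumption \ref{joint_elliptical}, $Z_t=\bSigma_0^{1/2}\big(r_t\bZ_t/\norm{\bZ_t}_2\big)$ with $\bSigma_0:=\mathrm{diag}(\Ib_{k_1k_2},\bOmega_e\otimes\bSigma_e)$, and $r_t\bZ_t/\norm{\bZ_t}_2$ is rotationally invariant (its radius $r_t$ is independent of its direction $\bZ_t/\norm{\bZ_t}_2$). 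Hence the difference of two independent copies is also rotationally invariant, i.e.\ equals in law $\rho\,\bu$ with $\bu$ uniform on the unit sphere and $\rho\ge0$ independent; since $h$ has degree zero, $\phi(Z_1,Z_2)\overset{d}{=}h(\bSigma_0^{1/2}\rho\,\bu)=h(\bSigma_0^{1/2}\bu)\overset{d}{=}h(\bSigma_0^{1/2}\bZ')$ for a standard Gaussian $\bZ'$ in $\RR^{k_1k_2+p_1p_2}$ (using $\bu\overset{d}{=}\bZ'/\norm{\bZ'}_2$ and homogeneity once more) --- the same mechanism as in Lemma \ref{lemfree}. Consequently $\EE\norm{\phi(Z_1,Z_2)}_F^2$ and $\overline\phi$ may be computed in the surrogate model in which $\vec(\Fb_1-\Fb_2)\sim\mathcal N(\zero,\Ib)$ and $\vec(\Eb_1-\Eb_2)\sim\mathcal N(\zero,\bOmega_e\otimes\bSigma_e)$ are independent, where concentration tools are available.

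Write $\Gb:=\Fb_1-\Fb_2$ and $\Nb:=\Eb_1-\Eb_2$ in the surrogate model, so that $\phi=\Nb\Cb\Gb^\top/\norm{\Rb\Gb\Cb^\top+\Nb}_F^2$ with $\Gb,\Nb$ independent Gaussians. For $\EE\norm{\phi}_F^2$: the numerator is bounded by $\norm{\Nb\Cb}_F^2\norm{\Gb}_F^2$, which is of order $p_1p_2$ with high probability since $\EE\norm{\Nb\Cb}_F^2=\tr(\bSigma_e)\,\tr(\bOmega_e\Cb\Cb^\top)\asymp p_1p_2$ and $\norm{\Gb}_F=O_p(1)$ (Assumptions \ref{strong_factor}--\ref{regular_noise}); and $\norm{\Rb\Gb\Cb^\top+\Nb}_F^2\ge c_0\,p_1p_2(1+\norm{\Gb}_F^2)$ on an event $\mathcal G$ with $\PP(\mathcal G^c)\le e^{-cp_1p_2}$, because $\norm{\Nb}_F^2$ concentrates at order $p_1p_2$ with an exponential lower tail while, conditionally on $\Gb$, the cross term $\langle\Rb\Gb\Cb^\top,\Nb\rangle$ is Gaussian with variance $O(\norm{\Rb\Gb\Cb^\top}_F^2)\asymp p_1p_2\norm{\Gb}_F^2$. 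On $\mathcal G$ one then has $\norm{\phi}_F^2\le C\,p_1p_2\norm{\Gb}_F^2/\{c_0p_1p_2(1+\norm{\Gb}_F^2)\}^2=O((p_1p_2)^{-1})$, while on $\mathcal G^c$ a H\"older bound against $e^{-cp_1p_2}$ --- using that $\EE\norm{\Rb\Gb\Cb^\top+\Nb}_F^{-2k}<\infty$ for Gaussians in dimension $p_1p_2\gg k$ --- makes the contribution negligible; hence $\EE\norm{\phi}_F^2=O((p_1p_2)^{-1})$. For $\overline\phi$ the gain comes from symmetrizing in $\Nb$: since $\Nb\overset{d}{=}-\Nb$ is independent of $\Gb$ and $\norm{\Rb\Gb\Cb^\top+\Nb}_F^{-2}-\norm{\Rb\Gb\Cb^\top-\Nb}_F^{-2}=-4\langle\Rb\Gb\Cb^\top,\Nb\rangle/(\norm{\Rb\Gb\Cb^\top+\Nb}_F^2\norm{\Rb\Gb\Cb^\top-\Nb}_F^2)$, conditioning on $\Gb$ and averaging $\Nb$ with $-\Nb$ yields
\begin{equation*}
\overline\phi=-2\,\EE\!\left[\frac{\langle\Rb\Gb\Cb^\top,\Nb\rangle\,\Nb\Cb\Gb^\top}{\norm{\Rb\Gb\Cb^\top+\Nb}_F^2\,\norm{\Rb\Gb\Cb^\top-\Nb}_F^2}\right].
\end{equation*}
Since $|\langle\Rb\Gb\Cb^\top,\Nb\rangle|=O_p(\sqrt{p_1p_2}\,\norm{\Gb}_F)$ and $\norm{\Nb\Cb\Gb^\top}_F=O_p(\sqrt{p_1p_2}\,\norm{\Gb}_F)$, the integrand is $O\big((p_1p_2)^{-1}\norm{\Gb}_F^2(1+\norm{\Gb}_F^2)^{-2}\big)=O((p_1p_2)^{-1})$ on $\mathcal G$, the complementary event is negligible as above, and $\norm{\overline\phi}_F=O((p_1p_2)^{-1})$. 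Combining, $\EE\norm{\Mb_2}_F^2=O((p_1^2p_2^2)^{-1}+(Tp_1p_2)^{-1})$, and Markov's inequality yields $\norm{\Mb_2}_F^2=\norm{\Mb_3}_F^2=O_p\big((Tp_1p_2)^{-1}+(p_1^2p_2^2)^{-1}\big)$, which is the claim.

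The main obstacle is the Gaussian moment control of the ratio $\phi$: one must establish that $\norm{\Xb_1-\Xb_2}_F^2$ stays of exact order $p_1p_2(1+\norm{\Fb_1-\Fb_2}_F^2)$ with overwhelming probability, and that the rare event on which it is much smaller contributes negligibly to the second moment. This requires a uniform lower-tail bound for a non-central quadratic form in high-dimensional Gaussians together with the (very favorable) integrability of negative powers of such a form. Everything else --- the $U$-statistic decomposition, the Jensen control of the H\'ajek term, and the $\Nb\mapsto-\Nb$ symmetrization that buys the extra $(p_1p_2)^{-1}$ factor in $\overline\phi$ --- is routine once the reduction to the Gaussian surrogate is in place.
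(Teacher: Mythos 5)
Your proposal is correct, and it reaches the two key reductions the paper also relies on --- namely $\EE\norm{\phi(Z_1,Z_2)}_F^2=O((p_1p_2)^{-1})$ and $\norm{\EE\phi(Z_1,Z_2)}_F^2=O((p_1p_2)^{-2})$ after passing to a Gaussian surrogate via degree-zero homogeneity of the kernel --- but it gets there by genuinely different machinery at three points. First, for the $U$-statistic itself the paper does not use the Hoeffding projection variances $\zeta_1,\zeta_2$; it writes $\Mb_2$ as an average over permutations $\sigma$ of sums $\Mb_2^\sigma$ of $\bar T=T/2$ i.i.d.\ kernel evaluations on disjoint pairs, so that $\EE\norm{\Mb_2}_F\le\sqrt{\bar T^{-1}\EE\norm{\Xb_1}_F^2+\norm{\EE\Xb_1}_F^2}$; your Jensen bound $\zeta_1\le\EE\norm{\phi}_F^2$ yields the same $T$-dependence. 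Second, and most substantively, for the mean the paper orthogonalizes: it sets $\bu_2=\vec(\Rb Z_1\Cb^\top+\bSigma_e^{1/2}\Zb_2\bOmega_e^{1/2})$ (the denominator variable) and $\bu_1$ the Gaussian residual of $\vec(\Zb_1)$ on $\bu_2$, so the leading term $\EE[u_1\bu_2/\norm{\bu_2}^2]$ vanishes exactly by independence and the remainder is controlled through $\norm{\bSigma_{u_1}}_F^2=O((p_1p_2)^{-2})$ and $\norm{\bSigma_x^{-1}\EE(\bu_2\bu_2^\top/\norm{\bu_2}^2)\bSigma_x^{-1}(\Cb\otimes\Rb)}_F^2$, both imported from Lemmas S1--S2 of \cite{He2022large}; your $\Nb\mapsto-\Nb$ symmetrization extracts the same extra $(p_1p_2)^{-1/2}$ cancellation in a self-contained way, at the cost of not setting up the $(\bu_1,\bu_2)$ decomposition that the paper then reuses wholesale in Lemmas \ref{m1}, \ref{m4} and \ref{I2I3}. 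Third, for the second moment the paper simply applies Cauchy--Schwarz against $\EE\norm{\bu_2}^{-8}\asymp(p_1p_2)^{-4}$ (an inverse chi-square moment), which is shorter than your exponential lower-tail event $\mathcal G$ plus H\"older on $\mathcal G^c$, though both are valid. The one place where you should be a little more careful in a write-up is that your good event $\mathcal G$ depends on $\Nb$ as well as $\Gb$, so the ``on $\mathcal G$ the integrand is $O((p_1p_2)^{-1})$'' step should be phrased as bounding the indicator by the deterministic (given $\Gb$) denominator lower bound and then dropping the indicator before taking the $\Nb$-expectation; as stated this is a presentational point, not a gap.
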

\begin{proof}
	Note that $\Mb_3$ is $\Mb_2$'s transpose, so we only show the result with $\Mb_2$. Similar to the proof in \cite{He2022large}, we assume $T$ is even and $\bar{T}=T/2$, otherwise we can delete the last observation. Given a permutation $\sigma$ of $\{1,\dots,T \}$, let $\Fb_t^{\sigma}$, $\Eb_t^{\sigma}$ and $\Xb_t^{\sigma}$ be the rearranged factors, errors, observations, and
	\begin{equation}
		 \Mb_2^\sigma=\frac{1}{\bar{T}}\sum_{s=1}^{\bar{T}}\frac{(\Eb_t^\sigma-\Eb_s^\sigma)\Cb(\Fb_t^\sigma-\Fb_s^\sigma)^\top}{\norm{\Xb_t^\sigma-\Xb_s^\sigma}_F^2}.\nonumber
	\end{equation}
	Denote $\mathcal{S}_T$ as the set containing all the permutations of $\{1,\dots,T \}$. After some elementary calculations,
	\begin{equation}
		\sum_{\sigma \in \mathcal{S}_T}\frac{T}{2}\Mb_2^\sigma=T\times (T-2)!\times \frac{T(T-1)}{2}\Mb_2.\nonumber
	\end{equation}
	That is,
	\begin{equation}
		\Mb_2=\frac{1}{T!}\sum_{\sigma \in \mathcal{S}_T}\Mb_2^\sigma \Rightarrow \EE \norm{\Mb_2}_F \leq \frac{1}{T!}\sum_{\sigma \in \mathcal{S}_T}\EE \norm{\Mb_2^\sigma}_F=\EE \norm{\Mb_2^\sigma}_F \leq \sqrt{\EE \norm{\Mb_2^\sigma}_F^2}.\nonumber
	\end{equation}
	Now we assume $\sigma$ is given. By the definition of joint matrix elliptical distribution, for any $s=1,\dots,\bar{T}$,
	\begin{equation}
		\left(\begin{array}{c}\vec(\Fb_{2s-1}-\Fb_{2s}) \\ \vec(\Eb_{2s-1}-\Eb_{2s})\end{array}\right)\overset{d}{=}r\left(\begin{array}{cc}\Ib_{k_2}\otimes \Ib_{k_1} & \zero \\ \zero & \bOmega_e^{1/2}\otimes \bSigma_e^{1/2}\end{array}\right) \frac{\bZ}{\|\bZ\|_2},\nonumber
	\end{equation}
	where $\bZ=(\vec(\Zb_1)^\top,\vec(\Zb_2)^\top)^\top \sim \mathcal{N}(\mathbf{0},\Ib_{k_1k_2+p_1p_2})$ and $\bZ$ is independent of $r$. $\Zb_1$ and $\Zb_2$ are $k_1\times k_2$ matrices, respectively. Hence,
	$$\Xb_s \overset{d}{=}\frac{\bSigma_e^{1/2}\Zb_2\bOmega_e^{1/2}\Cb\Zb_1^\top}{\norm{\Rb\Zb_1\Cb^\top+\bSigma_e^{1/2}\Zb_2\bOmega_e^{1/2}}_F^2}.$$ Note that $\Xb_s$ and $\Xb_t$ are independently and identically distributed when $s \neq t$, so
	\begin{equation}\label{EM2}
		\EE \norm{\Mb_2^\sigma}_F^2=\EE \Bigg\Vert\frac{1}{\bar{T}}\sum_{s=1}^{\bar{T}}\Xb_s\Bigg\Vert_F^2=\frac{1}{\bar{T}}\EE\norm{\Xb_1}_F^2+\frac{\bar{T}(\bar{T}-1)}{\bar{T}^2}\norm{\EE \Xb_1}_F^2.
	\end{equation}
	We first focus on the matrix $\EE \Xb_1$. Define
	\begin{equation}
		\begin{aligned}
			&\left(\begin{array}{c}\bu_1\\
				\bu_2 \end{array}\right)=\left( \begin{array}{cc}\Ib_{k_2}\otimes \Ib_{k_1} & -(\Cb \otimes \Rb)^\top(\Cb\Cb^\top \otimes \Rb\Rb^\top +\bOmega_e\otimes \bSigma_e)^{-1}\\ \zero & \Ib_{p_2}\otimes \Ib_{p_1} \end{array} \right)\left(\begin{array}{cc} \Ib_{k_2}\otimes \Ib_{k_1} & \zero\\ \Cb \otimes \Rb &\bOmega_e^{1/2} \otimes \bSigma_e^{1/2} \end{array} \right) \left(\begin{array}{c}\vec(\Zb_1)\\ \vec(\Zb_2) \end{array}\right)\\
			& \sim \mathcal{N}\bigg(\zero,\Big(\begin{array}{cc} \bSigma_{u_1} &\zero\\ \zero & \bSigma_x \end{array} \Big)\bigg),\nonumber
		\end{aligned}
	\end{equation}
	where $\bSigma_x=\Cb\Cb^\top \otimes \Rb\Rb^\top +\bOmega_e \otimes \bSigma_e$, $\bSigma_{u_1}=\Ib_{k_1k_2}-(\Cb \otimes \Rb)^\top \bSigma_x^{-1}(\Cb \otimes \Rb)$. $\bu_1$, $\bu_2$ are independent and
	\begin{equation}
		\left(\begin{array}{c}  \vec(\Zb_1) \\ \vec(\Zb_2) \end{array} \right)=\left(\begin{array}{cc} \Ib_{k_1k_2} & (\Cb \otimes \Rb)^\top \bSigma_x^{-1}\\ -(\bOmega^{1/2} \otimes \bSigma^{1/2})^{-1}(\Cb \otimes \Rb) & (\bOmega^{1/2} \otimes \bSigma^{1/2})\bSigma_x^{-1} \end{array}\right)\left(\begin{array}{c}\bu_1\\ \bu_2 \end{array} \right).\nonumber
	\end{equation}
	Since $k_1,k_2$ are fixed numbers, without loss generality we let $k_1=k_2=1$ in the following to simplify notation. For general fixed $k_1,k_2$, the error bounds are the same. As a result,
	\begin{equation}
		\begin{aligned}
			& \quad \quad \Xb_1 \overset{d}{=}\frac{\bSigma_e^{1/2}\Zb_2\bOmega_e^{1/2}\Cb Z_1^\top}{\norm{\Rb Z_1\Cb^\top+\bSigma_e^{1/2}\Zb_2\bOmega_e^{1/2}}_F^2} \\ &\Rightarrow \vec(\Xb_1)=\frac{(Z_1\Cb^\top\bOmega_e^{1/2})\otimes \bSigma_e^{1/2}\vec(\Zb_2)}{\norm{\vec(\Rb\Zb_1\Cb^\top+\bSigma_e^{1/2}\Zb_2\bOmega_e^{1/2})}_F^2}\\
			&=\frac{\big(u_1+(\Cb \otimes \Rb)^\top \bSigma_x^{-1}\bu_2\big)\big(-(\Cb^\top \otimes \Ib_{p_1})(\Cb \otimes \Rb)u_1+(\Cb^\top \otimes \Ib_{p_1})(\bOmega_e \otimes \bSigma_e)\bSigma_x^{-1}\bu_2\big)}{\norm{\bu_2}^2}.\nonumber
		\end{aligned}
	\end{equation}
	By Lemma S1 in \cite{He2022large}, $\norm{\bSigma_{u_1}}_F^2=O(\frac{1}{p_1^2p_2^2})$, $\norm{(\Cb \otimes \Rb)^\top \bSigma_x^{-2}(\Cb \otimes \Rb)}_F^2=O(\frac{1}{p_1^2p_2^2})$. $\bSigma_x$ and $\Cb \otimes \Rb$ are equivalent to $\bSigma_y$ and $\Lb$ respectively defined in \cite{He2022large}.
	
	Because $u_1$ and $\bu_2$ are zero-mean independent Gaussian vectors, we have
	$$\EE\frac{u_1\bu_2}{\norm{\bu_2}^2}=\zero,\quad  \EE\norm{\bu_2}^{-2} \leq \frac{1}{\lambda_{p_1p_2}(\bSigma_x)}\EE\frac{1}{\chi _{p_1p_2}^2} \asymp \frac{1}{p_1p_2}.$$
	Hence,
	\begin{equation}\label{eq1}
		\EE \vec (\Xb_1)=\text{Vec}\bigg(-\EE (\norm{\bu_2}^{-2})\big((\Cb^\top\Cb) \otimes \Rb \big)\bSigma_{u_1}+(\Cb^\top \otimes \Ib_{p_1})(\bOmega_e \otimes \bSigma_e)\bSigma_x^{-1}\EE \Big( \frac{\bu_2\bu_2^\top}{\norm{\bu_2}^2}\Big)\bSigma_x^{-1}(\Cb \otimes \Rb)\bigg).
	\end{equation}
	For the first term,
	\begin{equation}\label{eq2}
		\bigg\Vert-\EE (\norm{\bu_2}^{-2})\big((\Cb^\top\Cb) \otimes \Rb\big) \bSigma_{u_1}\bigg \Vert_F^2 \leq \norm{\bSigma_{u_1}}_F^2\norm{\Cb^\top\Cb}_F^2\norm{\Rb}_F^2\left(\EE(\norm{\bu_2}^{-2})\right)^2=O\Big(\frac{1}{p_1^3p_2^2}\Big).
	\end{equation}
	For the second term, according to Lemma S2 in \cite{He2022large}, we have
	$$\bigg \Vert \bSigma_x^{-1} \EE \Big(\frac{\bu_2\bu_2^\top}{\norm{\bu_2}^2}\Big)\bSigma_x^{-1}(\Cb \otimes \Rb)\bigg \Vert_F^2=O\Big(\frac{1}{p_1^3p_2^3}\Big).$$
	Therefore,
	\begin{equation}\label{eq3}
         \begin{aligned}
		&\bigg \Vert (\Cb^\top \otimes \Ib_{p_1})(\bOmega_e \otimes \bSigma_e)\bSigma_x^{-1}\EE \Big(\frac{\bu_2\bu_2^\top}{\norm{\bu_2}^2}\Big)\bSigma_x^{-1}(\Cb \otimes \Rb) \bigg \Vert_F^2 \\
&\leq \norm{\Cb^\top \otimes \Ib_{p_1}}_F^2\norm{\bOmega_e \otimes \bSigma_e}^2\bigg \Vert \bSigma_x^{-1} \EE \Big(\frac{\bu_2\bu_2^\top}{\norm{\bu_2}^2}\Big)\bSigma_x^{-1}(\Cb \otimes \Rb)\bigg \Vert_F^2=O\Big(\frac{1}{p_1^2p_2^2}\Big).
       \end{aligned}
	\end{equation}
	Combing ( \ref{eq1}),( \ref{eq2}) and (\ref{eq3}), we have $\norm{\EE\vec(\Xb_1)}_F^2=O\Big(\frac{1}{p_1^2p_2^2}\Big).$
	
	Now we consider $\EE\norm{\Xb_1}_F^2$. According to Lemma S2 in \cite{He2022large},$$\EE\norm{u_1}^2 \lesssim \frac{1}{p_1p_2},\quad \EE\norm{u_1}^4 \lesssim \frac{1}{p_1p_2}, \quad \EE\norm{\bu_2}^{-4}\asymp \frac{1}{p_1^2p_2^2},\quad \EE\norm{\bu}_2^{-8}\leq \frac{1}{\lambda_{p_1p_2}^4(\bSigma_x)}\EE \chi_{p_1p_2}^{-4} \asymp (p_1p_2)^{-4}.$$
	Then
	\begin{equation}
		\begin{aligned}
			\EE\norm{\Xb_1}_F^2&=\EE\bigg \Vert\frac{\bSigma_e^{1/2}\Zb_2\bOmega_e^{1/2}\Cb Z_1^\top}{\norm{\Rb Z_1\Cb^\top+\bSigma_e^{1/2}\Zb_2\bOmega_e^{1/2}}_F^2} \bigg \Vert_F^2=\EE \bigg( \frac{1}{\norm{\Rb Z_1\Cb^\top+\bSigma_e^{1/2}\Zb_2\bOmega_e^{1/2}}_F^4}\norm{\bSigma_e^{1/2}\Zb_2\bOmega_e^{1/2}\Cb Z_1^\top}_F^2 \bigg)\\
			&\leq C \sqrt{\EE(\norm{\Zb_2\bOmega_e^{1/2}\Cb}_F^4)\EE(\norm{\bu_2}^{-8}})=
			O\Big(\frac{1}{p_1p_2}\Big).\nonumber
		\end{aligned}
	\end{equation}
	
	As a result,
	$$\EE\norm{\Mb_2^\sigma}_F \lesssim \sqrt{\frac{1}{Tp_1p_2}+\frac{1}{p_1^2p_2^2}} \Rightarrow \norm{\Mb_2}_F^2=O_p\Big(\frac{1}{Tp_1p_2}+\frac{1}{p_1^2p_2^2}\Big).$$
	$\norm{\Mb_3}_F^2=O_p\Big(\frac{1}{Tp_1p_2}+\frac{1}{p_1^2p_2^2}\Big)$ can be derived similarly.
\end{proof}

\begin{lemma}\label{m1}
	Under Assumptions \ref{joint_elliptical}-\ref{regular_noise}, we have $\norm{\Mb_1}_F^2=O_p(\frac{1}{p_1^2})$ and $\norm{\hat{\Hb}_R}_F^2=O_p(1)$.
\end{lemma}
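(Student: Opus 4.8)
The plan is to prove $\|\Mb_1\|_F^2=O_p(p_1^{-2})$ first and then read off $\|\hat\Hb_R\|_F^2=O_p(1)$ from it together with Lemma~\ref{eigenvalue}. For $\Mb_1$ I would first pass from the U-statistic to a single kernel term: by the triangle inequality and the fact that the $\binom{T}{2}$ summands are identically distributed, $\EE\|\Mb_1\|_F\le\EE\big(\|(\Fb_1-\Fb_2)\Cb^\top\Cb(\Fb_1-\Fb_2)^\top\|_F/\|\Xb_1-\Xb_2\|_F^2\big)\le\|\Cb^\top\Cb\|\,\EE\big(\|\Fb_1-\Fb_2\|_F^2/\|\Xb_1-\Xb_2\|_F^2\big)$, with $\|\Cb^\top\Cb\|=O(p_2)$ by Assumption~\ref{strong_factor}. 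The feature that makes the elliptical model tractable with no moment conditions is scale invariance of the ratio: $(\vec(\Fb_1-\Fb_2)^\top,\vec(\Eb_1-\Eb_2)^\top)^\top$ is again matrix-elliptical with block dispersion $\mathrm{diag}(\Ib_{k_1k_2},\bOmega_e\otimes\bSigma_e)$, hence equals in law $(r'/\|\bZ\|_2)\,\mathrm{diag}(\Ib_{k_1k_2},\bOmega_e^{1/2}\otimes\bSigma_e^{1/2})\bZ$ with $\bZ=(\vec(\Zb^F)^\top,\vec(\Zb^E)^\top)^\top$ standard Gaussian and $r'\ge0$; the radial factor $r'$ and $\|\bZ\|_2$ cancel, so
\[
\frac{\|\Fb_1-\Fb_2\|_F^2}{\|\Xb_1-\Xb_2\|_F^2}\overset{d}{=}\frac{\|\Zb^F\|_F^2}{\|\Rb\Zb^F\Cb^\top+\bSigma_e^{1/2}\Zb^E\bOmega_e^{1/2}\|_F^2}=:\frac{\|\Zb^F\|_F^2}{D}.
\]

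It then remains to show $\EE(\|\Zb^F\|_F^2/D)=O((p_1p_2)^{-1})$. The numerator is harmless because $\Zb^F$ is $k_1\times k_2$ with $k_1,k_2$ fixed, so $\|\Zb^F\|_F^2\sim\chi^2_{k_1k_2}$ and $\EE\|\Zb^F\|_F^4=O(1)$. For the denominator, conditionally on $\Zb^F$ the quantity $D$ is the squared Euclidean norm of $\vec(\Rb\Zb^F\Cb^\top)+\vec(\bSigma_e^{1/2}\Zb^E\bOmega_e^{1/2})$, i.e.\ a fixed vector plus an independent $\mathcal N(\zero,\bOmega_e\otimes\bSigma_e)$ vector; by Anderson's inequality the fixed shift can only stochastically inflate this norm, hence $\EE[D^{-2}\mid\Zb^F]\le\EE\|\vec(\bSigma_e^{1/2}\Zb^E\bOmega_e^{1/2})\|_2^{-4}$, and since $\bOmega_e\otimes\bSigma_e\succcurlyeq c_1c_2\Ib_{p_1p_2}$ by Assumption~\ref{regular_noise} this is at most $(c_1c_2)^{-2}\EE\chi_{p_1p_2}^{-4}=O((p_1p_2)^{-2})$ — the inverse-$\chi^2$ moment bound already used (via Lemma~S2 of \citet{He2022large}) in the proof of Lemma~\ref{m2m3}. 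Cauchy--Schwarz then gives $\EE(\|\Zb^F\|_F^2/D)\le(\EE\|\Zb^F\|_F^4)^{1/2}(\EE D^{-2})^{1/2}=O((p_1p_2)^{-1})$, so $\EE\|\Mb_1\|_F=O(p_1^{-1})$, whence $\|\Mb_1\|_F=O_p(p_1^{-1})$ and $\|\Mb_1\|_F^2=O_p(p_1^{-2})$.

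For $\hat\Hb_R=\Mb_1\Rb^\top\hat\Rb\hat\bLambda^{-1}$ I would simply use submultiplicativity with $\|\Rb\|=O(\sqrt{p_1})$ (Assumption~\ref{strong_factor}), $\|\hat\Rb\|=\sqrt{p_1}$ (since $\hat\Rb^\top\hat\Rb=p_1\Ib_{k_1}$), $\|\hat\bLambda^{-1}\|=O_p(1)$ (Lemma~\ref{eigenvalue}), and the estimate $\|\Mb_1\|_F=O_p(p_1^{-1})$ just obtained:
\[
\|\hat\Hb_R\|_F\le\|\Mb_1\|_F\,\|\Rb\|\,\|\hat\Rb\|\,\|\hat\bLambda^{-1}\|=O_p(p_1^{-1})\cdot O(\sqrt{p_1})\cdot\sqrt{p_1}\cdot O_p(1)=O_p(1).
\]

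The hard part is the bound on $\EE[D^{-2}]$, i.e.\ showing $\|\Xb_1-\Xb_2\|_F^2$ stays bounded away from $0$ with overwhelming probability \emph{even after} the signal term $\Rb\Zb^F\Cb^\top$ shifts the Gaussian noise; this is exactly where Anderson's inequality (equivalently the uniform Gaussian-density bound behind Lemmas~S1--S2 of \citet{He2022large}) and the nondegeneracy $\bOmega_e\otimes\bSigma_e\succcurlyeq c_1c_2\Ib$ are used. I would also note that there is no circularity in invoking Lemma~\ref{eigenvalue} here, since that lemma relies only on the operator-norm consistency $\|\hat\Kb_r^X-\Kb_r^X\|\to0$ of a bounded-kernel U-statistic together with Proposition~\ref{pro1}, not on Lemma~\ref{m1}.
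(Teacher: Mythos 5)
Your proof is correct and reaches the same rates, but by a genuinely lighter route than the paper's. For $\Mb_1$ the paper first decouples the U-statistic into an average of $\bar T$ i.i.d.\ kernels via the permutation trick of Lemma \ref{m2m3}, splits the bound into a variance part $\bar T^{-1}\EE\|\Xb\|_F^2$ and a squared-mean part $\|\EE\Xb\|_F^2$, and evaluates both through the Gaussian regression decomposition $(\vec(\Zb_1),\vec(\Zb_2))\mapsto(u_1,\bu_2)$ and Lemmas S1--S2 of \cite{He2022large}. You instead bound $\EE\|\Mb_1\|_F$ by the expectation of a single kernel via the triangle inequality, use the same scale-invariance cancellation of the radial variable, and control the denominator's inverse moment by Anderson's inequality plus the $\chi^2$ inverse-moment bound; Cauchy--Schwarz then gives $\EE(\|\Zb^F\|_F^2/D)=O((p_1p_2)^{-1})$ and hence $\EE\|\Mb_1\|_F=O(p_1^{-1})$. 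This shortcut works here precisely because the mean of the kernel is itself of order $p_1^{-1}$ and dominates: your coarser bound forgoes the $1/\sqrt{T}$ gain on the centered fluctuation that the decoupling buys, which is immaterial for $\Mb_1$ (though it would be fatal if applied to $\Mb_2$ or $\Mb_3$, whose kernels have nearly vanishing mean and whose rates genuinely need the $T^{-1}$ factor). Your Anderson's-inequality treatment of $\EE[D^{-2}]$ is cleaner than, and equivalent in outcome to, the paper's route through $\EE\|\bu_2\|^{-4}\asymp(p_1p_2)^{-2}$. The bound on $\hat\Hb_R$ is the same submultiplicativity computation as the paper's (operator versus Frobenius norms is immaterial since $k_1$ is fixed), and your observation that Lemma \ref{eigenvalue} is established independently of Lemma \ref{m1}, so that invoking $\|\hat\bLambda^{-1}\|=O_p(1)$ is not circular, is accurate.
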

\begin{proof}
	Similarly to equation (\ref{EM2}) in Lemma \ref{m2m3}, it is easy to verify that
	$$\EE\norm{\Mb_1}_F \leq \sqrt{\frac{1}{\bar{T}}\EE\norm{\Xb}_F^2+\frac{\bar{T}(\bar{T}-1)}{\bar{T}^2}\norm{\EE\Xb}_F^2},$$
	where $\Xb \overset{d}{=}\frac{Z_1\Cb^\top\Cb Z_1^\top}{\norm{\Rb Z_1\Cb^\top+\bSigma_e^{1/2}\Zb_2\bOmega_e^{1/2}}_F^2}$, $\vec (\Xb)=\frac{\big(u_1+(\Cb \otimes \Rb)^\top\bSigma_x^{-1}\bu_2\big)\Cb^\top\Cb\big(u_1+(\Cb \otimes \Rb)^\top\bSigma_x^{-1}\bu_2\big)}{\norm{\bu_2}^2}$ with $Z_1,\Zb_2$, $u_1$, $\bu_2$ defined in Lemma \ref{m2m3}.
	Therefore,
	$$\norm{\EE\vec(\Xb)}_F^2 \lesssim (\EE\norm{\bu_2}_F^{-2})^2\norm{\Cb^\top\Cb}_F^2\norm{\bSigma_{u_1}}_F^2+\norm{\Cb \otimes \Rb}_F^2\bigg \Vert \bSigma_x^{-1} \EE \Big(\frac{\bu_2\bu_2^\top}{\norm{\bu_2}^2}\Big)\bSigma_x^{-1}(\Cb \otimes \Rb)\bigg \Vert_F^2\norm{\Cb^\top\Cb}_F^2=O\Big(\frac{1}{p_1^2}\Big).$$
	On the other hand,
	\begin{equation}
		\begin{aligned}
			\EE\norm{\Xb}_F^2 & \lesssim \EE \big(\norm{u_1}^4\big)\EE\big(\norm{\bu_2}^{-4}\big)\norm{\Cb^\top\Cb}_F^2 \le O\Big(\frac{1}{p_1^2}\Big).\nonumber
		\end{aligned}
	\end{equation}
	As a result, $\norm{\Mb_1}_F^2=O_p(\frac{1}{p_1^2})$ and by the definition of $\hat{\Hb}_R$ we have
	 $$\norm{\hat{\Hb}_R}_F^2\leq \norm{\Mb_1}_F^2\norm{\Rb^\top}_F^2\norm{\hat{\Rb}}_F^2\norm{\hat{\bLambda}^{-1}}_F^2=O_p(1),$$
	 which completes the proof.
\end{proof}

\begin{lemma}\label{m4}
	Under Assumption \ref{joint_elliptical}-\ref{regular_noise}, we have
	\begin{equation}
		\frac{1}{p_1}\norm{\Mb_4\hat{\Rb}}_F^2=O_p\Big(\frac{1}{Tp_2}+\frac{1}{p_1^2}\Big)+o_p(1) \times \frac{1}{p_1}\norm{\hat{\Rb}-\Rb\hat{\Hb}_R}_F^2.\nonumber
	\end{equation}
\end{lemma}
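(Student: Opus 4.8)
The plan is to split $\Mb_4\hat{\Rb}$ along $\hat{\Rb}=\Rb\hat{\Hb}_R+(\hat{\Rb}-\Rb\hat{\Hb}_R)$, namely
\[
\Mb_4\hat{\Rb}=\Mb_4\Rb\hat{\Hb}_R+\Mb_4(\hat{\Rb}-\Rb\hat{\Hb}_R),
\]
so that $\frac{1}{p_1}\norm{\Mb_4\hat{\Rb}}_F^2\le \frac{2}{p_1}\norm{\Mb_4\Rb}_F^2\norm{\hat{\Hb}_R}_F^2+\frac{2}{p_1}\norm{\Mb_4}^2\norm{\hat{\Rb}-\Rb\hat{\Hb}_R}_F^2$, using Frobenius submultiplicativity on the first term and $\norm{\Mb_4 V}_F\le\norm{\Mb_4}\norm{V}_F$ on the second. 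Since $\norm{\hat{\Hb}_R}_F^2=O_p(1)$ by Lemma~\ref{m1}, the lemma reduces to two claims: (i) $\norm{\Mb_4}=o_p(1)$, which is the source of the self-referential $o_p(1)\times\frac{1}{p_1}\norm{\hat{\Rb}-\Rb\hat{\Hb}_R}_F^2$ term (this term cannot be absorbed a priori, because $\frac{1}{p_1}\norm{\hat{\Rb}-\Rb\hat{\Hb}_R}_F^2$ is precisely the quantity Theorem~\ref{th1} is establishing); and (ii) $\frac{1}{p_1}\norm{\Mb_4\Rb}_F^2=O_p\big(\tfrac{1}{Tp_2}+\tfrac{1}{p_1^2}\big)$.

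For (i), I would bound the operator norm termwise, $\norm{\Mb_4}\le \frac{2}{T(T-1)}\sum_{t<s}\norm{\Eb_t-\Eb_s}^2/\norm{\Xb_t-\Xb_s}_F^2$. By the distribution-free property (Lemma~\ref{lemfree}), each ratio equals in distribution $\norm{\bSigma_e^{1/2}\Zb_E\bOmega_e^{1/2}}^2/\norm{\Rb\Zb_F\Cb^\top+\bSigma_e^{1/2}\Zb_E\bOmega_e^{1/2}}_F^2$ with $\Zb_E,\Zb_F$ standard Gaussian blocks and the common radial factor cancelled. Under Assumption~\ref{regular_noise} the numerator is $O_p(\max(p_1,p_2))$ (operator norm of a Gaussian-type matrix with bounded-condition-number population covariances), while the denominator is $\gtrsim p_1p_2$ with probability tending to one (the idiosyncratic Frobenius norm concentrating at order $p_1p_2$, the signal term being nonnegative, the cross term lower order). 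A union bound over the $\binom{T}{2}$ pairs then gives $\norm{\Mb_4}=O_p(1/\min(p_1,p_2))=o_p(1)$.

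For (ii), I would mirror the symmetrization device of Lemma~\ref{m2m3}: taking $T=2\bar{T}$, average over permutations $\sigma$ of $\{1,\dots,T\}$ to write $\Mb_4\Rb=(T!)^{-1}\sum_{\sigma}\Mb_4^{\sigma}\Rb$ with $\Mb_4^{\sigma}\Rb=\bar{T}^{-1}\sum_{s=1}^{\bar{T}}\Yb_s$ and $\Yb_s=(\Eb_{2s-1}^{\sigma}-\Eb_{2s}^{\sigma})(\Eb_{2s-1}^{\sigma}-\Eb_{2s}^{\sigma})^\top\Rb/\norm{\Xb_{2s-1}^{\sigma}-\Xb_{2s}^{\sigma}}_F^2$, the $\Yb_s$ being {\it i.i.d.}\ across $s$; hence $\EE\norm{\Mb_4\Rb}_F^2\le \EE\norm{\Mb_4^{\sigma}\Rb}_F^2=\bar{T}^{-1}\EE\norm{\Yb_1}_F^2+(1-\bar{T}^{-1})\norm{\EE\Yb_1}_F^2$. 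Passing to the Gaussian representation, $\Yb_1\overset{d}{=}\bSigma_e^{1/2}\Zb_E\bOmega_e\Zb_E^\top\bSigma_e^{1/2}\Rb/\norm{\Rb\Zb_F\Cb^\top+\bSigma_e^{1/2}\Zb_E\bOmega_e^{1/2}}_F^2$, and I would evaluate the two moments with the change of variables and inverse-moment bounds used for $\EE\Xb_1$ in Lemma~\ref{m2m3} (equivalently Lemmas~S1--S2 of \cite{He2022large}). The ``variance'' term should give $\EE\norm{\Yb_1}_F^2=O(p_1/p_2)$, contributing $O(1/(Tp_2))$; the ``squared-mean'' term $\norm{\EE\Yb_1}_F^2$ is of the same nature as the bias term of $\Mb_2$ — it is governed by a row Kendall's-tau-type matrix of the idiosyncratic part, whose eigenspace coincides with that of $\bSigma_e$ (Proposition~\ref{pro1}), trace $1$, and eigenvalues of order $1/p_1$ under Assumption~\ref{regular_noise}; combined with $\norm{\Rb}_F^2\asymp p_1$ this yields $\norm{\EE\Yb_1}_F^2=O(1/p_1)$, contributing $O(1/p_1^2)$ after the $1/p_1$ prefactor. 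Collecting this with (i) gives the stated bound.

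The hard part is the two moment evaluations in step (ii). The numerator of $\Yb_1$ is quartic in the Gaussian noise and multiplies a random denominator of order $p_1p_2$ that is correlated with the numerator, so one cannot separate them directly. As in Lemma~\ref{m2m3}, the route is to rotate $(\vec\Zb_F,\vec\Zb_E)$ into independent Gaussian blocks, reduce the denominator to the squared norm of one block (whose negative moments are $\asymp(p_1p_2)^{-q}$), and compute Gaussian trace moments of the form $\EE\,\tr\!\big(\Ab\Zb_E\Bb\Zb_E^\top\Cb\Zb_E\Bb\Zb_E^\top\Ab^\top\big)$ with $\Ab=\bSigma_e^{1/2}\Rb$, $\Bb=\bOmega_e$, $\Cb=\bSigma_e$; tracking which contractions dominate, and checking the bounds survive for unbalanced $(p_1,p_2)$, is the delicate bookkeeping.
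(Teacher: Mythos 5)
Your decomposition $\Mb_4\hat{\Rb}=\Mb_4\Rb\hat{\Hb}_R+\Mb_4(\hat{\Rb}-\Rb\hat{\Hb}_R)$ and your treatment of the dominant term $\frac{1}{p_1}\norm{\Mb_4\Rb}_F^2$ are exactly what the paper does: the same permutation/symmetrization reduction to an average of i.i.d.\ terms, the same bias--variance split $\bar{T}^{-1}\EE\norm{\Yb_1}_F^2+(1-\bar T^{-1})\norm{\EE\Yb_1}_F^2$, and the same target rates $\EE\norm{\Yb_1}_F^2=O(p_1/p_2)$ and $\norm{\EE\Yb_1}_F^2=O(1/p_1)$, established via the rotation of $(\vec(\Zb_F),\vec(\Zb_E))$ into independent blocks $(\bu_1,\bu_2)$ and the inverse-moment bounds $\EE\norm{\bu_2}^{-2q}\asymp(p_1p_2)^{-q}$ from Lemmas S1--S2 of \cite{He2022large}. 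Where you genuinely diverge is the coefficient of the self-referential term: the paper runs the \emph{same} U-statistic machinery once more to get $\norm{\Mb_4}_F^2=O_p(\frac{1}{Tp_2}+\frac{1}{p_1})=o_p(1)$ in Frobenius norm, whereas you bound the operator norm termwise by the triangle inequality and a high-probability bound $\norm{\Eb_t-\Eb_s}^2/\norm{\Xb_t-\Xb_s}_F^2=O_p(1/\min(p_1,p_2))$ on each kernel. Your route is more elementary and avoids a second moment computation; the paper's buys a quantitative rate for $\norm{\Mb_4}_F$ (reused later, e.g.\ in Lemma \ref{I2I3} to bound $\norm{\Mb_3\hat\Rb}_F^2$), which your $o_p(1)$ statement alone would not supply.

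One repair is needed in your step (i): the union bound over the $\binom{T}{2}$ pairs is not justified under Assumption \ref{joint_elliptical}, which constrains $\log(\max(p_1,p_2))=o(T)$ but places no upper bound on $T$ relative to $p_1,p_2$; if $T$ grows super-exponentially in $\max(p_1,p_2)$, the failure probability $T^2e^{-c\max(p_1,p_2)}$ does not vanish. Since the kernels are identically distributed, the fix is immediate: bound $\EE\norm{\Mb_4}\le\EE\big[\norm{\Eb_1-\Eb_2}^2/\norm{\Xb_1-\Xb_2}_F^2\big]$ by the triangle inequality and show this single expectation is $o(1)$ (Cauchy--Schwarz against $\EE\norm{\bu_2}^{-4}$ handles the small-denominator event), then conclude by Markov. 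With that substitution, and granting the Gaussian trace-moment bookkeeping you defer to the cited lemmas, your argument delivers the stated bound.
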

\begin{proof}
	By the decomposition $\hat{\Rb}=\hat{\Rb}-\Rb\hat{\Hb}_R+\Rb\hat{\Hb}_R$, we have
	\begin{equation}
		\frac{1}{p_1}\norm{\Mb_4\hat{\Rb}}_F^2 \lesssim \frac{1}{p_1}\norm{\Mb_4\Rb}_F^2\norm{\hat{\Hb}_R}_F^2+\norm{\Mb_4}_F^2\times \frac{1}{p_1}\norm{\hat{\Rb}-\Rb\hat{\Hb}_R}_F^2.
	\end{equation}
	We start with $\|\Mb_4\|_F^2$. Similarly to the proof of equation \ref{EM2} in Lemma \ref{m2m3},
	$$\EE\norm{\Mb_4}_F \leq \sqrt{\frac{1}{\bar{T}}\EE\norm{\Xb}_F^2+\frac{\bar{T}(\bar{T}-1)}{\bar{T}^2}\norm{\EE\Xb}_F^2},$$
	where $\Xb \overset{d}{=}\frac{\bSigma_e^{1/2}\Zb_2\bOmega_e\Zb_2^\top\bSigma_e^{1/2}}{\norm{\Rb Z_1\Cb^\top+\bSigma_e^{1/2}\Zb_2\bOmega_e^{1/2}}_F^2}.$
	Let $\Nb=\bSigma_e^{1/2}\Zb_2\bOmega_e^{1/2}$, so
	$$\vec(\Xb)=\frac{\sum_{j=1}^{p_2}\Nb_{\cdot,j}\otimes\Nb_{\cdot,j} }{\norm{\bu_2}^2}.$$
	Denote $\Eb_j\big((\bOmega_e^{1/2} \otimes \bSigma_e^{1/2})\vec(\Zb_2)\big)$ to extract the $\big((j-1)p_1+1\big)$-th to $jp_1$-th entries of $(\bOmega_e^{1/2} \otimes \bSigma_e^{1/2})\vec(\Zb_2)$, where
	$$\Eb_j=(\zero,\dots,\underbrace{\Ib_{p_1}}_{\mbox{$j$-th\ block}},\dots,\zero)_{p_1\times(p_1p_2)}.$$
	Then,
	\begin{equation}
		\begin{aligned}
			\vec(\Xb)&=\frac{\sum_{j=1}^{p_2}\bigg(\Eb_j\Big(\big(\bOmega_e^{1/2} \otimes \bSigma_e^{1/2}\big)\vec\big(\Zb_2\big)\Big)\otimes \Big(\Eb_j\big(\bOmega_e^{1/2} \otimes \bSigma_e^{1/2}\big)\vec\big(\Zb_2\big)\Big)\bigg)}{\norm{\bu_2}^2}\\
			&=\frac{\sum_{j=1}^{p_2}\bigg(\Eb_j\Big(-\big(\Cb \otimes \Rb\big)u_1+(\bOmega_e \otimes \bSigma_e)\bSigma_x^{-1}\bu_2\Big)\bigg)\otimes \bigg(\Eb_j\Big(-\big(\Cb \otimes \Rb\big)u_1+\big(\bOmega_e \otimes \bSigma_e\big)\bSigma_x^{-1}\bu_2\Big)\bigg)}{\norm{\bu_2}^2}.\nonumber
		\end{aligned}
	\end{equation}
	Therefore, $$\EE \vec(\Xb)=\sum_{j=1}^{p_2}\Bigg(\EE \big(\norm{\bu_2}^{-2}\big)\Big(\Eb_j\big(\Cb \otimes \Rb\big)\Big) \otimes \Big(\Eb_j\big(\Cb \otimes \Rb\big)\Big)\bSigma_{u_1} + \EE \bigg( \frac{\left(\Eb_j\left(\bOmega_e \otimes \bSigma_e \right)\bSigma_x^{-1}\bu_2\right)\otimes \left(\Eb_j\left(\bOmega_e \otimes \bSigma_e \right)\bSigma_x^{-1}\bu_2\right)}{\norm{\bu_2}^2}\bigg)\Bigg).$$
	For the first term,
	$$\bigg \Vert-\EE \big(\norm{\bu_2}^{-2}\big)\Big(\Eb_j\big(\Cb \otimes \Rb\big)\Big) \otimes \Big(\Eb_j\big(\Cb \otimes \Rb\big)\Big)\bSigma_{u_1}\bigg \Vert_F^2 \leq \norm{\Eb_j(\Cb \otimes \Rb)}_F^4\norm{\bSigma_{u_1}}_F^2\big(\EE(\norm{\bu_2}^{-2})\big)^2=O\left(\frac{1}{p_1^2p_2^4}\right),$$
	For the second term,
	$$\EE\Bigg(\frac{\big(\Eb_j(\bOmega_e \otimes \bSigma_e)\bSigma_x^{-1}\bu_2\big)\otimes\big(\Eb_j(\bOmega_e \otimes \bSigma_e)\bSigma_x^{-1}\bu_2\big)}{\norm{\bu_2}^2}\Bigg)=\text{Vec}\left(\big(\Eb_j(\bOmega_e\otimes \bSigma_e)\bSigma_x^{-1}\big)\EE\Big(\frac{\bu_2\bu_2^\top}{\norm{\bu_2}^2}\Big)\big(\Eb_j(\bOmega_e\otimes \bSigma_e)\bSigma_x^{-1}\big)^\top\right).$$
	Therefore,
	\begin{equation}\label{m4dec}
		\begin{aligned}
			&\bigg \Vert \EE\bigg(\frac{(\Eb_j(\bOmega_e \otimes \bSigma_e)\bSigma_x^{-1}\bu_2)\otimes(\Eb_j(\bOmega_e \otimes \bSigma_e)\bSigma_x^{-1}\bu_2)}{\norm{\bu_2}^2} \bigg)\bigg \Vert_F^2\\
			&\leq \norm{\Eb_j(\bOmega_e \otimes \bSigma_e)\bSigma_x^{-1}(\bOmega_e \otimes \bSigma_e)^\top\Eb_j^\top}_F^2\bigg \Vert \EE \bigg(\frac{\bSigma_x^{-1/2}\bu_2\bu_2^\top\bSigma_x^{-1/2}}{\norm{\bu_2}^2}\bigg) \bigg \Vert^2\\
			&\lesssim \frac{1}{p_1^2p_2^2}\norm{\Eb_j(\bOmega_e \otimes \bSigma_e)\bSigma_x^{-1}(\bOmega_e \otimes \bSigma_e)^\top\Eb_j^\top}_F^2\\
			&\asymp \frac{1}{p_1^2p_2^2}\norm{\Eb_j\bSigma_x\Eb_j^\top}_F^2=O\Big(\frac{1}{p_1p_2^2}\Big),
		\end{aligned}
	\end{equation}
	where we use the fact that $\bigg \Vert \EE \Big( \frac{\bSigma_x^{-1/2}\bu_2\bu_2^\top\bSigma_x^{-1/2}}{\norm{\bu_2}^2}\Big) \bigg \Vert^2=O\Big(\frac{1}{p_1^2p_2^2}\Big)$. To prove this, denote the spectral decomposition of $\bSigma_x$ as $\bGamma_x\bLambda_x\bGamma_x^\top$ where $\bGamma_x$ is an orthonormal matrix. Then
	$$ \bigg \Vert\EE \bigg(\bSigma_x^{-1/2}\frac{\bu_2\bu_2^\top}{\norm{\bu_2}^2}\bSigma_x^{-1/2} \bigg)\bigg \Vert ^2=\bigg \Vert \bLambda_x^{-1/2}\EE \bigg(\frac{\bGamma_x^\top\bu_2\bu_2^\top\bGamma_x}{\norm{\bGamma_x^\top\bu_2}^2}\bigg)\bLambda_x^{-1/2}\bigg \Vert^2.$$
	Denote $$\tilde{\bLambda}:=\EE \bigg(\frac{\bGamma_x^\top\bu_2\bu_2^\top\bGamma_x}{\norm{\bGamma_x^\top\bu_2}^2}\bigg),$$
	so that
	$$\tilde{\bLambda} \leq \bLambda_x^{1/2}\EE \bigg(\frac{\bLambda_x^{-1/2}\bGamma_x^\top\bu_2\bu_2^\top\bGamma_x\bLambda_x^{-1/2}}{\lambda_{p_1p_2}(\bSigma_x)\norm{\bLambda_x^{-1/2}\bGamma_x^\top\bu_2}^2}\bigg)\bLambda_x^{1/2}=\frac{1}{p_1p_2\lambda_{p_1p_2}(\bSigma_x)}\bLambda_x.$$
	Consequently,
	$$\bigg \Vert \EE\bigg( \frac{\bSigma_x^{-1/2}\bu_2\bu_2^\top\bSigma_x^{-1/2}}{\norm{\bu_2}^2}\bigg) \bigg \Vert^2=O\Big(\frac{1}{p_1^2p_2^2}\Big).$$
	Hence,
	 $$\norm{\EE\vec(\Xb)}_F^2=p_2^2\bigg(O\Big(\frac{1}{p_1^2p_2^4}\Big)+\Big(\frac{1}{p_1p_2^2}\Big)\bigg)=O\Big(\frac{1}{p_1}\Big).$$
	On the other hand,
	\begin{equation}
		\begin{aligned}
			\EE \norm{\Xb}_F^2&=\EE \bigg \Vert \frac{\bSigma_e^{1/2}\Zb_2\bOmega_e\bZ_2^\top\bSigma_e^{1/2}}{\norm{\Rb Z_1\Cb+\bSigma_e^{1/2}\Zb_2\bOmega_e^{1/2}}_F^2}\bigg \Vert_F^2\\
			&=\EE \bigg(\frac{1}{\norm{\Rb Z_1\Cb+\bSigma_e^{1/2}\Zb_2\bOmega_e^{1/2}}_F^4}\norm{\bSigma_e^{1/2}\bZ_2\bOmega_e\Zb_2^\top\bSigma_e^{1/2}}_F^2\bigg)\\
			&\leq C\sqrt{\EE\big(\norm{\Zb_2\bOmega_e\Zb_2^\top}_F^4\big)\EE\big(\norm{\bu_2}^{-8}\big)}=O\Big(\frac{1}{p_2}\Big).\nonumber
		\end{aligned}
	\end{equation}
	As a result,
	$$\norm{\Mb_4}_F^2=O\Big(\frac{1}{Tp_2}+\frac{1}{p_1}\Big)=o_p(1).$$
	
	Next, we consider $\|\Mb_4\Rb\|_F^2$. Indeed, the proof is very similar to that of $\|\Mb_4\|_F^2$. We have
	$$\EE\norm{\Mb_4\Rb}_F \leq \sqrt{\frac{1}{\bar{T}}\EE\norm{\tilde\Xb}_F^2+\frac{\bar{T}(\bar{T}-1)}{\bar{T}^2}\norm{\EE\tilde\Xb}_F^2},$$
	where $\tilde\Xb \overset{d}{=}\frac{\bSigma_e^{1/2}\Zb_2\bOmega_e\Zb_2^\top\bSigma_e^{1/2}\Rb}{\norm{\Rb Z_1\Cb^\top+\bSigma_e^{1/2}\Zb_2\bOmega_e^{1/2}}_F^2}.$ Define
	$$\tilde\Eb_j=(\zero,\dots,\underbrace{\Ib_{k_1}}_{\mbox{$j$-th\ block}},\dots,\zero)_{k_1\times(k_1p_2)}.$$
	Then,
	\begin{equation}
		\begin{aligned}
			\vec(\tilde\Xb)=&\frac{\sum_{j=1}^{p_2}\bigg(\Eb_j\Big(\big(\bOmega_e^{1/2} \otimes \bSigma_e^{1/2}\big)\vec\big(\Zb_2\big)\Big)\otimes \Big(\tilde\Eb_j\big(\bOmega_e^{1/2} \otimes \Rb^\top\bSigma_e^{1/2}\big)\vec\big(\Zb_2\big)\Big)\bigg)}{\norm{\bu_2}^2}\\
			=&\frac{\sum_{j=1}^{p_2}\bigg(\Big(\Eb_j\big(-(\Cb \otimes \Rb)u_1+(\bOmega_e \otimes \bSigma_e)\bSigma_x^{-1}\bu_2\big)\Big)\otimes \Big(\tilde\Eb_j\big(-(\Cb \otimes \Rb^\top\Rb)u_1+(\bOmega_e \otimes\Rb^\top \bSigma_e)\bSigma_x^{-1}\bu_2\big)\Big)\bigg)}{\norm{\bu_2}^2},\\
			\EE\big(\vec(\tilde\Xb)\big)=&\sum_{j=1}^{p_2}\Bigg(\EE \Big(\norm{\bu_2}^{-2}\Big)\Big(\Eb_j\big(\Cb \otimes \Rb\big)\Big) \otimes \Big(\tilde\Eb_j\big(\Cb \otimes\Rb^\top \Rb\big)\Big)\bSigma_{u_1} \\
			&+ \EE \bigg( \frac{\big(\Eb_j(\bOmega_e \otimes \bSigma_e)\bSigma_x^{-1}\bu_2\big)\otimes \big(\tilde\Eb_j(\bOmega_e \otimes \Rb^\top\bSigma_e)\bSigma_x^{-1}\bu_2\big)}{\norm{\bu_2}^2}\bigg)\Bigg).\nonumber
		\end{aligned}
	\end{equation}
	For the first term,  we see that
	\[
	\begin{split}
		\left\|\EE \big(\norm{\bu_2}^{-2}\big)\big(\Eb_j(\Cb \otimes \Rb)\big) \otimes \big(\tilde\Eb_j(\Cb \otimes\Rb^\top \Rb)\big)\bSigma_{u_1} \right\|_F^2\le \mathbb{E}\|\bu_2\|^{-4}\|\bSigma_{u_1}\|_F^2\|\Cb\|_F^4\|\Rb\|_F^2\|\Rb^\top\Rb\|_F^2=O\bigg(\frac{1}{p_1p_2^2}\bigg).
	\end{split}
	\]
	For the second term,
	\[
	\begin{split}
		&\EE\bigg(\frac{\big(\Eb_j(\bOmega_e \otimes \bSigma_e)\bSigma_x^{-1}\bu_2\big)\otimes\big(\tilde\Eb_j(\bOmega_e \otimes\Rb^\top \bSigma_e)\bSigma_x^{-1}\bu_2\big)}{\norm{\bu_2}^2}\bigg)\\
		=&\text{Vec}\Big(\big(\tilde\Eb_j(\bOmega_e\otimes\Rb^\top \bSigma_e)\bSigma_x^{-1}\big)\EE\bigg(\frac{\bu_2\bu_2^\top}{\norm{\bu_2}^2}\bigg)\big(\Eb_j(\bOmega_e\otimes \bSigma_e)\bSigma_x^{-1}\big)^\top\Big),
	\end{split}
	\]
	which implies that
	\[
	\begin{split}
		\bigg\|\EE\bigg(\frac{\big(\Eb_j(\bOmega_e \otimes \bSigma_e)\bSigma_x^{-1}\bu_2\big)\otimes\big(\tilde\Eb_j(\bOmega_e \otimes\Rb^\top \bSigma_e)\bSigma_x^{-1}\bu_2\big)}{\norm{\bu_2}^2}\bigg)\bigg\|_F^2\lesssim&\frac{1}{p_1^2p_2^2}\|\tilde\Eb_j(\bOmega_e\otimes \Rb^\top\bSigma_{e})\bSigma_x^{-1}(\bOmega_e\otimes \bSigma_e)^\top\Eb_j\|_F^2\\
		\lesssim &\frac{1}{p_1^2p_2^2}\|\Rb\|_F^2\le O\bigg(\frac{1}{p_1p_2^2}\bigg).
	\end{split}
	\]
	Therefore, we conclude that
	\[
	\|\mathbb{E}\tilde\Xb\|_F^2\le O\Big(\frac{1}{p_1}\Big).
	\]
	On the other hand,
	\[
	\mathbb{E}\|\tilde\Xb\|_F^2\le \mathbb{E}\|\Xb\|_F^2\|\Rb\|^2\le O\Big(\frac{p_1}{p_2}\Big).
	\]
	Consequently, we conclude that
	\[
	\frac{1}{p_1}\|\Mb_4\Rb\|_F^2\le O_p\bigg(\frac{1}{Tp_2}+\frac{1}{p_1^2}\bigg).
	\]

	Hence,
	\begin{equation}
		\frac{1}{p_1}\norm{\Mb_4\hat{\Rb}}_F^2=O_p\Big(\frac{1}{Tp_2}+\frac{1}{p_1^2}\Big)+o_p(1) \times \frac{1}{p_1}\norm{\hat{\Rb}-\Rb\hat{\Hb}}_F^2,\nonumber
	\end{equation}
which concludes the lemma.
\end{proof}

\begin{lemma}\label{I7}
	Under Assumptions \ref{joint_elliptical}-\ref{regular_noise}, we have
	 $$\left\|\frac{1}{p_1p_2}\hat{\Hb}_R^\top\Rb^\top\Eb_t\Cb\hat{\Hb}_C\right\|_F^2=O_p\Big(\frac{1}{p_1p_2}\Big).$$
\end{lemma}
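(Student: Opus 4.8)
The plan is to strip off the $O_p(1)$-sized matrices $\hat\Hb_R,\hat\Hb_C$, reduce the claim to a bound on $\|\Rb^\top\Eb_t\Cb\|_F^2$, and then use the matrix-elliptical stochastic representation of $\Eb_t$ from Assumption \ref{joint_elliptical} together with Assumptions \ref{strong_factor}--\ref{regular_noise}. First, by submultiplicativity of the norms and the fact that $\hat\Hb_R,\hat\Hb_C$ are square matrices of fixed dimension with $\|\hat\Hb_R\|_F^2=O_p(1)$ and $\|\hat\Hb_C\|_F^2=O_p(1)$ (Lemma \ref{m1} and its column-side analogue),
\[
\Big\|\tfrac{1}{p_1p_2}\hat\Hb_R^\top\Rb^\top\Eb_t\Cb\hat\Hb_C\Big\|_F^2\le\frac{\|\hat\Hb_R\|_F^2\,\|\hat\Hb_C\|_F^2}{p_1^2p_2^2}\,\|\Rb^\top\Eb_t\Cb\|_F^2=O_p(1)\cdot\frac{\|\Rb^\top\Eb_t\Cb\|_F^2}{p_1^2p_2^2},
\]
so it suffices to show $\|\Rb^\top\Eb_t\Cb\|_F^2=O_p(p_1p_2)$.

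For that, I would write $\Eb_t\overset{d}{=}\tfrac{r_t}{\|\bZ_t\|_2}\bSigma_e^{1/2}\Zb_t^E\bOmega_e^{1/2}$ with $\Zb_t^E$ having i.i.d.\ $N(0,1)$ entries and $\|\bZ_t\|_2^2\sim\chi^2_{k_1k_2+p_1p_2}$, so that, setting $\Ab:=\Rb^\top\bSigma_e^{1/2}$ and $\Bb:=\bOmega_e^{1/2}\Cb$, we have $\Rb^\top\Eb_t\Cb\overset{d}{=}\tfrac{r_t}{\|\bZ_t\|_2}\Ab\Zb_t^E\Bb$. Crucially, no moment condition is placed on $r_t$ (Cauchy-type tails are permitted), so I must not take a full expectation of $\|\Rb^\top\Eb_t\Cb\|_F^2$; instead I handle the scalar and the Gaussian factor separately. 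Since $r_t=O_p(\sqrt{p_1p_2})$ by Assumption \ref{joint_elliptical} and $\|\bZ_t\|_2^2/(p_1p_2)\to1$ in probability for fixed $k_1,k_2$, we get $(r_t/\|\bZ_t\|_2)^2=O_p(1)$. For the Gaussian factor, a direct second-moment computation using the i.i.d.\ entries of $\Zb_t^E$ gives $\EE\|\Ab\Zb_t^E\Bb\|_F^2=\|\Ab\|_F^2\,\|\Bb\|_F^2$, and $\|\Ab\|_F^2=\tr(\Rb^\top\bSigma_e\Rb)\le\|\bSigma_e\|\,\|\Rb\|_F^2=O(p_1)$, $\|\Bb\|_F^2=\tr(\Cb^\top\bOmega_e\Cb)\le\|\bOmega_e\|\,\|\Cb\|_F^2=O(p_2)$ by Assumptions \ref{strong_factor}--\ref{regular_noise}; hence $\|\Ab\Zb_t^E\Bb\|_F^2=O_p(p_1p_2)$ by Markov's inequality. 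Multiplying the two $O_p$ bounds (which is legitimate even though the scalar $r_t/\|\bZ_t\|_2$ and $\Zb_t^E$ are dependent) yields $\|\Rb^\top\Eb_t\Cb\|_F^2=O_p(p_1p_2)$, and plugging this back into the first display gives the claim.

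The main obstacle, though a mild one, is precisely the absence of moment conditions on $r_t$: one cannot bound $\EE\|\Rb^\top\Eb_t\Cb\|_F^2$ directly and must isolate the $O_p(1)$ scalar $r_t/\|\bZ_t\|_2$ from the Gaussian part. A related pitfall is that the crude bound $\|\Ab\Zb_t^E\Bb\|_F\le\|\Ab\|\,\|\Zb_t^E\Bb\|_F$ (or variants) loses a spurious dimension factor, yielding a squared bound of order $p_1^2p_2$ or worse and hence not the required rate; it is the exact Frobenius second-moment identity $\EE\|\Ab\Zb_t^E\Bb\|_F^2=\|\Ab\|_F^2\|\Bb\|_F^2$, which contracts the $p_1$- and $p_2$-dimensional indices of $\Zb_t^E$ correctly against $\Rb$ and $\Cb$, that delivers the sharp $O_p(p_1p_2)$.
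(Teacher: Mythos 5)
Your proposal is correct and follows essentially the same route as the paper: both use the stochastic representation $\Eb_t\overset{d}{=}\tfrac{r_t}{\|\bZ_t\|_2}\bSigma_e^{1/2}\Zb_t^E\bOmega_e^{1/2}$, reduce the claim to $\|\Rb^\top\Eb_t\Cb\|_F^2=O_p(p_1p_2)$ via $\|\hat\Hb_R\|_F^2=O_p(1)$ and $\|\hat\Hb_C\|_F^2=O_p(1)$, and obtain that bound from $r_t/\|\bZ_t\|_2=O_p(1)$ together with the Kronecker/second-moment identity giving $\|\Rb^\top\bSigma_e^{1/2}\|_F^2\|\bOmega_e^{1/2}\Cb\|_F^2=O(p_1p_2)$. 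Your write-up merely makes explicit the steps (separating the scalar from the Gaussian factor, Markov's inequality) that the paper leaves implicit in its final $O_p(p_1p_2)$ assertion.
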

\begin{proof}
	According to equation \ref{eq_joint_ellip_model}, we have $\Eb_t = \frac{r_t}{\|\bZ_t\|_2}\bSigma_e^{1/2} \Zb_t^{E}\bOmega_e^{1/2}$, then
	\begin{equation}
		\begin{aligned}
			\norm{\Rb^\top\Eb_t\Cb}_F^2&=\bigg \Vert\frac{r_t}{\|\bZ_t\|_2}\Rb^\top\bSigma_e^{1/2} \Zb_t^{E}\bOmega_e^{1/2}\Cb\bigg\Vert_F^2=\bigg \Vert\frac{r_t}{\|\bZ_t\|_2}(\Cb^\top\bOmega_e^{1/2}) \otimes (\Rb^\top\bSigma_e^{1/2})\vec(\Zb_t^{E})\bigg \Vert_F^2\\
			&=\bigg \Vert \frac{r_t}{\|\bZ_t\|_2}\left(\zero\quad (\Cb^\top\bOmega_e^{1/2}) \otimes (\Rb^\top\bSigma_e^{1/2})\right)\bZ_t\bigg\Vert_F^2=O_p(p_1p_2).
		\end{aligned}
	\end{equation}
	Since $\norm{\hat{\Hb}_R}_F^2=O_p(1)$, $\norm{\hat{\Hb}_C}_F^2=O_p(1)$, we have
	$$\Big \Vert \frac{1}{p_1p_2}\hat{\Hb}_R^\top\Rb^\top\Eb_t\Cb\hat{\Hb}_C\Big \Vert_F^2=O_p\Big(\frac{1}{p_1p_2}\Big).$$
\end{proof}

\begin{lemma}\label{I2I3}
	Under Assumptions \ref{joint_elliptical}-\ref{regular_noise}, we have
	$$\Big \Vert\frac{1}{p_1}\hat{\Rb}^\top(\hat{\Rb}-\Rb\hat{\Hb}_R)\Big \Vert_F^2=O_p\Big(\frac{1}{Tp_1p_2}\Big),\quad \Big \Vert\frac{1}{p_2}(\hat{\Cb}-\Cb\hat{\Hb}_C)^\top\hat{\Cb}\Big \Vert_F^2=O_p\Big(\frac{1}{Tp_1p_2}\Big).$$
\end{lemma}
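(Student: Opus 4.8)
The plan is to reuse the expansion \eqref{R-RH} obtained in the proof of Theorem \ref{th1}, namely $\hat\Rb-\Rb\hat\Hb_R=(\Mb_2\Rb^\top+\Rb\Mb_3+\Mb_4)\hat\Rb\hat\bLambda^{-1}$. Left-multiplying by $\tfrac1{p_1}\hat\Rb^\top$ gives
\begin{equation*}
\frac1{p_1}\hat\Rb^\top(\hat\Rb-\Rb\hat\Hb_R)=\Big(\frac1{p_1}\hat\Rb^\top\Mb_2\Rb^\top\hat\Rb+\frac1{p_1}\hat\Rb^\top\Rb\Mb_3\hat\Rb+\frac1{p_1}\hat\Rb^\top\Mb_4\hat\Rb\Big)\hat\bLambda^{-1}.
\end{equation*}
Since $\|\hat\bLambda^{-1}\|_F=O_p(1)$ by Lemma \ref{eigenvalue}, it suffices to bound the three $k_1\times k_1$ blocks in the parenthesis. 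The essential point — and the reason the rate here ($\tfrac1{Tp_1p_2}$) is sharper than the $\tfrac1{Tp_2}+\tfrac1{p_1^2}$ of Theorem \ref{th1} — is that each block is \emph{doubly contracted}: besides the leading $\hat\Rb^\top$, each summand already carries a contraction against $\Rb^\top$ (inside $\Mb_2\Rb^\top$ and $\Rb\Mb_3$) or against $\Rb$ on both sides (inside $\hat\Rb^\top\Mb_4\hat\Rb$). One must therefore \emph{not} estimate these blocks by $\|\hat\Rb\|$ times the Frobenius bounds on $\Mb_2,\Mb_3,\Mb_4$ from Lemmas \ref{m2m3}--\ref{m4}, which is far too lossy, but bound the contracted objects directly.

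First I would substitute $\hat\Rb=\Rb\hat\Hb_R+(\hat\Rb-\Rb\hat\Hb_R)$ in every occurrence of $\hat\Rb$ and discard the resulting cross terms using $\|\hat\Hb_R\|_F=O_p(1)$ (Lemma \ref{m1}) and $\tfrac1{p_1}\|\hat\Rb-\Rb\hat\Hb_R\|_F^2=O_p(\tfrac1{Tp_2}+\tfrac1{p_1^2})$ (Theorem \ref{th1}); a routine check shows these remainders are of strictly smaller order than the target. After also invoking $\tfrac1{p_1}\Rb^\top\Rb\to\Vb_1$ (with eigenvalues bounded away from $0$ and $\infty$) and $\hat\Hb_R^\top\Vb_1\hat\Hb_R\to\Ib_{k_1}$, the problem collapses to showing
\begin{equation*}
\|\Rb^\top\Mb_2\|_F^2=O_p\Big(\tfrac1{Tp_1p_2}\Big),\qquad \|\Mb_3\Rb\|_F^2=O_p\Big(\tfrac1{Tp_1p_2}\Big),\qquad \tfrac1{p_1^2}\|\Rb^\top\Mb_4\Rb\|_F^2=O_p\Big(\tfrac1{Tp_1p_2}\Big),
\end{equation*}
each up to lower-order remainders.

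For each of these I would rerun the U-statistic symmetrization of Lemmas \ref{m2m3} and \ref{m4}: for instance $\Rb^\top\Mb_2=\tfrac1{T!}\sum_\sigma\Rb^\top\Mb_2^\sigma$ with $\Rb^\top\Mb_2^\sigma=\tfrac1{\bar{T}}\sum_{s=1}^{\bar{T}}\Ab_s$ an average of i.i.d.\ kernels $\Ab_s=\Rb^\top(\Eb_{2s-1}-\Eb_{2s})\Cb(\Fb_{2s-1}-\Fb_{2s})^\top/\|\Xb_{2s-1}-\Xb_{2s}\|_F^2$, so that $\EE\|\Rb^\top\Mb_2\|_F^2\le \tfrac1{\bar{T}}\EE\|\Ab_1\|_F^2+\|\EE\Ab_1\|_F^2$, and analogously for $\Mb_3\Rb$ and for $\Rb^\top\Mb_4\Rb$. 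The substantive work — and the main obstacle — is bounding these second moments and, especially, the bias terms $\|\EE\Ab_1\|_F^2$ for the \emph{contracted} kernels: this is where the improvement over Lemmas \ref{m2m3}--\ref{m4} must be extracted. I would again use the distribution-free reduction to the matrix-normal case (Lemma \ref{lemfree}), the $(\bu_1,\bu_2)$-decomposition from the proof of Lemma \ref{m2m3}, and Lemmas S1--S2 of \cite{He2022large}, to show that because $\Rb$ multiplies the numerator in a direction that pairs with the denominator $\|\Xb_{2s-1}-\Xb_{2s}\|_F^2\asymp p_1p_2$, one gains the extra $p_1$ (resp.\ $p_2$) factors relative to the un-contracted estimates — in particular $\EE\|\Ab_1\|_F^2=O(\tfrac1{p_1p_2})$ and $\|\EE\Ab_1\|_F^2$ is negligible at the $\tfrac1{Tp_1p_2}$ scale — which yields the three displays above. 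Assembling the three blocks gives $\|\tfrac1{p_1}\hat\Rb^\top(\hat\Rb-\Rb\hat\Hb_R)\|_F^2=O_p(\tfrac1{Tp_1p_2})$. The bound for $\tfrac1{p_2}(\hat\Cb-\Cb\hat\Hb_C)^\top\hat\Cb$ follows by the identical argument applied to the column-side version of \eqref{R-RH}, interchanging rows with columns and $p_1$ with $p_2$.
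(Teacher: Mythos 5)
Your proposal is correct and follows essentially the same route as the paper: both reduce the problem via the identity $\hat\Rb-\Rb\hat\Hb_R=(\Mb_2\Rb^\top+\Rb\Mb_3+\Mb_4)\hat\Rb\hat\bLambda^{-1}$ to bounding the contracted quantities $\Rb^\top\Mb_2$, $\Mb_3\Rb$ and $\Rb^\top\Mb_4\Rb$ (the paper merely peels off $\hat\Rb^\top=(\hat\Rb-\Rb\hat\Hb_R)^\top+\hat\Hb_R^\top\Rb^\top$ first, which is a cosmetic difference), and both then rerun the U-statistic symmetrization and the Gaussian $(\bu_1,\bu_2)$ decomposition of Lemmas \ref{m2m3}--\ref{m4} on these contracted kernels to gain the extra $p_1$ factor, exactly as you describe. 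The second-moment and bias bounds you state ($\EE\|\Ab_1\|_F^2=O(1/(p_1p_2))$ for the $\Rb^\top\Mb_2$ kernel, etc.) coincide with those the paper derives.
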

\begin{proof}
	Note that $$\frac{1}{p_1}\hat{\Rb}^\top(\hat{\Rb}-\Rb\hat{\Hb}_R)=\frac{1}{p_1}(\hat{\Rb}-\Rb\hat{\Hb}_R)^\top(\hat{\Rb}-\Rb\hat{\Hb}_R)+\frac{1}{p_1}\hat{\Hb}_R^\top\Rb^\top(\hat{\Rb}-\Rb\hat{\Hb}_R)$$ while $\frac{1}{p_1}\norm{\hat\Rb-\Rb\hat{\Hb}_{R}}_F^2=O_p\left(\frac{1}{Tp_2}+\frac{1}{p_1^2}\right)$ and $\hat{\Hb}_R=O_p(1)$. Hence it suffices to bound $\Rb^\top(\hat{\Rb}-\Rb\hat{\Hb}_R)$.
	By equation (\ref{R-RH}),
	 $$\Rb^\top(\hat{\Rb}-\Rb\hat{\Hb}_R)=(\Rb^\top\Mb_2\Rb^\top+\Rb^\top\Rb\Mb_3+\Rb^\top\Mb_4)\hat{\Rb}\hat{\bLambda}^{-1}.$$
	We will bound the three error terms separately.
	
	Firstly, similar to the proof of equation \ref{EM2} in Lemma \ref{m2m3}, we have
	$\EE \norm{\Rb^\top\Mb_2}_F \lesssim \sqrt{\frac{1}{T}\EE\norm{\Xb}_F^2+\norm{\EE \Xb}_F^2}$, where $\Xb \overset{d}{=}\frac{\Rb^\top\bSigma_e^{1/2}\Zb_2\bOmega_e^{1/2}\Cb\Zb_1^\top}{\norm{\Rb\Zb_1\Cb^\top+\bSigma_e^{1/2}\Zb_2\bOmega_e^{1/2}}_F^2}$. We still assume $k_1=k_2=1$. Then,
	\begin{equation}
		\begin{aligned}
			 \vec(\Xb)&=\frac{Z_1(\Cb^\top\bOmega_e^{1/2})\otimes(\Rb^\top\bSigma_e^{1/2})\vec(\Zb_2)}{\norm{\vec(\Rb Z_1 \Cb^\top+\bSigma_e^{1/2}\Zb_2\bOmega_e^{1/2})}^2}\\
			&=\frac{\left(u_1+\left(\Cb \otimes \Rb \right)^\top\bSigma_x^{-1}\bu_2\right)\left(\Cb\bOmega_e^{1/2} \otimes \Rb\bSigma_e^{1/2}\right) \left(-(\bOmega_e^{1/2} \otimes \bSigma_e^{1/2})^{-1}(\Cb \otimes \Rb)u_1+(\bOmega_e^{1/2} \otimes \bSigma_e^{1/2})\bSigma_x^{-1}\bu_2\right)}{\norm{\bu_2}_F^2}\\
			&=\frac{-u_1(\Cb^\top\Cb \otimes \Rb^\top \Rb)u_1+u_1(\Cb \otimes \Rb)^\top(\bOmega_e \otimes \bSigma_e)\bSigma_x^{-1}\bu_2}{\norm{\bu_2}_F^2}\\
			&\frac{-(\Cb \otimes \Rb)^\top \bSigma_x^{-1}\bu_2(\Cb^\top\Cb \otimes \Rb^\top \Rb)u_1+(\Cb \otimes \Rb)^\top \bSigma_x^{-1}\bu_2(\Cb \otimes \Rb)^\top(\bOmega_e \otimes \bSigma_e)\bSigma_x^{-1}\bu_2}{\norm{\bu_2}_F^2}.\nonumber
		\end{aligned}
	\end{equation}
	Following similar technique as in the proof of Lemma \ref{m2m3},
	$$\norm{\EE \vec(\Xb)}_F^2 \lesssim \norm{\Cb \otimes \Rb}_F^4 \norm{\bSigma_{u_1}}_F^2(\EE \norm{\bu_2}^{-2})^2+\norm{\Cb \otimes \Rb}_F^2\norm{\bOmega_e \otimes \bSigma_e}^2\Big \Vert\bSigma_x^{-1}\EE \Big(\frac{\bu_2\bu_2^\top}{\norm{\bu_2}^2}\Big)\bSigma_x^{-1}(\Cb \otimes \Rb)\Big \Vert_F^2=O\Big(\frac{1}{p_1^2p_2^2}\Big).$$
	On the other hand,
	\begin{equation}
		\begin{aligned}
			\EE \norm{\Xb}_F^2 &\lesssim \mathbb{E}\big(\|\bu_2\|^{-4}\big)\|\Rb^\top\bSigma_e^{1/2}\Zb_2\bOmega_e^{1/2}\Cb Z_1\|_F^2\le O\Big(\frac{1}{p_1p_2}\Big).\nonumber
		\end{aligned}
	\end{equation}
	As a result,
	$$\norm{\Rb^\top\Mb_2}_F^2=O_p\Big(\frac{1}{Tp_1p_2}+\frac{1}{p_1^2p_2^2}\Big).$$
	Similarly, $\norm{\Mb_3\Rb}_F^2=O_p\Big(\frac{1}{Tp_1p_2}+\frac{1}{p_1^2p_2^2}\Big)$ and
	
	$$\norm{\Mb_3\hat{\Rb}}_F^2 \lesssim \norm{\Mb_3\Rb}_F^2+\norm{\Mb_3}_F^2\norm{\hat{\Rb}-\Rb\hat{\Hb}_R}_F^2=O_p\Big(\frac{1}{Tp_1p_2}+\frac{1}{T^2p_2^2}+\frac{1}{p_1^2p_2^2}\Big).$$

	For the last term, we can verify that
	$\EE \norm{\Rb^\top \Mb_4 \Rb}_F \lesssim \sqrt{\frac{1}{T}\EE \norm{\Xb}_F^2+\norm{\EE\Xb}_F^2}$, where $\Xb \overset{d}{=}\frac{\Rb^\top\bSigma_e^{1/2}\Zb_2\bOmega_e\bZ_2^\top\bSigma_e^{1/2}\Rb}{\norm{\Rb Z_1\Cb^\top+\bSigma_e^{1/2}\Zb_2\bOmega_e^{1/2}}_F^2}$ by slightly abusing the notation $\Xb$.
	Let $\Nb=\Rb^\top\bSigma_e^{1/2}\Zb_2\bOmega_e^{1/2}$, $\vec(\Nb)=\big(\bOmega_e^{1/2} \otimes (\Rb^\top\bSigma_e^{1/2})\big)\vec(\Zb_2)$. Then we have
	$$\vec(\Xb)=\frac{\sum_{j=1}^{p_2}\bN_{\cdot,j}\otimes \bN_{\cdot,j}}{\norm{\bu_2}^2}.$$
	Let $\Eb_j\Big(\big(\bOmega_e^{1/2} \otimes (\Rb^\top\bSigma_e^{1/2})\big)\vec(\Zb_2)\Big)$ extract the $\big((j-1)p_1+1\big)$-th to $jp_1$-th entries of $\big(\bOmega_e^{1/2} \otimes (\Rb^\top\bSigma_e^{1/2})\big)\vec(\Zb_2)$. Then,
	\begin{equation}
		\begin{aligned}
			&\vec(\Xb)=\frac{\sum_{j=1}^{p_2}\bigg(\Eb_j \Big(\big(\bOmega_e^{1/2} \otimes (\Rb^\top \bSigma_e^{1/2})\big)\vec(\Zb_2)\Big)\bigg)\otimes \bigg(\Eb_j \Big(\big(\bOmega_e^{1/2} \otimes (\Rb^\top \bSigma_e^{1/2})\big)\vec(\Zb_2)\Big)\bigg)}{\norm{\bu_2}^2},\nonumber
		\end{aligned}
	\end{equation}
where
\[
\Eb_j \big(\bOmega_e^{1/2} \otimes ( \Rb^\top \bSigma_e^{1/2})\vec(\Zb_2)  \big)=\Eb_j\left(-(\Cb \otimes \Rb^\top\Rb)u_1+(\Ib_{p_2} \otimes \Rb^\top)(\bOmega_e \otimes \bSigma_e )\bSigma_x^{-1}\bu_2\right).
\]
	Therefore,
	\begin{equation}
		\begin{aligned}
			\EE \vec(\Xb)&=\sum_{j=1}^{p_1} \EE(\norm{\bu_2}^{-2})\big(\Eb_j(\Cb \otimes \Rb^\top\Rb)\big)\otimes\big(\Eb_j(\Cb \otimes \Rb^\top\Rb)\big)\bSigma_{u_1}\\
			&+\sum_{j=1}^{p_1}\EE \Bigg(\frac{\Big(\Eb_j\big((\Ib_{p_2}\otimes \Rb^\top)(\bOmega_e \otimes \bSigma_e)\bSigma_x^{-1}\bu_2\big)\Big)\otimes \Big(\Eb_j\big((\Ib_{p_2}\otimes \Rb^\top)(\bOmega_e \otimes \bSigma_e)\bSigma_x^{-1}\bu_2\big)\Big) }{\norm{\bu_2}^2}\Bigg).\nonumber
		\end{aligned}
	\end{equation}
It is easy to verify that
	$$\bigg \Vert\EE(\norm{\bu_2}^{-2})\big(\Eb_j(\Cb \otimes \Rb^\top\Rb)\big)\otimes\big(\Eb_j(\Cb \otimes \Rb^\top\Rb)\big)\bSigma_{u_1}\bigg \Vert_F^2=O\Big(\frac{1}{p_2^4}\Big),$$
	and
	\begin{equation}
		\begin{aligned}
			&\Bigg \Vert\EE \Bigg(\frac{\Big(\Eb_j\big((\Ib_{p_2}\otimes \Rb^\top)(\bOmega_e \otimes \bSigma_e)\bSigma_x^{-1}\bu_2\big)\Big)\otimes \Big(\Eb_j\big((\Ib_{p_2}\otimes \Rb^\top)(\bOmega_e \otimes \bSigma_e)\bSigma_x^{-1}\bu_2\big)\Big) }{\norm{\bu_2}^2}\Bigg) \Bigg \Vert_F^2\\
			&\leq \norm{\big(\Eb_j(\Ib_{p_2}\otimes \Rb^\top)(\bOmega_e \otimes \bSigma_e)\big)\bSigma_x^{-1}\big(\Eb_j(\Ib_{p_2}\otimes \Rb^\top)(\bOmega_e \otimes \bSigma_e)\big)^\top}_F^2\bigg \Vert \EE \bigg(\frac{\bSigma_x^{-1/2}\bu_2\bu_2^\top\bSigma_x^{-1/2}}{\norm{\bu_2}^2}\bigg)\bigg \Vert^2\\
			&\leq \norm{\Eb_j(\Ib_{p_2}\otimes \Rb^\top)}_F^4\norm{\bOmega_e \otimes \bSigma_e}^4\norm{\bSigma_x^{-1}}^2\bigg \Vert \EE \bigg(\frac{\bSigma_x^{-1/2}\bu_2\bu_2^\top\bSigma_x^{-1/2}}{\norm{\bu_2}^2}\bigg)\bigg \Vert^2\\
			&=O\Big(\frac{1}{p_2^2}\Big).\nonumber
		\end{aligned}
	\end{equation}
	Then, $\norm{\EE\vec(\Xb)}_F^2=p_2^2O(\frac{1}{p_2^4}+\frac{1}{p_2^2})=O(1)$. On the other hand,
	$$\EE\norm{\Xb}_F^2 \leq C\sqrt{\EE(\norm{\Rb^\top\bSigma_e^{1/2}\Zb_2\bOmega\Zb_2^\top\bSigma_e^{1/2}\Rb}_F^4)\EE(\norm{\bu_2}^{-8})}=O(1).$$
	Hence, $\norm{\Rb^\top\Mb_4\Rb}_F^2=O_p(1)$ and
	$$\bigg\Vert\frac{1}{p_1}\Rb^\top\Mb_4\hat{\Rb}\bigg \Vert_F^2 \lesssim \frac{1}{p_1^2}\Big(\norm{\Rb^\top\Mb_4\Rb}_F^2\norm{\hat{\Hb}_R}_F^2+\norm{\Rb^\top\Mb_4}_F^2\norm{\hat{\Rb}-\Rb\hat{\Hb}_R}_F^2\Big)=O\Big(\frac{1}{p_1^2}+\frac{1}{T^2p_2^2}\Big).$$
	As a result,
	 $$\bigg\Vert\frac{1}{p_1}\hat{\Rb}^\top\big(\hat{\Rb}-\Rb\hat{\Hb}_R\big)\bigg\Vert_F^2=O_p\Big(\frac{1}{Tp_1p_2}+\frac{1}{p_1^2}+\frac{1}{T^2p_2^2}\Big)=O_p\Big(\frac{1}{Tp_1p_2}\Big).$$
	
Similarly, we can get $$\bigg \Vert\frac{1}{p_2}(\hat{\Cb}-\Cb\hat{\Hb}_C)^\top\hat{\Cb}\bigg \Vert_F^2=O_p\Big(\frac{1}{Tp_1p_2}\Big),$$
	which concludes the lemma.
\end{proof}

\begin{lemma}\label{eigenvalue}
	Under Assumptions \ref{joint_elliptical}-\ref{regular_noise}, for any constant $\epsilon>0$ we have
	\begin{equation}
		\left\{
		\begin{aligned}
			&\lambda_j(\hat{\Kb}_r) \asymp 1,\quad \quad j \leq k_1,\\
			&\lambda_j(\hat{\Kb}_r) \le  O_p(T^{-1/2+\epsilon}+p_2^{-1/2}),\quad j \textgreater k_1. \nonumber
		\end{aligned}
		\right.
	\end{equation}
	\begin{equation}
		\left\{
		\begin{aligned}
			&\lambda_j(\hat{\Kb}_c) \asymp 1,\quad \quad j \leq k_2,\\
			&\lambda_j(\hat{\Kb}_c) \le O_p(T^{-1/2+\epsilon}+p_1^{-1/2}),\quad j \textgreater k_2. \nonumber
		\end{aligned}
		\right.
	\end{equation}
\end{lemma}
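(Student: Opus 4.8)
The plan is to split $\hat{\Kb}_r$ into a rank-$k_1$ ``signal'' block carrying the top $k_1$ eigenvalues at order $1$ and a negligible remainder; the statement for $\hat{\Kb}_c$ then follows by transposing the argument (rows $\leftrightarrow$ columns, $p_1\leftrightarrow p_2$). I reuse the decomposition from the proof of Theorem~\ref{th1},
$$
\hat{\Kb}_r=\Rb\Mb_1\Rb^\top+\Mb_2\Rb^\top+\Rb\Mb_3+\Mb_4 ,
$$
in which $\Rb\Mb_1\Rb^\top$ has rank $\le k_1$, the cross terms $\Mb_2\Rb^\top$ and $\Rb\Mb_3=(\Mb_2\Rb^\top)^\top$ each have rank $\le k_1$, and $\Mb_4\succeq\mathbf{0}$ is the pure-noise block.

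First I show the signal block carries the $k_1$ leading eigenvalues at order $1$. Since $\Mb_1\succeq\mathbf{0}$, the nonzero eigenvalues of $\Rb\Mb_1\Rb^\top$ equal those of $\Mb_1\Rb^\top\Rb$, and by Assumption~\ref{strong_factor} one has $\Rb^\top\Rb=p_1\Vb_1(1+o_p(1))$ with $\Vb_1$ well conditioned, so $\lambda_j(\Rb\Mb_1\Rb^\top)\asymp p_1\lambda_j(\Mb_1)$ for $j\le k_1$. The upper bound $\lambda_1(\Mb_1)=O_p(p_1^{-1})$ is Lemma~\ref{m1}. For the lower bound $\lambda_{k_1}(\Mb_1)\gtrsim p_1^{-1}$ I exploit the scale-invariance of the kernel (Lemmas~\ref{lem} and~\ref{lemfree}) to replace $(\Fb_t-\Fb_s,\Eb_t-\Eb_s)$ by an independent jointly Gaussian pair, note that $\|\Xb_t-\Xb_s\|_F^2\asymp p_1p_2$ with probability tending to one, and combine $\EE[(\Fb_t-\Fb_s)\Cb^\top\Cb(\Fb_t-\Fb_s)^\top]\asymp\mathrm{tr}(\Cb^\top\Cb)\,\Ib_{k_1}\asymp p_2\Ib_{k_1}$ with a concentration bound for the bounded-kernel U-statistic, yielding $\Mb_1=\EE\Mb_1+o_p(p_1^{-1})$ and $\EE\Mb_1\succeq c\,p_1^{-1}\Ib_{k_1}$. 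Hence $\lambda_j(\Rb\Mb_1\Rb^\top)\asymp 1$ for $j\le k_1$ and $=0$ for $j>k_1$.

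Next I bound the operator norm of the remainder $\Mb_2\Rb^\top+\Rb\Mb_3+\Mb_4$. Lemma~\ref{m2m3} and $\|\Rb\|=O_p(\sqrt{p_1})$ (again by Assumption~\ref{strong_factor}) give $\|\Mb_2\Rb^\top\|\le\|\Mb_2\|_F\|\Rb\|=O_p\big((Tp_2)^{-1/2}+p_1^{-1/2}p_2^{-1}\big)$, and the same for $\|\Rb\Mb_3\|$. The bound on $\|\Mb_4\|$ is the delicate piece: the crude estimate $\|\Mb_4\|\le\|\Mb_4\|_F=O_p\big((Tp_2)^{-1/2}+p_1^{-1/2}\big)$ from Lemma~\ref{m4} already gives $o_p(1)$, and to reach the stated rate I split $\Mb_4=\EE\Mb_4+(\Mb_4-\EE\Mb_4)$: the deterministic noise floor $\|\EE\Mb_4\|$ is controlled by the Gaussian representation together with $\lambda_1(\bSigma_e)=O(1)$ (Assumption~\ref{regular_noise}), while $\Mb_4-\EE\Mb_4$ is the centered part of a U-statistic whose kernel $k_{\mathrm{MK}}^r$ has spectral norm $\le 1$, so a matrix-Bernstein inequality for U-statistics (after a Hoeffding decomposition) gives $\|\Mb_4-\EE\Mb_4\|=O_p(T^{-1/2+\epsilon})$; one may equivalently bound $\|\hat{\Kb}_r-\EE\hat{\Kb}_r\|=O_p(T^{-1/2+\epsilon})$ directly. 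Altogether the remainder has operator norm $R_T=O_p(T^{-1/2+\epsilon}+p_2^{-1/2})=o_p(1)$.

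Finally, Weyl's inequality combines the two bounds: for $j\le k_1$, $\lambda_{k_1}(\Rb\Mb_1\Rb^\top)-R_T\le\lambda_j(\hat{\Kb}_r)\le\lambda_1(\Rb\Mb_1\Rb^\top)+R_T$, hence $\lambda_j(\hat{\Kb}_r)\asymp 1$; for $j>k_1$, $\lambda_j(\hat{\Kb}_r)\le\lambda_{k_1+1}(\Rb\Mb_1\Rb^\top)+R_T=R_T$, which is the claimed bound, and the mirror argument for $\hat{\Kb}_c$ produces $p_1^{-1/2}$ in place of $p_2^{-1/2}$. The main obstacle is the lower bound $\lambda_{k_1}(\Mb_1)\gtrsim p_1^{-1}$: it is the only place where one must leave the already-proven Frobenius-norm estimates and genuinely use pervasiveness of the factors, through a careful analysis of a matrix-valued ratio of quadratic forms in Gaussian matrices; obtaining the sharp $T^{-1/2+\epsilon}$ operator-norm concentration for the U-statistic is a secondary, more standard technical ingredient.
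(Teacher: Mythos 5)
Your proposal follows essentially the same route as the paper's proof: the same signal/cross/noise decomposition of the Kendall's tau matrix, an $O(p_2^{-1/2})$ operator-norm bound on the expected noise block, matrix-concentration at rate $T^{-1/2+\epsilon}$ for the U-statistic after reducing it (via a permutation/Hoeffding argument) to an average of independent bounded kernels, and Weyl's inequality to conclude. The one step you leave as a sketch---the lower bound on $\lambda_{k_1}$ of the signal block---is precisely where the paper works hardest: it removes the random denominator via the identity $a^{-1}=b^{-1}-(ab)^{-1}(a-b)$ and then reduces the resulting matrix-valued ratio of Gaussian quadratic forms to Lemma 3.1 of \cite{yu2019robust}, so your heuristic of bounding $\EE[(\Fb_t-\Fb_s)\Cb^\top\Cb(\Fb_t-\Fb_s)^\top]$ and the denominator separately needs exactly this kind of justification (the expectation of the ratio is not the ratio of expectations), as you yourself flag.
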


\begin{proof}
	We start with the population Kendall's tau matrix $\Kb_r$. By definition,
	\begin{equation}\label{Kb_r decompose}
		 \Kb_r=\mathbb{E}\frac{(\Rb\Zb_1\Cb^\top+\bSigma_{e}^{1/2}\Zb_2\bOmega_{e}^{1/2})(\Rb\Zb_1\Cb^\top+\bSigma_{e}^{1/2}\Zb_2\bOmega_{e}^{1/2})^\top}{\|\Rb\Zb_1\Cb^\top+\bSigma_{e}^{1/2}\Zb_2\bOmega_{e}^{1/2}\|_F^2}.
	\end{equation}
	Write
	\[
	 \Kb_r^4:=\mathbb{E}\frac{\bSigma_{e}^{1/2}\Zb_2\bOmega_{e}\Zb_2^\top\bSigma_{e}^{1/2}}{\|\Rb\Zb_1\Cb^\top+\bSigma_{e}^{1/2}\Zb_2\bOmega_{e}^{1/2}\|_F^2},
	\]
	and we aim to provide an upper bound for $\|\Kb_r^4\|$. Using the formula $a^{-1}=b^{-1}-(ab)^{-1}(a-b)$,
	\[
	\begin{split}
		 \Kb_r^4=&\mathbb{E}\frac{\bSigma_{e}^{1/2}\Zb_2\bOmega_{e}\Zb_2^\top\bSigma_{e}^{1/2}}{\|\Rb\Zb_1\Cb^\top\|_F^2+\|\bSigma_{e}^{1/2}\Zb_2\bOmega_{e}^{1/2}\|_F^2}-\mathbb{E}\frac{\bSigma_{e}^{1/2}\Zb_2\bOmega_{e}\Zb_2^\top\bSigma_{e}^{1/2}\times 2\text{tr}(\Rb\Zb_1\Cb^\top\bOmega_{e}^{1/2}\Zb_2\bSigma_{e}^{1/2})}{\|\Rb\Zb_1\Cb^\top+\bSigma_{e}^{1/2}\Zb_2\bOmega_{e}^{1/2}\|_F^2(\|\Rb\Zb_1\Cb^\top\|_F^2+\|\bSigma_{e}^{1/2}\Zb_2\bOmega_{e}^{1/2}\|_F^2)}\\
		:=&\Kb_r^{41}+\Kb_r^{42}.
	\end{split}
	\]
	Note that the error term $\Kb_r^{42}$ satisfies
	\[
	\|\Kb_r^{42}\|\le \mathbb{E}\frac{ 2\text{tr}(\Rb\Zb_1\Cb^\top\bOmega_{e}^{1/2}\Zb_2\bSigma_{e}^{1/2})}{\|\Rb\Zb_1\Cb^\top+\bSigma_{e}^{1/2}\Zb_2\bOmega_{e}^{1/2}\|_F^2}\lesssim \sqrt{\mathbb{E}\|\bu_2\|^{-4}\times \mathbb{E}\text{tr}^2(\Rb\Zb_1\Cb^\top\bOmega_{e}^{1/2}\Zb_2\bSigma_{e}^{1/2})}\le O(p_2^{-1/2}),
	\]
	where $\bu_2$ are defined in Lemma \ref{m2m3}.
	Therefore, by Weyl's theorem, we can write
	\[
	\begin{split}
		\|\Kb_r^4\|\le& \|\Kb_r^{41}\|+O(p_2^{-1/2})\le \mathbb{E}\bigg\|\frac{\bSigma_{e}^{1/2}\Zb_2\bOmega_{e}\Zb_2^\top\bSigma_{e}^{1/2}}{\|\bSigma_{e}^{1/2}\Zb_2\bOmega_{e}^{1/2}\|_F^2}\bigg\|+O(p_2^{-1/2})\\
		\le &\lambda_{\min}(\bSigma_{e})^{-1}\lambda_{\min}(\Omega_{e})^{-1}\lambda_{\max}(\bSigma_{e})\mathbb{E}\frac{\Zb_2\bOmega_{e}\Zb_2^\top}{\|\Zb_2\|_F^2}+O(p_2^{-1/2})\le O(p_2^{-1/2}).
	\end{split}
	\]

	Next, consider the leading term in $\Kb_r$, denoted as
	\[
	 \Kb_r^1:=\mathbb{E}\frac{\Rb\Zb_1\Cb^\top\Cb\Zb_1^\top\Rb^\top}{\|\Rb\Zb_1\Cb^\top+\bSigma_{e}^{1/2}\Zb_2\bOmega_{e}^{1/2}\|_F^2}.
	\]
	Similarly, we can write
	\[
	\begin{split}
		 \Kb_r^1=&\mathbb{E}\frac{\Rb\Zb_1\Cb^\top\Cb\Zb_1^\top\Rb^\top}{\|\Rb\Zb_1\Cb^\top\|_F^2+\|\bSigma_{e}^{1/2}\Zb_2\bOmega_{e}^{1/2}\|_F^2}-\mathbb{E}\frac{\Rb\Zb_1\Cb^\top\Cb\Zb_1^\top\Rb^\top\times 2\text{tr}(\Rb\Zb_1\Cb^\top\bOmega_{e}^{1/2}\Zb_2\bSigma_{e}^{1/2})}{\|\Rb\Zb_1\Cb^\top+\bSigma_{e}^{1/2}\Zb_2\bOmega_{e}^{1/2}\|_F^2(\|\Rb\Zb_1\Cb^\top\|_F^2+\|\bSigma_{e}^{1/2}\Zb_2\bOmega_{e}^{1/2}\|_F^2)}\\
		:=&\Kb_r^{11}+\Kb_r^{12},
	\end{split}
	\]
	and claim that $\|\Kb_r^{12}\|\le O(p_2^{-1/2})$ while $\|\Kb_r^{11}\|\le C$ for some constant $C>0$. Then, by Cauchy-Schwartz inequality, the interaction terms in (\ref{Kb_r decompose}) satisfy
	\[
	 \bigg\|\mathbb{E}\frac{\Rb\Zb_1\Cb^\top\bOmega_{e}^{1/2}\Zb_2^\top\bSigma_{e}^{1/2}}{\|\Rb\Zb_1\Cb^\top+\bSigma_{e}^{1/2}\Zb_2\bOmega_{e}^{1/2}\|_F^2}\bigg\|\le \sqrt{\|\Kb_r^1\|^2\|\Kb_r^4\|^2}=o(1).
	\] Further note that the rank of $\Kb_r^1$ is at most $k_1$. Therefore, it remains to prove that $\lambda_{k_1}(\Kb_r^{11})\ge C^{-1}$ for some constant $C>0$.
	
	Due to the orthogonal invariant property of Gaussian distributions, we assume $\Cb^\top\Cb$ and $\Rb^\top\Rb$ to be diagonal matrices. By Assumptions \ref{strong_factor} and \ref{regular_noise}, we can write
	\[
	\begin{split}
		\lambda_{k_1}(\Kb_r^{11})\ge& \lambda_{k_1}\bigg(\mathbb{E}\frac{\Rb\Zb_1\Cb^\top\Cb\Zb_1^\top\Rb^\top}{c_2^2p_1p_2\|\Zb_1\|_F^2+C_2^2\|\Zb_2\|_F^2}\bigg)\ge \lambda_{k_1}(\Rb^\top\Rb)\Ab_{k_1k_1},\quad \text{with}\\
		\Ab=&\mathbb{E}\frac{\Zb_1\Cb^\top\Cb\Zb_1^\top}{c_2^2p_1p_2\|\Zb_1\|_F^2+C_2^2\|\Zb_2\|_F^2},
	\end{split}
	\]
	where  the second ``$\ge$'' comes from the facts that columns of $\Rb$ are orthogonal and  $\Ab_{ij}=0$ for $i\ne j$. See also the proof of Proposition \ref{pro1}.  Note that $p_1\Ab_{k_1k_1}$ is similar to $M_{jj}$ in the proof of Lemma 3.1, \cite{yu2019robust}. According to their results, we have $\lambda_{k_1}(\Kb_r^{11})\ge C^{-1}$ for some constant $C>0$.
	
	Next step, we consider the sample Kendall's tau matrix and show that $\|\hat\Kb_r-\Kb_r\|=o_p(p_2^{1/2})$.
	Assume $T$ is even and $\bar{T}=T/2$, otherwise we can delete the last observation. For any permutation $\sigma$ of $\{1,\dots,T \}$. Define $\bomega_s^\sigma=\Xb_{2s-1}-\Xb_{2s}/\norm{\Xb_{2s-1}-\Xb_{2s}}_F$ for $s \in \{1,\dots,\bar{T} \}$ and $\hat{\Kb}_r^\sigma=\bar{T}^{-1}\sum_{s=1}^{\bar{T}}\bomega_s^\sigma\bomega_s^{\sigma \top}$. Denote $\mathcal{S}_T$ as the set containing all the permutations of $\{1,\dots,T \}$, then we can get
	$$\sum_{\sigma \in \mathcal{S}_T}\bar{T}\hat{\Kb}_r^\sigma=T\times (T-2)!\times \frac{T(T-1)}{2}\hat{\Kb}_r \Rightarrow \hat{\Kb}_r=\frac{1}{T!}\sum_{\sigma \in \mathcal{S}_T}\hat{\Kb}_r^\sigma.$$
	because
	$$\norm{\hat{\Kb}_r-\Kb_r}=\bigg \Vert\frac{1}{\bar{T}}\sum_{\sigma \in \mathcal{S}_T}(\hat{\Kb}_r^\sigma-\Kb_r)\bigg\Vert \leq \frac{1}{\bar{T}}\sum_{\sigma \in \mathcal{S}_T}\norm{\hat{\Kb}_r^\sigma-\Kb_r}$$ and
	 $$\norm{\hat{\Kb}_r^\sigma-\Kb_r}=\bigg\|\bar{T}^{-1}\sum_{s=1}^{\bar{T}}\bomega_s^\sigma\bomega_s^{\sigma\top}-\EE\bomega_s^\sigma\bomega_s^{\sigma\top}\bigg\|=\bigg\|\bar{T}^{-1}\sum_{s=1}^{\bar{T}}(\bomega_s^\sigma\bomega_s^{\sigma\top}-\EE\bomega_s^\sigma\bomega_s^{\sigma\top})\bigg\|.$$
	
	Define $\text{Vec}(\Zb_t)=\big(\text{Vec}(\Zb_{1t})^\top,\text{Vec}(\Zb_{2t})^\top\big)^\top$ as $i.i.d.$ $(k_1k_2+p_1p_2)$ standard Gaussian random vectors, and $\Zb_{1t}$, $\Zb_{2t}$ are $k_1\times k_2$, $p_1\times p_2$ matrices, respectively. Let
	\[
\begin{split}
		 \Kb_t=&\frac{(\Rb\Zb_{1t}\Cb^\top+\bSigma_{e}^{1/2}\Zb_{2t}\bOmega_{e}^{1/2})(\Rb\Zb_{1t}\Cb^\top+\bSigma_{e}^{1/2}\Zb_{2t}\bOmega_{e}^{1/2})^\top}{\|\Rb\Zb_{1t}\Cb^\top+\bSigma_{e}^{1/2}\Zb_{2t}\bOmega_{e}^{1/2}\|_F^2}\\
		 =&\frac{\Rb\Zb_{1t}\Cb^\top\Cb\Zb_{1t}^\top\Rb^\top}{\|\Rb\Zb_{1t}\Cb^\top+\bSigma_{e}^{1/2}\Zb_{2t}\bOmega_{e}^{1/2}\|_F^2}+\frac{\Rb\Zb_{1t}\Cb^\top\bOmega_{e}^{1/2}\Zb_{2t}^\top\bSigma_{e}^{1/2}}{\|\Rb\Zb_{1t}\Cb^\top+\bSigma_{e}^{1/2}\Zb_{2t}\bOmega_{e}^{1/2}\|_F^2}\\
		 &+\bigg(\frac{\Rb\Zb_{1t}\Cb^\top\bOmega_{e}^{1/2}\Zb_{2t}^\top\bSigma_{e}^{1/2}}{\|\Rb\Zb_{1t}\Cb^\top+\bSigma_{e}^{1/2}\Zb_{2t}\bOmega_{e}^{1/2}\|_F^2}\bigg)^\top+\frac{\bSigma_{e}^{1/2}\Zb_{2t}\bOmega_{e}\Zb_{2t}^\top\bSigma_{e}^{1/2}}{\|\Rb\Zb_{1t}\Cb^\top+\bSigma_{e}^{1/2}\Zb_{2t}\bOmega_{e}^{1/2}\|_F^2}\\
		:=&\Kb_t^1+\Kb_t^2+\Kb_t^3+\Kb_t^4.
\end{split}
	\]
	Then, the large sample properties of $\norm{\hat{\Kb}_r-\Kb_r}$ is similar to those of $\|\bar T^{-1}\sum_{t=1}^{\bar T}\Kb_t-\mathbb{E}\Kb_1\|$. Similarly to $\|\Kb_r\|$, one can verify that
	\[
	\mathbb{E}\|\Kb_t^1\|\le C,\quad \mathbb{E}\|\Kb_t^j\|^2\le O(p_2^{-1}),j=2,3,4.
	\]
	Therefore,
	according to matrix concentration inequality (Theorem 5.4.1 in \cite{vershynin2018high}), we have
	$$\PP\bigg(\bigg\Vert\sum_{t=1}^{\bar{T}}(\Kb_t^1- \mathbb{E}\Kb_t^1)\bigg\Vert\geq s\bigg)\leq k_1\exp\Big\{-\frac{s^2}{cT}\Big\}\iff \PP\bigg(\bigg\Vert\frac{1}{\bar{T}}\sum_{t=1}^{\bar{T}}(\Kb_t^1- \mathbb{E}\Kb_t^1)\bigg\Vert\geq T^{-1/2+\epsilon}\bigg)\leq k_1\exp\Big\{-\frac{T^{2\epsilon}}{c}\Big\},$$
	for any $\epsilon>0$.
	Hence, when we have $\|\bar T^{-1}\sum_{t=1}^{\bar{T}}(\Kb_t^1- \mathbb{E}\Kb_t^1)\|=O_p(T^{-1/2+\epsilon})$ for any $\epsilon>0$. On the other hand, for $j=,2,3,4$,
	\[
	\PP\bigg(\bigg\Vert\sum_{t=1}^{\bar{T}}(\Kb_t^j- \mathbb{E}\Kb_t^j)\bigg\Vert\geq s\bigg)\leq p_1\exp\Big\{-\frac{p_2s^2}{cT}\Big\}\iff \PP\bigg(\bigg\Vert\frac{1}{\bar{T}}\sum_{t=1}^{\bar{T}}(\Kb_t^j- \mathbb{E}\Kb_t^j)\bigg \Vert\geq p_2^{-1/2}\bigg)\leq p_1\exp\Big\{-\frac{T}{c}\Big\}.
	\]
	That is, $\|\bar T^{-1}\sum_{t=1}^{\bar{T}}(\Kb_t^j- \mathbb{E}\Kb_t^j)\|=O_p(p_2^{-1/2})$ as long as $\log p_1=o(T)$. As a result,
	\[
	\bigg\Vert \frac{1}{\bar T}\sum_{t=1}^{\bar{T}}(\Kb_t- \mathbb{E}\Kb_t)\bigg\Vert=O_p(T^{-1/2+\epsilon}+p_2^{-1/2}),
	\]
	for any $\epsilon>0$, which concludes the lemma.
\end{proof}

\clearpage

\section{Additional Simulation Results}
\begin{table}[!h]
	 	\caption{Averaged estimation errors and standard errors of $\mathcal{D} (\hat{\mathbf{C}}, \mathbf{C})$ for Settings A under joint Normal distribution and joint $t$ distribution over 500 replications.}
	 	 \label{tab:main5}\renewcommand{\arraystretch}{1} \centering
	 	\selectfont
	 		 \scalebox{0.9}{\begin{tabular*}{18cm}{cccccccccccccccccccc}
	 				\toprule[2pt]
	 			    &\multirow{1}{*}{Evaluation}   &\multirow{1}{*}{$T$}
	 				&\multirow{1}{*}{$p_1$}        &\multirow{1}{*}{$p_2$}
                    &\multirow{1}{*}{MRTS}          &\multirow{1}{*}{RMFA}
                    &\multirow{1}{*}{$\alpha$-PCA} &\multirow{1}{*}{PE}
                    &\multirow{1}{*}{MPCA$_{F}$}   \cr
	 			    \cmidrule(lr){8-11} \\
	 				\midrule[1pt]
	 				
	 				 &&&& \multicolumn{5}{c}{\multirow{1}{*}{\textbf{Normal Distribution}}}\\
	 \cmidrule(lr){3-12}
	 				&$\mathcal{D} (\hat{\mathbf{C}}, \mathbf{C})$ &20 &20  &20 & 0.1192(0.0331) &0.0914(0.0151) &0.1128(0.0307) &0.0921(0.0157) &0.1400(0.0206)\\
	 				                                           &  &   &50  &50  &0.0623(0.0074)  &0.0569(0.0060) &0.0598(0.0070) &0.0568(0.0060) &0.0986(0.0093)  \\

   \cmidrule(lr){3-12}
	 				&$\mathcal{D} (\hat{\mathbf{C}}, \mathbf{C})$ &50 &20  &20  & 0.0807(0.0228) &0.0573(0.0099) &0.0771(0.0221) &0.0575(0.0099) &0.0866(0.0132)  \\
	 				                                           &  &   &50  &50  & 0.0389(0.0043) &0.0352(0.0033) & 0.0378(0.0041)&0.0351(0.0032) & 0.0610(0.0053) \\
	 				
	 				\cmidrule(lr){3-12}
	 				&$\mathcal{D} (\hat{\mathbf{C}}, \mathbf{C})$ &100 &20  &20 &0.0661(0.0256)& 0.0405(0.0067)& 0.0627(0.0245)& 0.0406(0.0069£©&
 0.0607(0.0088)  \\
	 				                                           &  &   &50  &50  &0.0279(0.0033)& 0.0246(0.0021)& 0.0271(0.0032)& 0.0246(0.0021)&
 0.0429(0.0033)   \\
                     \hline
	 				 &&&& \multicolumn{5}{c}{\multirow{1}{*}{\textbf{$t_1$ Distribution}}}\\
  \cmidrule(lr){3-12}
	 				&$\mathcal{D} (\hat{\mathbf{C}}, \mathbf{C})$ &20 &20  &20  &0.1294(0.0370)  &0.2298(0.1444) &0.4353(0.1753) &0.4246(0.1900) &0.1400(0.0221)\\
	 				                                           &  &   &50  &50  &0.0689(0.0084)  &0.1327(0.0976) &0.2845(0.1782) &0.2829(0.1872) &0.0989(0.0095)  \\

   \cmidrule(lr){3-12}
	 				&$\mathcal{D} (\hat{\mathbf{C}}, \mathbf{C})$ &50 &20  &20  & 0.0884(0.0256) &0.1884(0.1431) &0.4187(0.1871) &0.4069(0.2023) &0.0865(0.0121)  \\
	 				                                           &  &   &50  &50  &0.0428(0.0044)  &0.1070(0.0811) &0.2694(0.1771) &0.2641(0.1824) &0.0610(0.0049)  \\
	 				
	 				\cmidrule(lr){3-12}
	 				&$\mathcal{D} (\hat{\mathbf{C}}, \mathbf{C})$ &100 &20  &20 & 0.0685(0.0221)& 0.1767(0.1485) &0.4304(0.1918) &0.4232(0.2068)&
 0.0606(0.0085)  \\
	 				                                           &  &   &50  &50  &0.0308(0.0032)& 0.0939(0.0825)& 0.2678(0.1740)& 0.2636(0.1814)&
 0.0431(0.0033)   \\
                     \hline
	 				&&&& \multicolumn{5}{c}{\multirow{1}{*}{\textbf{$t_2$ Distribution}}}\\
	 				\cmidrule(lr){3-12}
	 				&$\mathcal{D} (\hat{\mathbf{C}}, \mathbf{C})$ &20 &20  &20  &0.1263(0.0368)  &0.1296(0.0549) &0.2655(0.1521) &0.2445(0.1603) &0.1402(0.0210)\\
	 				                                           &  &   &50  &50  &0.0667(0.0086)  &0.0783(0.0173) &0.1484(0.1011) &0.1416(0.1015) & 0.0996(0.0099) \\

   \cmidrule(lr){3-12}
	 				&$\mathcal{D} (\hat{\mathbf{C}}, \mathbf{C})$ &50 &20  &20  &0.0843(0.0236)  &0.0853(0.0255) &0.2153(0.1402) &0.1946(0.1464) &0.0861(0.0121)  \\
	 				                                           &  &   &50  &50  & 0.0415(0.0045) &0.0525(0.0171) &0.1193(0.0983) &0.1148(0.0992) &0.0613(0.0047)  \\
	 				\cmidrule(lr){3-12}
	 				&$\mathcal{D} (\hat{\mathbf{C}}, \mathbf{C})$ &100 &20  &20 & 0.0689(0.0270)& 0.0629(0.0209)& 0.1925(0.1338)& 0.1723(0.1399)&
 0.0610(0.0091)  \\
	 				                                           &  &   &50  &50  &0.0300(0.0034)& 0.0400(0.0240)& 0.1010(0.0862)& 0.0961(0.0856)&
 0.0429(0.0033)   \\
	 				
                     \hline
                     &&&& \multicolumn{5}{c}{\multirow{2}{*}{\textbf{$t_3$ Distribution}}}\\
                     \cmidrule(lr){3-12}
	 				&$\mathcal{D} (\hat{\mathbf{C}}, \mathbf{C})$ &20 &20  &20  & 0.1248(0.0329) &0.1111(0.0251) &0.1947(0.1000) &0.1669(0.0938) &0.1409(0.0220)\\
	 				                                           &  &   &50  &50  &0.0657(0.0079)  &0.0674(0.0100) &0.0974(0.0400) &0.0924(0.0376) &0.0988(0.0096)  \\

   \cmidrule(lr){3-12}
	 				&$\mathcal{D} (\hat{\mathbf{C}}, \mathbf{C})$ &50 &20  &20  & 0.0844(0.0233) &0.0703(0.0154) &0.1406(0.0788) &0.1192(0.0789) &0.0869(0.0125)  \\
	 				                                           &  &   &50  &50  & 0.0411(0.0045) &0.0438(0.0061) &0.0756(0.0488) &0.0715(0.0481) & 0.0616(0.0050) \\
	 				\cmidrule(lr){3-12}
	 				&$\mathcal{D} (\hat{\mathbf{C}}, \mathbf{C})$ &100 &20  &20 & 0.0677(0.0211)& 0.0508(0.0104)& 0.1149(0.0694)& 0.0944(0.0637)&
 0.0607(0.0079)  \\
	 				                                           &  &   &50  &50  & 0.0297(0.0034)& 0.0311(0.0042)& 0.0604(0.0410)& 0.0571(0.0432)&
 0.0429(0.0030)  \\
	 				\bottomrule[2pt]
 				\end{tabular*}}
     \end{table}
\begin{table}[!h]
	 	\caption{Averaged estimation errors and standard errors of $\mathcal{D} (\hat{\mathbf{C}}, \mathbf{C})$ for Scenario B under joint Normal distribution and joint $t$ distribution over 500 replications.}
	 	 \label{tab:main6}\renewcommand{\arraystretch}{1} \centering
	 	\selectfont
	 		 \scalebox{0.9}{\begin{tabular*}{18cm}{cccccccccccccccccccc}
	 				\toprule[2pt]
	 			    &\multirow{1}{*}{Evaluation}   &\multirow{1}{*}{$T$}
	 				&\multirow{1}{*}{$p_1$}        &\multirow{1}{*}{$p_2$}
                    &\multirow{1}{*}{MRTS}          &\multirow{1}{*}{RMFA}
                    &\multirow{1}{*}{$\alpha$-PCA} &\multirow{1}{*}{PE}
                    &\multirow{1}{*}{MPCA$_{F}$}   \cr
	 			    \cmidrule(lr){8-11} \\
	 				\midrule[1pt]
	 				
	 				 &&&& \multicolumn{5}{c}{\multirow{1}{*}{\textbf{Normal Distribution}}}\\
	 \cmidrule(lr){3-12}
	 				&$\mathcal{D} (\hat{\mathbf{C}}, \mathbf{C})$ &20 &20  &20 &0.1202(0.0331)  &0.0923(0.0151) &0.1139(0.0305) &0.0930(0.0157) &0.1410(0.0209)\\
	 				                                           &  &   &50  &50  &0.0627(0.0074)  &0.0575(0.0061) &0.0604(0.0071) &0.0574(0.0061) &0.0993(0.0094)  \\

   \cmidrule(lr){3-12}
	 				&$\mathcal{D} (\hat{\mathbf{C}}, \mathbf{C})$ &50 &20  &20  &0.0813(0.0231)  &0.0578(0.0100) &0.0778(0.0224) &0.0580(0.0100) &0.0872(0.0133)  \\
	 				                                           &  &   &50  &50  & 0.0392(0.0044) &0.0356(0.0033) &0.0382(0.0042) &0.0355(0.0033) &0.0613(0.0052)  \\
	 				
	 				\cmidrule(lr){3-12}
	 				&$\mathcal{D} (\hat{\mathbf{C}}, \mathbf{C})$ &100 &20  &20 & 0.0663(0.0255)& 0.0408(0.0067)& 0.0629(0.0245)&
 0.0409(0.0069)& 0.0609(0.0088)  \\
	 				                                           &  &   &50  &50  &0.0282(0.0033)& 0.0249(0.0021)& 0.0274(0.0032)&
  0.0248(0.0021)& 0.0430(0.0034)   \\
                     \hline
	 				 &&&& \multicolumn{5}{c}{\multirow{1}{*}{\textbf{$t_1$ Distribution}}}\\
  \cmidrule(lr){3-12}
	 				&$\mathcal{D} (\hat{\mathbf{C}}, \mathbf{C})$ &20 &20  &20  &0.1442(0.0410)  &0.2400(0.1498) &0.4359(0.1751) &0.4249(0.1894) &0.1516(0.0259)\\
	 				                                           &  &   &50  &50  &0.0771(0.0114)  &0.1385(0.1010) &0.2841(0.1776) &0.2827(0.1866) &0.1059(0.0121)  \\

   \cmidrule(lr){3-12}
	 				&$\mathcal{D} (\hat{\mathbf{C}}, \mathbf{C})$ &50 &20  &20  & 0.0976(0.0273) &0.1960(0.1450) &0.4193(0.1871) &0.4075(0.2025) &0.0938(0.0136)  \\
	 				                                           &  &   &50  &50  &0.0484(0.0061)  &0.1125(0.0870) &0.2696(0.1770) &0.2646(0.1826) &0.0651(0.0058)  \\
	 				
	 				\cmidrule(lr){3-12}
	 				&$\mathcal{D} (\hat{\mathbf{C}}, \mathbf{C})$ &100 &20  &20 & 0.0751(0.0236)& 0.1845(0.1517)& 0.4303(0.1917)&
 0.4234(0.2067)& 0.0655(0.0094)  \\
	 				                                           &  &   &50  &50  &0.0345(0.0036)& 0.0989(0.0888)& 0.2678(0.1740)&
 0.2639(0.1817)& 0.0459(0.0036)   \\
                     \hline
	 				&&&& \multicolumn{5}{c}{\multirow{1}{*}{\textbf{$t_2$ Distribution}}}\\
	 				\cmidrule(lr){3-12}
	 				&$\mathcal{D} (\hat{\mathbf{C}}, \mathbf{C})$ &20 &20  &20  &0.1303(0.0378)  &0.1333(0.0610) &0.2662(0.1520) &0.2449(0.1606) &0.1430(0.0220)\\
	 				                                           &  &   &50  &50  &0.0690(0.0092)  &0.0801(0.0189) &0.1490(0.1023) &0.1418(0.1013) & 0.1009(0.0102) \\

   \cmidrule(lr){3-12}
	 				&$\mathcal{D} (\hat{\mathbf{C}}, \mathbf{C})$ &50 &20  &20  &0.0864(0.0235)  &0.0877(0.0275) &0.2157(0.1404) &0.1947(0.1461) &0.0881(0.0123)  \\
	 				                                           &  &   &50  &50  &0.0430(0.0047)  &0.0540(0.0188) &0.1196(0.0986) &0.1151(0.0998) &0.0623(0.0049) \\
	 				
	 				\cmidrule(lr){3-12}
	 				&$\mathcal{D} (\hat{\mathbf{C}}, \mathbf{C})$ &100 &20  &20 &0.0704(0.0271)& 0.0649(0.0223)&0.1926(0.1336)&
  0.1722(0.1397)& 0.0623(0.0091)   \\
	 				                                           &  &   &50  &50  &0.0311(0.0035)& 0.0413(0.0257)& 0.1012(0.0863)&
  0.0962(0.0856)& 0.0437(0.0033)   \\
                     \hline
                     &&&& \multicolumn{5}{c}{\multirow{2}{*}{\textbf{$t_3$ Distribution}}}\\
                     \cmidrule(lr){3-12}
	 				&$\mathcal{D} (\hat{\mathbf{C}}, \mathbf{C})$ &20 &20  &20  &0.1272(0.0351)  &0.1129(0.0262) &0.1958(0.1008) &0.1679(0.0943) &0.1430(0.0224)  \\
	 				                                           &  &   &50  &50  & 0.0668(0.0082) &0.0683(0.0105) &0.0978(0.0400) &0.0927(0.0374) &0.0998(0.0099)  \\

   \cmidrule(lr){3-12}
	 				&$\mathcal{D} (\hat{\mathbf{C}}, \mathbf{C})$ &50 &20  &20  &0.0857(0.0235)  &0.0715(0.0160) &0.1409(0.0791) &0.1193(0.0788) & 0.0879(0.0126) \\
	 				                                           &  &   &50  &50  &0.0419(0.0046)  &0.0446(0.0065) &0.0760(0.0502) &0.0718(0.0487) &0.0623(0.0049)  \\
	 				\cmidrule(lr){3-12}
	 				&$\mathcal{D} (\hat{\mathbf{C}}, \mathbf{C})$ &100 &20  &20 & 0.0685(0.0213)& 0.0518(0.0109)& 0.1152(0.0693)&
 0.0945(0.0636)& 0.0615(0.0079)  \\
	 				                                           &  &   &50  &50  &0.0303(0.0034)& 0.0317(0.0044)& 0.0605(0.0411)&
  0.0572(0.0432)& 0.0433(0.0030)  \\
	 				\bottomrule[2pt]
 				\end{tabular*}}
     \end{table}

\begin{table}[!h]
	 	\caption{Mean squared error and its standard under Scenario B over 500 replications.}
	 	 \label{tab:main7}\renewcommand{\arraystretch}{1} \centering
	 	\selectfont
	 		 \scalebox{1}{\begin{tabular*}{17cm}{ccccccccccccccccccccccccccccccccc}
	 				\toprule[2pt]
	 				&\multirow{2}{*}{Distribution}  &\multirow{2}{*}{$p_1$}
	 				&\multirow{2}{*}{MRTS}  &\multirow{2}{*}{RMFA}
                    &\multirow{2}{*}{$\alpha$-PCA}  &\multirow{2}{*}{PE}
                     &\multirow{2}{*}{MPCA$_{F}$}   \cr
	 				\cmidrule(lr){8-15} \\
	 				\midrule[1pt]
	 				 \multicolumn{8}{c}{\multirow{1}{*}{$T=20,p_2=p_1 $}}\\
	 				\cmidrule(lr){2-17}
	 				&Normal  &20   &0.0461(0.0066) &0.0376(0.0037) &0.0438(0.0061) &0.0378(0.0038) &0.0567(0.0068)  \\
	 				      &  &50  &0.0107(0.0009) &0.0096(0.0007)& 0.0101(0.0008)& 0.0096(0.0007( & 0.0216(0.0026) \\
	 				
	 				\cmidrule(lr){2-17}
	 				&$t_3$  &20   &0.1446(0.1434)& 0.1542(0.2754)& 0.3455(0.9260)& 0.3291(0.9691)& 0.1675(0.1603)\\
	 				&  &50    &0.0319(0.0229)& 0.0370(0.0392)& 0.0730(0.1243)& 0.0716(0.1375)& 0.0595(0.0405) \\
	 			    \hline
	 				\multicolumn{8}{c}{\multirow{1}{*}{$T=50,p_2=p_1 $}}\\
	 				\cmidrule(lr){2-17}
	 				&Normal  &20   &0.0340(0.0043)& 0.0285(0.0024)& 0.0330(0.0040)& 0.0286(0.0025)& 0.0358(0.0033)   \\
	 				      &  &50  &0.0065(0.0004)& 0.0060(0.0003)& 0.0063(0.0004)& 0.0060(0.0003)& 0.0107(0.0008) \\
	 				
	 				\cmidrule(lr){2-17}
	 				&$t_3$  &20   &0.1056(0.0829)& 0.1033(0.1173)& 0.2794(0.9220)& 0.2676(0.9615)& 0.1073(0.0838)\\
	 				&  &50    &0.0205(0.0155)& 0.0251(0.0451)& 0.0648(0.2269)& 0.0628(0.2375)& 0.0319(0.0226)  \\
	 			\hline
	 			     \multicolumn{8}{c}{\multirow{1}{*}{$T=100,p_2=p_1 $}}\\
                     \cmidrule(lr){2-17}
                     &Normal  &20   &0.0301(0.0035)& 0.0254(0.0017)& 0.0293(0.0033)& 0.0254(0.0017)& 0.0290(0.0019)  \\
	 				      &  &50  &0.0051(0.0003)& 0.0048(0.0002)& 0.0050((0.0003)& 0.0048(0.0002)& 0.0071(0.0004)  \\
	 				
	 				\cmidrule(lr){2-17}
	 				&$t_3$  &20   &0.0905(0.0512)& 0.0868(0.0998)& 0.1968(0.6717)& 0.1837(0.7117)& 0.0862(0.0502) \\
	 				&  &50    & 0.0163(0.0123)& 0.0188(0.0263)& 0.0516(0.1748)& 0.0500(0.1813)& 0.0219(0.0153) \\
	 				\bottomrule[2pt]
	 		\end{tabular*}}
	 \end{table}

\begin{table}[!h]
	 	\caption{The frequencies of exact estimation and underestimation of the numbers of factors under Scenario B over 500 replications. }
	 	 \label{tab:main8}\renewcommand{\arraystretch}{1} \centering
	 	\selectfont
	 		 \scalebox{0.8}{\begin{tabular*}{20cm}{ccccccccccccccccccccccccccccccccc}
	 				\toprule[2pt]
	 		   		&\multirow{2}{*}{Distribution}  &\multirow{2}{*}{$p_1$}
                     &\multirow{2}{*}{MKER} & \multirow{2}{*}{Rit-ER}
	 				&\multirow{2}{*}{IterER}  &\multirow{2}{*}{$\alpha$-PCA-ER}
                    &\multirow{2}{*}{iTIP-IC}  &\multirow{2}{*}{iTIP-EC}
                     &\multirow{2}{*}{TCorTh}  \cr
	 				\cmidrule(lr){8-15} \\
	 				\midrule[1pt]
	 				 \multicolumn{10}{c}{\multirow{1}{*}{$T=20,p_2=p_1 $}}\\
	 				\cmidrule(lr){2-17}
	 				&Normal  &20   &0.664(0.046)&0.990(0.000) &0.990(0.000)&0.594(0.076)&0.000(1.000) &0.068(0.454)&0.668(0.050)   \\
	 				      &  &50  &1.000(0.000)&1.000(0.000) &1.000(0.000)&1.000(0.000)&0.000(1.000)&0.182(0.198) &1.000(0.000)   \\

	 				\cmidrule(lr){2-17}
	 				&$t_1$  &20   &0.568(0.09)&0.312(0.264)&0.220(0.484)&0.064(0.776) &0.082(0.288)&0.036(0.678)&0.278(0.378) \\
	 				&  &50    &0.992(0.000)&0.664(0.098) &0.534(0.306)&0.296(0.544) &0.114(0.388)&0.150(0.450)&0.700(0.162)   \\
	 				
	 				\cmidrule(lr){2-17}
                    &$t_2$  &20   &0.614(0.048)&0.702(0.072) &0.630(0.174)&0.224(0.470)&0.024(0.926) &0.094(0.570)&0.398(0.224) \\
	 				&  &50    &0.998(0.000)&0.908(0.002) &0.880(0.066)&0.682(0.166)&0.018(0.952)&0.164(0.338)&0.920(0.028)  \\
	 				\cmidrule(lr){2-17}
	 				&$t_3$  &20   &0.636(0.050)&0.866(0.018) &0.806(0.066)&0.364(0.286) &0.004(0.996)&0.076(0.494)&0.486(0.152) \\
	 				&  &50    &0.998(0.000)&0.990(0.000) &0.978(0.014)&0.882(0.048) &0.002(0.996)&0.152(0.252)&0.990(0.000)  \\
	 			    \hline
	 				\multicolumn{10}{c}{\multirow{1}{*}{$T=50,p_2=p_1 $}}\\
	 				\cmidrule(lr){2-17}
	 				&Normal  &20   &0.772(0.010)&1.000(0.000) &1.000(0.000)&0.732(0.028)&0.000(1.000) &0.124(0.374)&0.954(0.000)   \\
	 				      &  &50  &1.000(0.000)&1.000(0.000) &1.000(0.000)&1.000(0.000)&0.000(1.000)&0.234(0.130) &1.000(0.000)  \\

	 				\cmidrule(lr){2-17}
	 				&$t_1$  &20   &0.708(0.022)&0.424(0.194) &0.292(0.468)&0.110(0.738) &0.072(0.102)&0.082(0.642)&0.558(0.058) \\
	 				&  &50    &1.000(0.000)&0.686(0.062) &0.558(0.294)&0.354(0.492) &0.164(0.128)&0.258(0.356)&0.696(0.014)  \\
	 				
	 				\cmidrule(lr){2-17}
                    &$t_2$  &20   &0.802(0.02)&0.834(0.026) &0.736(0.126)&0.316(0.338) &0.034(0.878)&0.122(0.458)&0.790(0.014) \\
	 				&  &50    &1.000(0.000)&0.936(0.004) &0.912(0.056)&0.804(0.136) &0.018(0.944)&0.258(0.264)&0.930(0.004)   \\
	 				\cmidrule(lr){2-17}
	 				&$t_3$  &20   &0.758(0.022)&0.952(0.000) &0.908(0.034)&0.522(0.168) &0.002(0.994)&0.114(0.476)&0.878(0.012) \\
	 				&  &50    &1.000(0.000) &0.990(0.000) &0.988(0.006)&0.956(0.030)&0.000(1.000) &0.236(0.238)&0.992(0.000)   \\
	 			 \hline
	 				\multicolumn{10}{c}{\multirow{1}{*}{$T=100,p_2=p_1 $}}\\
	 				\cmidrule(lr){2-17}
	 				&Normal  &20   &0.824(0.010)&1.000(0.000)&1.000(0.000)&0.786(0.016)&0.000(1.000)&0.226(0.298)&0.994(0.000)  \\
	 				      &  &50  & 1.000(0.000)&1.000(0.000)&1.000(0.000)&1.000(0.000)&0.000(1.000)&0.374(0.084)&1.000(0.000) \\

	 				\cmidrule(lr){2-17}
	 				&$t_1$  &20   &0.790(0.012)&0.394(0.218)&0.280(0.440)&0.100(0.722)&0.030(0.004)&0.168(0.532)&0.458(0.016)\\
	 				&  &50    &1.000(0.000)&0.702(0.052)&0.568(0.308)&0.350(0.512)&0.134(0.020)&0.364(0.280)&0.466(0.002) \\
	 				
	 				\cmidrule(lr){2-17}
                    &$t_2$  &20   &0.802(0.012)&0.870(0.016)&0.780(0.110)&0.406(0.300)&0.038(0.860)&0.216(0.384)&0.870(0.002) \\
	 				&  &50    &1.000(0.000)&0.962(0.002)&0.938(0.032)&0.854(0.096)&0.038(0.896)&0.424(0.178)&0.900(0.000) \\
	 				\cmidrule(lr){2-17}
	 				&$t_3$  &20   &0.820(0.006)&0.966(0.000)&0.952(0.014)&0.636(0.080)&0.000(0.996)&0.220(0.370)&0.960(0.000)\\
	 				&  &50    &1.000(0.000)&0.992(0.000)&0.992(0.006)&0.968(0.016)&0.000(1.000)&0.412(0.120)&0.980(0.000) \\
	 				\bottomrule[2pt]
	 		\end{tabular*}}
	 \end{table}

\end{document}